\newcommand{\eat}[1]{}
\newcommand{\poly}{\mathrm{poly}}
\newcommand{\eps}{\epsilon}
\newcommand{\la}{\leftarrow}
\newcommand{\be}{\begin{enumerate}}
\newcommand{\ee}{\end{enumerate}}
\newcommand{\bi}{\begin{itemize}}
\newcommand{\ei}{\end{itemize}}
\newcommand{\beq}{\begin{equation}}
\newcommand{\eeq}{\end{equation}}
\newcommand{\bp}{\begin{proof}}
\newcommand{\ep}{\end{proof}}
\newcommand{\bcor}{\begin{cor}}
\newcommand{\ecor}{\end{cor}}
\newcommand{\bthm}{\begin{thm}}
\newcommand{\ethm}{\end{thm}}
\newcommand{\blmm}{\begin{lmm}}
\newcommand{\elmm}{\end{lmm}}
\newcommand{\bdefn}{\begin{defn}}
\newcommand{\edefn}{\end{defn}}
\newcommand{\bprop}{\begin{prop}}
\newcommand{\eprop}{\end{prop}}
\newcommand{\bconj}{\begin{conj}}
\newcommand{\econj}{\end{conj}}
\newcommand{\bopm}{\begin{opm}}
\newcommand{\eopm}{\end{opm}}
\newcommand{\brmk}{\begin{rmk}}
\newcommand{\ermk}{\end{rmk}}
\newcommand{\norm}[1]{\lVert{#1}\rVert}
\newcommand{\suchthat}{\ | \ }
\newcommand{\mv}[1]{\mathbf{#1}}
\theoremstyle{plain}                   
\newtheorem{thm}{Theorem}[section]
\newtheorem{lmm}[thm]{Lemma}
\newtheorem{prop}[thm]{Proposition}
\newtheorem{cor}[thm]{Corollary}
\theoremstyle{definition}              
\newtheorem{defn}[thm]{Definition}
\newtheorem{opm}[thm]{Open Problem}
\newtheorem{conj}[thm]{Conjecture}
\newtheorem{example}[thm]{Example}
\newtheorem{rmk}[thm]{Remark}
\newcommand{\bbox}{
\begin{center}
\begin{tabular}{|c|}
\hline
}
\newcommand{\ebox}{
\\
\hline
\end{tabular}
\end{center}
}
\newlength{\toppush}
\def\subjnum{CSE 713}
\def\subjname{Probabilistically Checkable Proofs and Inapproximability}
\def\doheading#1#2#3{\vfill\eject\vspace*{-\toppush}%
  \vbox{\hbox to\textwidth{{\bf}
  \subjnum: \subjname
  \hfil Lecturer: Hung Q. Ngo}%
  \hbox to\textwidth{{\bf} SUNY at Buffalo, Fall 2004\hfil#3\strut}%
  \hrule}
}
\algrenewcommand\algorithmicrequire{\textbf{Input:}}
\algrenewcommand\algorithmicensure{\textbf{Output:}}
\algrenewcommand\algorithmicwhile{\textbf{While}}
\algrenewcommand\algorithmicfor{\textbf{For}}
\algrenewcommand\algorithmicreturn{\textbf{Return}}
\algrenewcommand\algorithmicif{\textbf{If}}
\newcommand{\algo}{\mathcal{A}}
\def\compactify{\itemsep=0pt \topsep=0pt \partopsep=0pt \parsep=0pt}
\let\latexusecounter=\usecounter
\def\compactifytwo{\itemsep=-1pt \topsep=-1pt \partopsep=-2pt \parsep=-1pt
\labelwidth=-2pt \leftmargin=-10pt}
\newcommand{\set}[1]{\left\{ #1 \right\}}
\newcommand{\setof}[2]{\left\{ #1 \mid #2 \right\}}
\newcommand{\dlm}{\textsf{DLM}\xspace}
\newcommand{\nprr}{\textsf{NPRR}\xspace}
\newcommand{\lb}{\textsf{LFTJ}\xspace}
\newcommand{\vc}{\mathbf{c}}
\newcommand{\arr}{\textsc{SortedList}}
\newcommand{\iarr}{\textsc{IntervalList}}
\newcommand{\fnd}{\textsc{Find}}
\newcommand{\nxt}{\textsc{Next}}
\newcommand{\del}{\textsc{Delete}}
\newcommand{\deli}{\textsc{DeleteInterval}}
\newcommand{\ins}{\textsc{insert}}
\newcommand{\insconst}{\textsc{InsConstraint}}
\newcommand{\inst}{\textsc{InsertTree}}
\newcommand{\chld}{\textsc{equalities}}
\newcommand{\intv}{\textsc{intervals}}
\newcommand{\ctree}{\textsc{ConstraintTree}}
\newcommand{\covers}{\textsc{covers}}
\newcommand{\nextvalue}{\textsc{nextChainVal}}
\newcommand{\nextvaluefromshadow}{\textsc{nextShadowChainVal}}
\newcommand{\getpp}{\textsc{getProbePoint}}
\newcommand{\NULL}{\textsc{null}}
\newcommand{\fndlub}{\textsc{FindLub}}
\newcommand{\inter}{\textsc{Intersect}}
\newcommand{\nxtu}{\textsc{NextUnion}}
\newcommand{\nxtsbl}{\textsc{NextSibling}}
\newcommand{\sbl}{\textsc{Sibling}}
\newcommand{\getcache}{\textsc{GetCache}}
\newcommand{\cache}{\textsc{Cache}}
\newcommand{\outspace}{\mathcal O}
\newcommand{\atoms}{\mathrm{atoms}}
\newcommand{\arity}{\text{arity}}
\newcommand{\ms}{\text{Minesweeper}\xspace}
\newcommand{\cds}{\text{CDS}\xspace}
\newcommand{\cert}{\mathcal C}
\newcommand{\btree}{\text{Search-Tree}\xspace}
\newcommand{\findgap}{\textsc{FindGap}}
\newcommand{\tw}{\text{tw}}
\newcommand{\calH}{\mathcal H}
\newcommand{\calA}{\mathcal A}
\newcommand{\calP}{\mathcal P}
\newcommand{\calV}{\mathcal V}
\newcommand{\calE}{\mathcal E}
\renewenvironment{quote}{%
  \list{}{%
    \leftmargin0.5cm   
    \rightmargin\leftmargin
  }
  \item\relax
}
{\endlist}
\begin{document}
\title{Beyond Worst-case Analysis for Joins with Minesweeper\footnote{This is
the full version of our PODS'2014 paper.}}

\author{Hung Q. Ngo\\
Computer Science and Engineering\\
University at Buffalo, SUNY
\and Dung T. Nguyen\\
Computer Science and Engineering\\
University at Buffalo, SUNY
\and 
Christopher R\'e\\
Computer Science\\
Stanford University
\and Atri Rudra\\
Computer Science and Engineering\\
University at Buffalo, SUNY\\
}

\date{}

\maketitle

\begin{abstract}
We describe a new algorithm, \ms, that is able to satisfy stronger
runtime guarantees than previous join algorithms (colloquially,
`beyond worst-case guarantees') for data in indexed search trees. Our
first contribution is developing a framework to measure this stronger
notion of complexity, which we call {\it certificate complexity}, that
extends notions of Barbay et al. and Demaine et al.; a certificate is
a set of propositional formulae that certifies that the output is
correct. This notion captures a natural class of join algorithms. In
addition, the certificate allows us to define a strictly stronger
notion of runtime complexity than traditional worst-case
guarantees. Our second contribution is to develop a dichotomy theorem
for the certificate-based notion of complexity. Roughly, we show that
\ms evaluates $\beta$-acyclic queries in time linear in the
certificate plus the output size, while for any $\beta$-cyclic query
there is some instance that takes superlinear time in
the certificate (and for which the output is no larger than the
certificate size). We also extend our certificate-complexity analysis
to queries with bounded treewidth and the triangle query.
\end{abstract}

\section{Introduction}
\label{section:intro}
Efficiently evaluating relational joins is one of the most
well-studied problems in relational database theory and
practice. Joins are a key component of problems in constraint
satisfaction, artificial intelligence, motif finding, geometry, and
others. This paper presents a new join algorithm, called \ms, for
joining relations that are stored in order data structures, such as
B-trees. Under some mild technical assumptions, \ms is able to achieve
stronger runtime guarantees than previous join algorithms.

The \ms algorithm is based on a simple idea. When data are stored in
an index, successive tuples indicate {\it gaps}, i.e., regions in the
output space of the join where no possible output tuples exist.  \ms
maintains gaps that it discovers during execution and infers where to
look next. In turn, these gaps may indicate that a large number of
tuples in the base relations cannot contribute to the output of the
join, so \ms can efficiently skip over such tuples without reading
them. By using an appropriate data structure to store the gaps, \ms
guarantees that we can find at least one point in the output space
that needs to be explored, given the gaps so far.  The key technical
challenges are the design of this data structure, called the {\it
  constraint data structure}, and the analysis of the join algorithm
under a more stringent runtime complexity measure.

To measure our stronger notion of runtime, we introduce the notion of
a {\em certificate} for an instance of a join problem: essentially, a
certificate is a set of comparisons between elements of the input
relations that certify that the join output is exactly as claimed. We
use the certificate as a measure of the difficulty of a particular
instance of a join problem. That is, our goal is to find algorithms
whose running times can be bounded by some function of the {\em
smallest certificate size} for a particular input instance. Our
notion has two key properties:

\begin{itemize}
\item {\it Certificate complexity captures the computation performed
  by widely implemented join algorithms.} We observe that the set of
  comparisons made by any join algorithm that interacts with the data
  by comparing elements of the input relations (implicitly) constructs
  a certificate. Examples of such join algorithms are
  index-nested-loop join, sort-merge join, hash join,\footnote{Within a
    $\log$-factor, an ordered tree can simulate a hash
    table.} grace join, and block-nested loop join. Hence, our results
  provide a lower bound for this class of algorithms, as any such
  algorithm must take at least as many steps as the number of
  comparisons in a smallest certificate for the instance.

\item {\it Certificate complexity is a strictly finer notion of
  complexity than traditional worst-case data complexity}. In
  particular, we show that there is always a certificate that is no
  larger than the input size. In some cases, the certificate may be
  much smaller (even constant-sized for arbitrarily large
  inputs). 
\end{itemize}

\noindent These two properties allow us to model a common situation in
which indexes allow one to answer a query {\em without} reading all of
the data---a notion that traditional worst-case analysis is too coarse
to capture. We believe ours is the first {\it beyond worst-case
  analysis} of join queries.

Throughout, we assume that all input relations are indexed
consistently with a particular ordering of all attributes called the
{\em global attribute order} (GAO). In effect, this assumption means
that we restrict ourselves to algorithms that compare elements in GAO
order. This model, for example, excludes the possibility that a
relation will be accessed using indexes with multiple search keys
during query evaluation.

With this restriction, our main technical results are as
follows. Given a $\beta$-acyclic query we show that there is some GAO such
that \ms runs in time that is essentially optimal in the
certificate-sense, i.e., in time $\tilde O(|\cert| + Z)$, where
$\cert$ is a smallest certificate for the problem instance, $Z$ is the
output size, and $\tilde{O}$ hides factors that depend (perhaps
exponentially) on the query size and at most logarithmically on the
input size.\footnote{The exponential dependence on the query is
  similar to traditional data complexity; the logarithmic dependence
  on the data is an unavoidable technical necessity (see
  Appendix~\ref{app:sec:rt:analysis}).}  Assuming the 3\textsc{SUM}
conjecture, this boundary is tight, in the sense that any
$\beta$-cyclic query (and any GAO) there are some family of instances
that require a run-time of $\Omega(|\cert|^{4/3-\eps} + Z)$ for any
$\eps >0$ where $Z = O(|\cert|)$. For $\alpha$-acyclic join queries,
which are the more traditional notion of acyclicity in database theory
and a strictly larger class than $\beta$-acyclic queries, Yannakakis's
seminal join algorithm has a worst-case running time that is linear in
the input size plus output size (in data complexity). However, we show
that in the certificate world, this boundary has changed: assuming the
exponential time hypothesis, the runtime of any algorithm for
$\alpha$-acyclic queries cannot be bounded by any polynomial in
$|\cert|$.\footnote{In Appendix~\ref{app:sec:counter-examples}, we
  show that both worst-case optimal
  algorithms~\cite{DBLP:conf/pods/NgoPRR12,DBLP:journals/corr/abs-1210-0481}
  and Yannakakis's algorithm run in time $\omega(|\cert|)$ for
  $\beta$-acyclic queries on some family of instances.}

We also describe how to extend our results to notions of
treewidth. Recall that any ordering of attributes can be used to
construct a tree decomposition. Given a GAO that induces a tree
decomposition with an (induced) {\em treewidth} $w$, \ms runs in time
$\tilde O(|\cert|^{w+1} + Z)$.  In particular, for a query with {\em
  treewidth} $w$, there is always a GAO that achieves $\tilde
O(|\cert|^{w+1}+Z)$.  Moreover, we show that no algorithm
(comparison-based or not) can improve this exponent by more than a
constant factor in $w$. However, our algorithm does not have an
optimal exponent: for the special case of the popular triangle query,
we introduce a more sophisticated data structure that allows us to run
in time $\tilde{O}(|\cert|^{3/2}+Z)$, while $\ms$ runs in time
$\tilde{O}(|\cert|^2+Z)$.
 
\paragraph*{Outline of the Remaining Sections} 
In Section~\ref{section:formulation}, we describe the notion of a
certificate and formally state our main technical problem and
results. In Section~\ref{section:ms:alg}, we give an overview of the
main technical ideas of \ms, including a complete description of our
algorithm and its associated data structures. In
Section~\ref{sec:beta}, we describe the analysis of \ms for
$\beta$-acyclic queries. In Section~\ref{sec:tw}, we then describe how
to extend the analysis to queries with low-treewidth and the triangle
query. In Section~\ref{sec:related:work}, we discuss related
work. Most of the technical details are provided in the appendix.

\section{Problem and Main Result}
\label{section:formulation}

Roughly, the main problem we study is:
\begin{quote}
Given a natural join query $Q$ and a database instance $I$, compute
$Q$ in time $f(|\cert|,Z)$, where $\cert$ is the smallest
``certificate" that certifies that the output $Q(I)$ is as claimed by
the algorithm and $Z=|Q(I)|$.
\end{quote}

\noindent
 We will assume that all relations in the input are already
 indexed. Ideally, we aim for
 $f(|\cert|,Z)=O\left(|\cert|+Z\right)$. We make this problem precise
 in this section.


\subsection{The inputs to \ms}
\label{subsec:ms-inputs}

We assume a set of attributes $A_1,\dots,A_n$ and denote the domain of
attribute $A_i$ as $\mv D(A_i)$. Throughout this paper, without loss
of generality, we assume that all attributes are on domain $\mathbb
N$. We define three items: (1) the global attribute order; (2) our
notation for order; and (3) our model for how the data are indexed.

\paragraph*{The Global Attribute Order} 
\ms evaluates a given natural join query $Q$ consisting of a set
$\atoms(Q)$ of relations indexed in a way that is consistent with an
ordering $A_1,\dots,A_n$ of all attributes occurring in $Q$ that we
call the {\em global attribute order} (GAO). To avoid burdening the
notation, we assume that the GAO is simply the order
$A_{1},\dots,A_{n}$. We assume that all relations are stored in
ordered search trees (e.g., B-trees) where the search key for this
tree is consistent with this global order. For example, $(A_1,A_3)$ is
consistent, while $(A_3,A_2)$ is not.

\paragraph*{Tuple-Order Notation} 
We will extensively reason about the relative order of tuples and
describe notation to facilitate the arguments. For a relation
$R(A_{s(1)}, \dots, A_{s(k)})$ where $s : [k] \to [n]$ is such that $s(i)
< s(j)$ if $i < j$, we define an {\em index tuple} $\mv x =
(x_1,\cdots,x_j)$ to be a tuple of positive integers, where $j \leq
k$. Such tuples index tuples in the relation $R$. We define their
meaning inductively. If $\mv x = (x_1)$, then $R[\mv x]$ denotes the
$x_1$'th smallest value in the set $\pi_{A_{s(1)}}(R)$. 
Inductively, define $R[\mv x]$ to be the $x_j$'th smallest value in the set
\[
R[x_1,\dots,x_{j-1}, *] := \pi_{A_{j}}
\bigl( \sigma_{A_{s(1)}=R[x_1], \cdots, A_{s(j-1)}=R[x_1,\dots,x_{j-1}]}(R) 
\bigr).
\]
For example, if $R(A_1,A_2) = \set{ (1,1), (1,8), (2,3), (2,4)}$ then
$R[*] = \{1,2\}$, $R[1,*] = \{1,8\}$, $R[2] = 2$, and $R[2,1] = 3$.

We use the following convention to simplify the algorithm's
description: for any index tuple
$(x_1,\dots,x_{j-1})$,
\begin{eqnarray}
R[x_1,\dots,x_{j-1},0] &=& -\infty \label{eqn:-infty-convention}\\
R[x_1,\dots,x_{j-1},|R[x_1,\dots,x_{j-1}, *]|+1] &=& +\infty .
\label{eqn:+infty-convention}
\end{eqnarray}

\paragraph*{Model of Indexes}
The relation $R$ is indexed such that the values of various attributes
of tuples from $R$ can be accessed using index tuples. We
assume appropriate size information is stored so that we know what
the correct ranges of the $x_j$'s are; for example, following the notation
described above, the correct range is 
$1 \leq x_j \leq |R[x_1,\dots,x_{j-1},*]|$ for every $j \leq \arity(R)$. 
With the convention specified in 
\eqref{eqn:-infty-convention} and \eqref{eqn:+infty-convention},
$x_j=0$ and $x_j=|R[x_1,\dots,x_{j-1},*]|+1$ are {\em out-of-range} coordinates.
These coordinates are used for the sake of brevity only; an index tuple, by
definition, cannot contain out-of-range coordinates.

The index structure for $R$ supports the query $R.\findgap(\mv x, a)$, which
takes as input an index tuple $\mv x = (x_1,\dots,x_j)$ of length $0
\leq j < k$ and a value $a \in \mathbb Z$, and returns a pair of
coordinates $(x_{-},x_{+})$ such that
\begin{itemize} 
\item $0 \leq x_{-} \leq x_{+} \leq |R[(\mv x, *)]|+1$
\item $R[(\mv x, x_{-})] \leq a \leq R[(\mv x, x_{+})]$, and
\item $x_{-}$ (resp. $x_{+}$) is the maximum (resp. minimum)
  index satisfying this condition.
\end{itemize}
\noindent Note that it is possible for $x_- = x_+$, which holds when
$a \in R[(\mv x,*)]$. Moreover, we assume throughout that $\findgap$
runs in time $O(k\log |R|)$. This model captures widely used indexes
including a B-tree~\cite[Ch.10]{Ramakrishnan:2002:DMS:560733} or a
Trie~\cite{DBLP:journals/corr/abs-1210-0481}.

\subsection{Certificates}
\label{sec:certificate}


We define a {\em certificate}, which is a set of comparisons that
certifies the output is exactly as claimed. We do not want the
comparisons to depend on the specific values in the instance, only
their order. To facilitate that, we think of $R[\mv x]$ as a variable
that can be mapped to specific domain value by a database
instance.\footnote{We use variables as a perhaps more intuitive,
  succinct way to describe the underlying morphisms.}  These variables
are only defined for valid index tuples as imposed by the input
instance described in the previous section.

A {\em database instance} $I$ instantiates all variables $R[\mv x]$, 
where $\mv x = (x_1,\dots,x_j),~ 1\le j\leq \arity(R)$, is an index tuple in relation
$R$. (In particular, the input database instance described in the previous
section is such a database instance.)
We use $R^I[\mv x]$ to denote the instantiation of the variable
$R[\mv x]$. Note that each such variable is on the domain of some
attribute $A_k$; for short, we call such variable an {\em
  $A_k$-variable}. A database instance $I$ fills in specific values to the
nodes of the search tree structures of the input relations.

\begin{example}
Consider the query $Q = R(A) \Join T(A, B)$ on the input instance
$I(N)$ defined by $R^{I(N)} = [N]$ and $T^{I(N)} = \{ (1, 2i)
\suchthat i \in [N] \} \cup \{ (2, 3i) \suchthat i \in [N] \}$.  This
instance can be viewed as defining the following variables: $R[i]$,
$i\in [N]$, $T[1]$, $T[2]$, $T[1,i]$, and $T[2,i]$, $i\in [N]$.
Another database instance $J$ can define the same index variables but
using different constants, in particular, set $R^J[i] = \{ 2i
\suchthat i\in [N] \}$, $T^J[1] = 2$, $T^J[2]=4$, $T^J[1,i] = i$, and
$T^J[2, i] = 10i$, $i\in [N]$.
\label{ex:very-basic}
\end{example}

We next formalize the notion of certificates.
Consider an input instance to \ms, consisting of the query $Q$,
the GAO $A_1,\dots,A_n$, and a set of relations $R \in \atoms(Q)$ already 
indexed consistently with the GAO.

\bdefn[Argument]
An {\em argument} for the input instance is a set of symbolic comparisons of 
the form
\begin{equation}
 R[\mv x] \ \theta \ S[\mv y], \ \text{ where } R, S\in \atoms(Q)
\label{eqn:comparison-forms}
\end{equation}
and $\mv x$ and $\mv y$ are two index tuples\footnote{Note again that the index 
tuples are constructed from the input instance as described in the previous 
section.} such that $R[\mv x]$ and
$S[\mv y]$ are both $A_k$-variables for some $k \in [n]$, and $\theta
\in \{<, =, >\}$.  Note that we allow $R=S$.\footnote{Equality constraints
  between index tuples from same relation is {\em required} to
  guarantee that certificates are no longer than the input, see
  property (ii) below.} 
  A database instance $I$ {\em satisfies an
  argument} ${\cal A}$ if $R^I[\mv x] \ \theta \ S^I[\mv y]$ is true
for every comparison $R[\mv x] \ \theta \ S[\mv y]$ in the argument
${\cal A}$.  \edefn

An index tuple $\mv x = (x_1,\dots,x_r)$ for a relation $S$ is called a {\em
full index tuple} if $r = \arity(S)$.  Let $I$ be a database instance for the
problem.  Then, the full index tuple $\mv x$ is said to {\em contribute} to an
output tuple $\mv t \in Q(I) = \ \Join_{R\in\atoms(Q)} R^I$ if the tuple
$(S[x_1], S[x_1,x_2], \dots,S[\mv x])$ is exactly the projection of
$\mv t$ onto attributes in $S$. 
A collection $X$ of full index tuples is said to
be a {\em witness} for $Q(I)$ if $X$ has exactly one full
index tuple from each relation $R\in \atoms(Q)$, and all index tuples
in $X$ contribute to the same $\mv t \in Q(I)$.

\bdefn[Certificate]\label{defn:certificate}
An argument $\mathcal A$ for the input instance is called a 
{\em certificate} iff the following condition is satisfied: 
if $I$ and $J$ are two database
instances of the problem both of which satisfy $\mathcal A$, then 
{\em every} witness for $Q(I)$ is a witness for $Q(J)$ and vice versa.
The {\em size} of a certificate is the number of comparisons in it.
\edefn

\begin{example}
Continuing with Example~\ref{ex:very-basic}. Fix an $N$, the argument
$\{ R[1] = T[1],~ R[2] = T[2] \}$ is a certificate for $I(N)$. For
every database, such as $I=I(N)$ and $J$ in the example, that
satisfies the two equalities, the set of witnesses are the same, i.e.,
the sets $\{ 1, (1, i) \}$ and $\{ 2, (2, i) \}$ for $i \in
[N]$. Notice we do not need to spell out all of the outputs in the
certificate.

Consider the instance $K$ in which $R^K = [N]$, $T^K = \{ (1, 2i)
\suchthat i \in [N] \} \cup \{ (3, 3i) \suchthat i \in [N] \}$. While
$K$ is very similar to $I$, $K$ does {\em not} satisfy the certificate
since $R^{K}[2] \neq T^{K}[2]$. The certificate also does not apply to
$I(N+1)$ from Example~\ref{ex:very-basic}, since $I(N+1)$ defines a
different set of variables from $I(N)$, e.g., $T[1,N+1]$ is defined in
$I(N+1)$, but not in $I(N)$.
\end{example}

\paragraph*{Properties of optimal certificates}

We list three important facts about $\cert$, a minimum-sized certificate:
\bi
 \item[(i)] The set of comparisons issued by a very natural class of
(non-deterministic) comparison-based join algorithms {\em is} a certificate; 
this result not only justifies the definition of certificates, but also shows that
$|\cert|$ is a lowerbound for the runtime of any comparison-based join algorithm.
 
\item[(ii)] $|\cert|$ can be shown to be at most linear in the input size
{\em no matter what the data and the GAO are},
and in many cases $|\cert|$ can even be of constant size.
Hence, running time measured in $|\cert|$ is a strictly finer notion of 
runtime complexity than input-based runtimes; and

 \item[(iii)] $|\cert|$ depends on the data and the GAO.
\ei

We explain the above facts more formally in the following two
propositions.  The proofs of the propositions can be found in 
Appendix~\ref{app:sec:certificates}.

\bprop[Certificate size as run-time lowerbound of comparison-based
  algorithms] Let $Q$ be a join query whose input relations are
already indexed consistent with a GAO as described in
Section~\ref{subsec:ms-inputs}.  Consider any comparison-based join
algorithm that only does comparisons of the form shown in
\eqref{eqn:comparison-forms}. Then, the set of comparisons performed
during execution of the algorithm {\em is} a certificate.  In
particular, if $\cert$ is an optimal certificate for the problem, then
the algorithm must run in time at least $\Omega(|\cert|)$.
\label{prop:Omega(|C|)}
\eprop

\bprop[Upper bound on optimal certificate size]
Let $Q$ be a general join query on $m$ relations and $n$ attributes. 
Let $N$ be the total number of tuples from all input relations.
Then, no matter what the input data and the GAO are, we have
$|\cert| \leq r\cdot N$, where 
$r = \max \{\arity(R) \suchthat R \in \atoms(Q)\} \leq n$.
\label{prop:optimal-certificate-upperbound}
\eprop

In Appendix~\ref{app:sec:certificates}, we present examples to
demonstrate that $|\cert|$ can vary any where from $O(1)$ to
$\Theta(|\text{input-size}|)$, that the input data or the GAO can
change the certificate size, and that same-relation comparisons are
needed.

\subsection{Main Results}
\label{sec:problem-results}

Given a set of input relations already indexed consistent with a fixed
GAO, we wish to compute the natural join of these relations as quickly
as possible.  As illustrated in the previous section, a runtime
approaching $|\cert|$ is optimal among comparison-based
algorithms. Furthermore, runtimes as a function of $|\cert|$ can be
sublinear in the input size.  Ideally, one would like a join algorithm
running in $\tilde O(|\cert|)$-time.  However, such a runtime is
impossible because for many instances the output size $Z$ is {\em
  super}linear in the input size, while $|\cert|$ is at most linear in
the input size. Hence, we will aim for runtimes of the form $\tilde O(
g(|\cert|) + Z)$, where $Z$ is the output size and $g$ is some
function; a runtime of $\tilde O(|\cert|+Z)$ is essentially optimal.

Our algorithm, called \ms, is a general-purpose join algorithm. Our
main results analyze its runtime behavior on various classes of
queries in the certificate complexity model.  Recall that
$\alpha$-acyclic (often just acyclic) is the standard notion of
(hypergraph) acyclicity in database
theory~\cite[p.~128]{DBLP:books/aw/AbiteboulHV95}. A query is
$\beta$-acyclic, a stronger notion, if every subquery of $Q$ obtained
by removing atoms from $Q$ remains $\alpha$-acyclic. For completeness,
we include these definitions and examples in
Appendix~\ref{app:sec:gao}.

Let $N$ be the input size, $n$ the number of attributes, $m$ the
number of relations, $Z$ the output size, $r$ the maximum arity of
input relations, and $\cert$ any optimal certificate for the instance.
Our key results are as follows.

\begin{thm}
Suppose the input query is $\beta$-acyclic. Then there is some GAO
such that $\ms$ computes its output in time $O\left(
2^nm^2n\left(4^r|\cert| + Z \right) \log N \right)$.
\label{thm:io-beta-neo}
\end{thm}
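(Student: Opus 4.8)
The plan is to prove Theorem~\ref{thm:io-beta-neo} in three stages: (1) use $\beta$-acyclicity to pick a good GAO; (2) bound the cost of one iteration of \ms under that GAO; (3) bound the number of iterations by $O(2^n(4^r|\cert|+Z))$, which is the heart of the matter. This follows the roadmap that Section~\ref{sec:beta} is meant to carry out.

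\emph{Choosing the GAO.} I would invoke the standard characterization that a hypergraph is $\beta$-acyclic iff its vertices admit a \emph{nest-point elimination order} $B_1,\dots,B_n$: for every $i$, the hyperedges through $B_i$, restricted to $\{B_i,\dots,B_n\}$, form a chain under inclusion. Take the GAO to be the reverse, $A_1=B_n,\dots,A_n=B_1$. The property I need downstream is \emph{suffix-nestedness}: at every level $k$, the residual signatures of the atoms on the attribute suffix $\{A_k,\dots,A_n\}$ form a laminar family; in particular any two atoms that still reach level $k$ have comparable (under inclusion) attribute sets within that suffix. This is exactly the structure that makes the constraint data structure of Section~\ref{section:ms:alg} behave like a tree of chains rather than a tree of arbitrary interval families.

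\emph{Cost of one iteration.} From Section~\ref{section:ms:alg}, each iteration (i) asks the constraint data structure for a probe point $\vx$ not covered by any known gap, (ii) walks each atom $R$ down its search tree issuing at most $\arity(R)\le r$ calls to $R.\findgap$, each costing $O(r\log N)$, and (iii) either emits $\vx\in Q(I)$ or inserts one freshly discovered gap. Using suffix-nestedness the constraint tree has depth $n$ and at every node the live gap-constraints form chains, so both the probe-point query and the insertion cost $\mathrm{poly}(n,m)\log N$; hence one iteration costs $O(m^2 n\log N)$ generously, which is the source of the $m^2 n\log N$ factor in the bound.

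\emph{Number of iterations, and the main obstacle.} Split iterations into \emph{output iterations} (exactly $Z$ of them, one per tuple of $Q(I)$) and \emph{gap iterations}, each of which strictly enlarges the covered region via a new gap-constraint; it remains to bound the latter by $O(2^n 4^r|\cert|)$. Fix a minimum certificate $\mathcal C$. The key claim is that the gaps \ms discovers are ``aligned'' with $\mathcal C$: a gap discovered at atom $R$ on attribute $A_k$, an interval $(R[\mv w,x_-],R[\mv w,x_+])$ over a prefix $\mv w$, has endpoints that $\mathcal C$ must itself pin down---otherwise one could perturb an instance at such an endpoint, alter some witness, and still satisfy $\mathcal C$, contradicting that $\mathcal C$ is a certificate. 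Making this precise means charging each discovered gap to a comparison of $\mathcal C$ and showing no comparison is charged more than $O(2^n 4^r)$ times. This is where $\beta$-acyclicity is essential: by suffix-nestedness, a probe point killed at level $k$ is killed by a gap whose projection to the levels below $k$ is a single point determined by the probe, so the discovered gaps tile each region of output space that is ``monochromatic under $\mathcal C$'' into only $O(4^r)$ boxes (the $4^r$ counting how the $\le r$ levels of one atom can carve a box), rather than the superlinear number that is forced for $\beta$-cyclic queries by the 3\textsc{SUM} lower bound. The genuine difficulty is the bookkeeping in this step---handling gaps whose prefix $\mv w$ was fixed by an \emph{earlier} probe rather than the current one, and using the fact that the constraint data structure never re-probes a covered point to guarantee each $\mathcal C$-box is entered only boundedly often. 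Everything else is routine: multiplying the iteration bound $O(2^n(4^r|\cert|+Z))$ by the per-iteration cost $O(m^2 n\log N)$ gives the stated running time.
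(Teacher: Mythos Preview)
Your decomposition is inverted relative to the paper, and this hides the real difficulty. In the paper, the factorization is: Theorem~\ref{thm:analyze-outer-algorithm} bounds the number of outer iterations by $O(2^r|\cert|+Z)$ and the number of \cds operations by $O(m4^r|\cert|+Z)$, and this holds for \emph{any} query---$\beta$-acyclicity plays no role there. The entire weight of $\beta$-acyclicity is in Lemma~\ref{lmm:amortized-analysis-acyclic-Q}, which shows that under a nested elimination order each \cds operation costs $O(n2^n\log W)$ \emph{amortized}. The $2^n$ in the final bound comes from the \cds cost, not from the iteration count. Your Stage~3 therefore attributes the $\beta$-acyclicity hypothesis to the wrong place: the charging argument you sketch (each gap endpoint must be pinned by $\cert$, else perturb) is essentially the paper's proof of Theorem~\ref{thm:analyze-outer-algorithm}, and it works query-independently with a $2^r$ (not $2^n4^r$) charge per comparison.

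The genuine gap is your Stage~2. You assert that ``the live gap-constraints form chains, so both the probe-point query and the insertion cost $\mathrm{poly}(n,m)\log N$,'' but this is exactly the hard step the paper devotes Section~\ref{sec:beta} to, and it is \emph{not} a worst-case per-call bound. A single $\getpp$ call can touch many intervals: even with the chain property, finding a value free in every node of the chain requires ping-ponging between levels, and a naive walk can be cubic (Example~\ref{ex:memoization}). The paper's fix is algorithmic---$\getpp$ inserts inferred constraints as it climbs (line~\ref{line:new-constraint-beta} of Algorithm~\ref{algo:ctree-nextvalue-beta}) and backtracking constraints (line~\ref{line:backtrack} of Algorithm~\ref{algo:getpp-beta})---and the analysis is an amortized credit scheme (the ``interval-credit invariant'') that banks $\Theta(2^{|P(u)|}\log W)$ credits on each interval and spends them as intervals get merged upward. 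Without spelling out this memoization and the accompanying amortization, Stage~2 is a restatement of the claim rather than a proof; suffix-nestedness alone does not give you a cheap $\getpp$.
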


As is standard in database theory, we ignore the dependency on the
query size, and the above theorem states that \ms runs in time $\tilde
O(|\cert|+Z)$.\footnote{For $\beta$-acyclic queries with a fixed GAO,
  our results are loose; our best upper bound the complexity uses the
  treewidth from Section~\ref{sec:tw}.}

What about $\beta$-{\em cyclic} queries? The short answer is {\em no}:
we cannot achieve this guarantee.  It is obvious that any join
algorithm will take time $\Omega(Z)$. Using {\em
  $3$\textsf{SUM}-hardness}, a well-known complexity-theoretic
assumption~\cite{3sum}, we are able to show the following.

\bprop
\label{prop:no-io-non-beta}
Unless the $3$\textsf{SUM} problem can be solved in sub-quadratic
time, for any $\beta$-{\em cyclic} query $Q$ in any GAO, there does
not exist an algorithm that runs in time $O(|\cert|^{4/3-\eps} + Z)$
for any $\eps>0$ on all instances.  
\eprop

We extend our analysis of \ms to queries that have bounded treewidth
and to triangle queries in Section~\ref{sec:tw}. These results are
technically involved and we only highlight the main technical
challenges.

\section{The \ms Algorithm}
\label{section:ms:alg}
We begin with an overview of the main ideas and technical challenges
of the \ms algorithm.  Intuitively, \ms probes into the space of all
possible output tuples, and explores the gaps in this space where
there is no output tuples.  These gaps are encoded by a technical
notion called constraints, which we describe next. (For illustration,
we present complete end-to-end results for set intersection and the
{\em bow-tie} query in Appendix~\ref{app:sec:intersection}
and~\ref{app:sec:bowtie}.)

\subsection{Notation for \ms}

We need some notation to describe our algorithm. Define the {\em
  output space} $\outspace$ of the query $Q$ to be the space
$\outspace = \mv D(A_1) \times \mv D(A_2) \times \cdots \times \mv
D(A_n)$, where $\mv D(A_i)$ is the domain of attribute
$A_i$.\footnote{Recall, we assume $\mv D(A_i) = \mathbb N$ for
  simplicity.} By definition, a tuple $\mv t$ is an {\em output tuple}
if and only if $\mv t = (t_1, \dots, t_n) \in \outspace$, and
$\pi_{\bar A(R)}(\mv t) \in R$, for all $R\in \atoms(Q)$,
where $\bar A(R)$ is the set of attributes in $R$.

\paragraph*{Constraints}
A {\em constraint} $\mv c$ is an $n$-dimensional vector of the
following form: $\mv c = \langle c_1,\cdots, c_{i-1},
(\ell,r), \{*\}^{n-i} \rangle,$ where $c_j\in \mathbb N \cup
\{*\}$ for every $j\in [i-1]$.  In other words, each constraint $\mv c$ is a
vector consisting of three types of components:
\begin{itemize}
\item[(1)] {\em open-interval} component $(\ell, r)$ on the attribute $A_i$ 
(for some $i\in [n]$) and $\ell, r \in \mathbb N \cup \{-\infty, +\infty\}$,
\item[(2)] {\em wildcard} or $*$ component, and 
\item[(3)] {\em equality} component of the type $p \in \mathbb N$. 
\end{itemize}
In any constraint, there is exactly one interval component. All
components after the interval component are wildcards.  Hence, we will
often not write down the wildcard components that come after the
interval component.  The prefix that comes before the interval
component is called a {\em pattern}, which consists of any number of
wildcards and equality components.  The equality components encode the
coordinates of the axis parallel affine planes containing the gap. For
example, in three dimensions the constraint $\langle *, (1, 10), *
\rangle$ can be viewed as the region between the affine hyperplanes
$A_2=1$ and $A_2=10$; and the constraint $\langle 1, *, (2,5)\rangle$
can be viewed as the strip inside the plane $A_1=1$ between the line
$A_3=2$ and $A_3=5$.  We encode these gaps syntactically to
facilitate efficient insertion, deletion, and merging.

Let $\mv t = (t_1, \dots, t_n) \in \outspace$ be an arbitrary tuple
from the output space, and $\mv c = \langle c_1,\dots, c_n \rangle$ be
a constraint.  Then, $\mv t$ is said to {\em satisfy} constraint $\mv
c$ if for every $i\in [n]$ one of the following holds: (1) $c_i = *$,
(2) $c_i \in \mathbb N$ and $t_i = c_i$, or (3) $c_i = (\ell, r)$ and
$t_i \in (\ell, r)$. We say a tuple $\mv t$ is {\em active} with
respect to a set of constraints if $\mv t$ does not satisfy any
constraint in the set (Geometrically, no constraint
covers the point $\mv t$).

\subsection{A High-level Overview of \ms}
\label{sec:algorithm}

We break \ms in two components: (1) a special data structure called
the {\em constraint data structure} (\cds), and (2) an algorithm that uses
this data structure. Algorithm~\ref{algo:outer-shell} gives a
high-level overview of how \ms works, which we will make precise in the
next section.

\textbf{The \cds stores the constraints already discovered during
execution}. For example, consider the query $$R(A,B) \Join S(B).$$ If \ms
determines that $S[4] = 20$ and $S[5] = 28$, then we can deduce that
there is no tuple in the output that has a $B$ value in the open interval
$(20,28)$. This observation is encoded as a constraint $\langle *,
(20,28) \rangle$. A key challenge with the \cds is to
efficiently find an active tuple $\mv t$, given a set of constraints already
stored in the \cds.

\textbf {The outer algorithm queries the \cds to find active tuples
  and then probes the input relations.}  If there is no active $\mv t$, the
algorithm terminates.  Given an active $\mv t$, \ms makes queries into
the index structures of the input relations. These queries either
report that $\mv t$ is an output tuple, in which case $\mv t$ is
output, or they discover constraints that are then inserted into the
\cds. Intuitively, the queries into the index structures are crafted
so that at least one of the constraints that is returned is
responsible for ruling out $\mv t$ in any optimal certificate.

\begin{algorithm}[!htp]
\caption{High-level view: \ms algorithm}
\label{algo:outer-shell}
\begin{algorithmic}[1]
\State $\cds \gets \emptyset$ \Comment{No gap discovered yet}
\While {$\cds$ can find $\mv t$ not in any stored gap}
  \If {$\pi_{\bar A(R)}(\mv t) \in R$ for every $R\in \atoms(Q)$}
    \State Report $\mv t$ and tell \cds that $\mv t$ is ruled out
  \Else
    \State Query all $R \in \atoms(Q)$ for gaps around $\mv t$
    \State Insert those gaps into \cds
  \EndIf
\EndWhile
\end{algorithmic}
\end{algorithm}

We first describe the interface of the \cds and then the outer
algorithm which uses the \cds. 
 
\subsection{The \cds}
\label{sec:cds-outline}

The \cds is a data structure that implements two functions as
efficiently as possible: (1) $\insconst(\mv c)$ takes a new constraint
$\mv c$ and inserts it into the data structure, and (2) $\getpp()$
returns an active tuple $\mv t$ with respect to all constraints that
have been inserted into the \cds, or $\NULL$ if no such $\mv t$
exists.

\paragraph*{Implementation}
To support these operations, we implement the \cds using a tree
structure called $\ctree$, which is a tree with at most $n$ levels,
one for each of the attributes following the GAO.
Figure~\ref{fig:ctree} illustrates such a tree.
More details are provided in Appendix~\ref{app:sec:cds}.
\begin{figure}
\centerline{\includegraphics[width=2.5in]{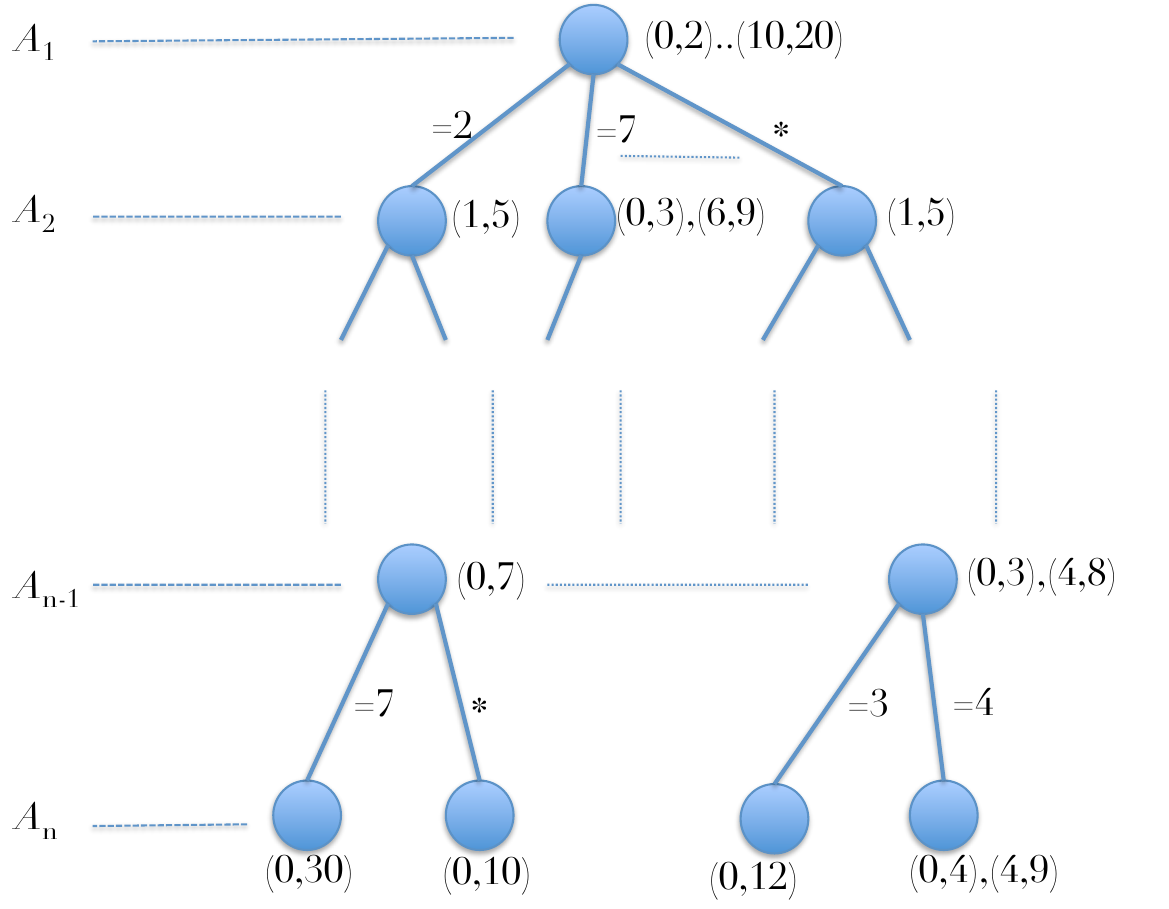}}
\caption{Example of $\ctree$ data structure}
\label{fig:ctree}
\end{figure}
Each node $v$ in the \cds corresponds to a prefix (i.e. pattern) of constraints;
each node has two data structures:

{\bf (1)}
$v.\chld$ is a sorted list with one entry per child of $v$ in the 
underlying tree. Each entry in the sorted list is labeled with an element of 
$\mathbb N$ and has a pointer to the subtree rooted at the corresponding child. 
There are two exceptions: (1) if $v$ is a leaf then $v.\chld=\emptyset$, and
(2) each $v$ has at most one additional child node labeled with $*$.

{\bf (2)} $v.\intv$ is a sorted list of disjoint open intervals under that
  corresponding attribute. A key property is that {\em given a value
    $u$ we can, in logarithmic time, output the smallest value $u'\ge
    u$ that is not covered by any interval in $v.\intv$ (via the
    $\nxt$ function)}.  We will maintain the invariant that, for every
  node $v$ in a $\ctree$, none of the labels in $v.\chld$ is contained
  in an interval in $v.\intv$.

The following lemma is straightforward hence we omit the proof.
Note that when we insert a new interval that overlaps existing
intervals and/or contains values in $\chld$,
we will have to merge them and/or remove the entries in $\chld$; 
and hence the cost is amortized.

\begin{prop}
\label{prop:insert-ctree}
The operation $\insconst(\vc)$ can be implemented in amortized time 
$O(n\log{W})$, where $W$ is total number of constraint vectors already inserted.
\end{prop}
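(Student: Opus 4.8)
The plan is to separate the cost of an $\insconst(\vc)$ call into (a) the worst-case cost of walking down $\ctree$ to the node that will own the new interval, and (b) the cost of inserting that interval and cleaning up the local neighborhood, and then to absorb all of (b) into a charging argument. Write $\vc = \langle c_1,\dots,c_{i-1},(\ell,r),\{*\}^{n-i}\rangle$, so the prefix $(c_1,\dots,c_{i-1})$ is a pattern of length at most $n-1$. The descent processes this pattern one component at a time: at the current node $v$ and level $j$, if $c_j$ is an integer $p$ we first query $v.\intv$ to test whether $p$ lies in one of its disjoint intervals (possible in $O(\log W)$ time by the stated property of $v.\intv$), and if so the constraint $\vc$ is already subsumed by a stored constraint and we return immediately; otherwise we binary-search $v.\chld$ for the child labeled $p$, creating that child (and edge) if absent. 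If $c_j = *$ we descend to the unique $*$-child, again creating it if necessary. After $W$ insertions the tree has at most $O(nW)$ nodes in total, so every list $v.\chld$ has length $O(W)$ and each of the at most $n-1$ steps costs $O(\log W)$; the descent therefore costs $O(n\log W)$ worst-case and creates at most $n$ new nodes. Note the early-exit test is also exactly what preserves the invariant that no label of $v.\chld$ sits inside an interval of $v.\intv$ when we add new edges.

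Let $v$ be the node reached at the end of the descent. We insert the open interval $(\ell,r)$ into the sorted list $v.\intv$: a binary search ($O(\log W)$) locates the neighbors of $(\ell,r)$; we then repeatedly merge $(\ell,r)$ with every stored interval it overlaps or abuts, replacing them by their union; finally we scan the labels of $v.\chld$ falling strictly inside the resulting merged interval and delete those children together with the \emph{entire subtrees} hanging beneath them, which restores the invariant. Each merge step does $O(1)$ work per interval it destroys, and each subtree deletion does $O(1)$ work per node it removes.

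For the amortized bound, observe that over any sequence of $W$ insertions at most $O(nW)$ tree nodes and at most $O(W)$ intervals are ever created (each call adds at most $n$ nodes and one interval), and every node and every interval is destroyed \emph{at most once} — a torn-down subtree never reappears and an absorbed interval never comes back. Charging each unit of cleanup work to the creation of the object it destroys, the total cleanup work over the whole sequence is $O(nW)$, i.e.\ $O(n)$ amortized per call. Adding the $O(n\log W)$ worst-case descent cost and the $O(\log W)$ cost of the interval binary search yields the claimed amortized bound $O(n\log W)$ for $\insconst(\vc)$.

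The one nonroutine point is the cleanup: a single insertion can, in the worst case, merge many intervals and demolish large subtrees whose keys have just been swallowed by a widened interval, so no pointwise bound on a call is possible. The argument only goes through because this destructive work is genuinely amortizable — every interval and every node is paid for exactly once, at creation time — so I expect that to be the step requiring the most care; everything else (the per-level binary searches in $v.\chld$, the $\nxt$-style membership query on $v.\intv$, and the early return on a subsumed constraint) follows directly from the sorted-list and interval-list interfaces already posited for $\ctree$.
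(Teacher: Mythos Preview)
Your proof is correct and matches the paper's approach: the paper states the proposition as ``straightforward'' and omits the proof, noting only that merging overlapping intervals and removing subsumed $\chld$ entries is what makes the bound amortized; the appendix supplies Algorithm~\ref{algo:insert} (the explicit $\inst$ procedure) together with Propositions~\ref{prop:slist} and~\ref{prop:iarr}, whose per-operation $O(\log W)$ amortized bounds for $\arr.\deli$ and $\iarr.\ins$ encapsulate exactly the charging argument you spell out. Your explicit teardown of the hanging subtrees is slightly more than the paper does (it just calls $v.\chld.\deli(\ell,r)$ and lets the subtrees become unreachable), but this does not affect the time bound and your charge-to-creation accounting handles it cleanly.
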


The key challenge is to design an efficient implementation of
$\getpp()$; the heart of Sections~\ref{sec:beta} and \ref{sec:tw} is
to analyze $\getpp()$ using properties of the query $Q$.

\subsection{The outer algorithm}
\label{subsec:outer}

Algorithm \ref{algo:outer-algorithm} contains all the details that
were missing from the high-level view of
Algorithm~\ref{algo:outer-shell}. 
Appendix~\ref{app:sec:ms-worked-example} has a complete run of 
\ms on a specific query. Appendices~\ref{app:sec:intersection} and
\ref{app:sec:bowtie} have the complete end-to-end descriptions of two specific
queries, which help clarify the general algorithm.
We prove the following result.

\begin{algorithm}[t]
\caption{$\ms$ for evaluating the query $Q = \ \Join_{R\in \atoms(Q)} R(\bar A(R))$}
\label{algo:outer-algorithm}
\begin{algorithmic}[1]
\Require{We use the conventions defined in \eqref{eqn:-infty-convention} and \eqref{eqn:+infty-convention}}
\State Initialize the constraint data structure $\cds = \emptyset$
\While{$( (\mv t \la \cds.\getpp()) \neq \NULL)$}
  \State Denote $\mv t = (t_1,\dots,t_n)$
  \For {each $R\in \atoms(Q)$}
    \State $k \la \arity(R)$; 
    \State Let $\bar A(R)=(A_{s(1)},\dots,A_{s(k)})$ be $R$'s attributes, 
    where $s: [k]\to [n]$ is such that $s(i)<s(j)$ for $i<j$.
    \For {$p=0$ {\bf to} $k-1$} \Comment{Explore around $\mv t$ in $R$}
      \For {each vector $\mv v \in \{\ell, h\}^p$} \Comment{$\ell,h$ are just
      symbols, and $\{\ell, h\}^0$ has only the empty vector}
      \State Let $\mv v = (v_1,\dots,v_p)$ \Comment{$v_j \in \{\ell, h\}, \forall
      j \in [p]$}
        \State $(i^{(\mv v, \ell)}_R,i_R^{(\mv v, h)}) \la R.\findgap\left(
        \bigl(i^{(v_1)}_R, i^{(v_1,v_2)}_R, \dots,
        i^{(v_1,\dots,v_p)}_R\bigr), t_{s(p+1)}\right)$ \Comment{Gap
        around $(R[\mv i_R^{(\mv v)}],t_{s(p+1)})$ in $R$.}
      \EndFor
    \EndFor
  \EndFor
  \If {$R\left[i^{(h)}_R, i^{(h,h)}_R, \dots,
       i^{\{h\}^p}_R\right] = t_{s(p)}$
       for all $p \in [\arity(R)]$ and for all $R \in \atoms(Q)$}
       \label{main-algo-line:if}
    \State \textbf{Output} the tuple $\mv t$ \label{main-algo-line:output-t}
    \State $\cds.\insconst\left(\langle t_1, t_2,\dots, t_{n-1}, (t_n-1,t_n+1)
    \rangle\right)$
    \label{main-algo-line:constraint-tt}
  \Else \label{main-algo-line:else}
    \For {each $R\in \atoms(Q)$}
      \State $k \la \arity(R)$
      \For {$p=0$ {\bf to} $k-1$}
        \For {each vector $\mv v \in \{\ell, h\}^p$}
           \If {(all the indices $i^{(v_1)}_R, \dots, i^{(v_1,\dots,v_p)}_R$ are
           {\bf not} out of range)}
          \State $\cds.\insconst\left(\left\langle 
                  R\left[i^{(v_1)}_R\right], \dots,
                  R\left[i^{(v_1)}_R, \cdots, i^{(v_1,\dots,v_p)}_R\right], 
                  \left( R[i^{(\mv v, \ell)}_R], R[i^{(\mv v, h)}_R]
                  \right)
                 \right\rangle\right)$ \label{line:additional-constraint}
          \State \Comment{Note that the constraint is empty if $R[i^{(\mv v, \ell)}_R] = R[i^{(\mv v, h)}_R]$}
           \EndIf
        \EndFor
      \EndFor
    \EndFor
  \EndIf
\EndWhile
\end{algorithmic}
\end{algorithm}

\bthm
Let $N$ denote the input size, $Z$ the number of output tuples, 
$m =|\atoms(Q)|$, and $$r = \max_{R\in \atoms(Q)} \arity(R).$$
Let $\cert$ be any optimal certificate for the input instance.
Then, the total runtime of Algorithm \ref{algo:outer-algorithm} is 
\[O\left( \left( 4^r|\cert| + rZ \right) m \log(N)\right) + T(\cds),\]
where $T(\cds)$ is the total time taken by the constraint data structure.
The algorithm inserts $O(m4^r|\cert| + Z)$ constraints to $\cds$ and
issues $O(2^r|\cert| + Z)$ calls to $\getpp()$.
\label{thm:analyze-outer-algorithm}
\ethm

Our proof strategy bounds the number of iterations of the algorithm
using an amortized analysis. We pay for each probe point $\mv t$
returned by the $\cds$ by either charging a comparison in the certificate
$\cert$ {\em or} by charging an output tuple.  If $\mv t$ is an output
tuple, we charge the output tuple. If $\mv t$ is not an output tuple,
then we observe that at least one of the constraints we discovered
must rule out $\mv t$. Recall that each constraint is essentially a pair of
elements from some base relation. If one element from each such pair is not
involved in any comparison in $\cert$, then we can perturb the instance
slightly by moving the comparison-free element to align with $\mv t$.
This means $\cert$ does not have enough information to rule out $\mv t$ as an
output tuple, reaching a contradiction. Hence when $\mv t$ is not an output
tuple, essentially some gap must map to a pair of comparisons. Finally, using
the geometry of the gaps, we show that each comparison is charged at most 
$2^r$ times and each output tuple is
charged $O(1)$ times. Thus, in total the number of iterations is
$O(2^r|\cert|+Z)$. 

When $\cert$ is an optimal-size certificate, the runtime above is
about linear in $|\cert|+Z$ {\em plus} the total time the \cds
takes. Note, however, that $|\cert|$ can be very small, even constant.
Hence, we basically shift all of the burden of join evaluation to
the \cds. Thus, one should not hope that there is an efficient \cds for
general queries:

\bthm[Limitation of any \cds] Unless the
exponential time hypothesis is wrong, no constraint data structure can
process the constraints and the probe point accesses in time
polynomial (independent of the query) in the number of constraints
inserted and probe points accessed.  
\label{thm:cds-limit}
\ethm

Complete proofs of the above theorems are included in
Appendix~\ref{app:sec:outer-analysis}.
In the next sections, we analyze the \cds, specifically the function
$\getpp()$. Our analysis exploits properties of the query and the GAO
for $\beta$-acyclic and bounded treewidth queries.

\section{$\beta$-acyclic queries}
\label{sec:beta}

We describe how to implement $\getpp$ for $\beta$-acyclic queries. In
particular, we show that there is some GAO that helps implement $\getpp$
in amortized logarithmic time. Hence, by
Theorem~\ref{thm:analyze-outer-algorithm} our running time is $\tilde
O(|\cert|+Z)$, which we argued previously is essentially optimal.

\subsection{Overview}
Recall that given a set of intervals, $\getpp$ returns an active tuple
$\mv t = (t_1,\dots,t_n) \in \outspace$, i.e., a tuple $\mv t$ that does
not satisfy any of the constraints stored in the \cds. Essentially,
during execution there may be a large number of constraints, and
$\getpp$ needs to answer an alternating sequence of constraint
satisfaction problems and insertions. The question is: how do we split
this work between insertion time and querying time? 

In \ms, we take a lazy approach: we insert all the constraints without
doing any cleanup on the \cds. Then, when $\getpp$ is called \ms
might have to do hard work to return a new active tuple, applying
memoization along the way so the heavy labor does not have to be
repeated in the future.  When the GAO has a special structure, this
strategy helps keep every $\cds$ operation at amortized logarithmic 
time. We first give an example to build intuition about how our lazy
approach works.

\begin{example} 
\label{ex:memoization}
Consider a query with three attributes $(A,B,C)$, and suppose the constraints 
that are inserted into the \cds are 
\bi 
 \item[(i)] $\langle a, b, (-\infty, 1)\rangle$ for all $a, b\in [N]$, 
 \item[(ii)] $\langle *, b, (2i-2, 2i)\rangle$ for all $b,i \in [N]$,
 \item[(iii)] $\langle *, *, (2i-1, 2i+1)\rangle$ for $i \in [N]$,
 \item[(iv)] and $\langle *, *, (2N, +\infty)\rangle$.
\ei
There are $O(N^2)$ constraints, and there is no active tuple of the form
$(a, b, c)$ for $a,b \in [N]$. Without memoization, the brute-force strategy
will take time $\Omega(N^3)$, because for every pair $(a,b) \in [N]^2$, 
the algorithm will have to verify in $\Omega(N)$ time that the constraints
$(ii)$ forbid all $c = 2i-1, i\in [N]$, the constraints $(iii)$ 
forbid all $c=2i, i\in [N]$, and the constraint $(iv)$ forbid $c > 2N$.

But we can do better by remembering inferences that we have made.  Fix
a value $a=1, b=1$. \ms recognizes in $O(N)$-time that there is no $c$
for which $(a,b,c)$ is active. \ms is slightly smarter: it looks at
constraints of the type $(ii), (iii), (iv)$ (for $b=1$) and concludes
in $O(N)$-time that every tuple satisfying those constraints also
satisfies the constraint $\langle *, 1, (0, +\infty)\rangle$.  \ms
remembers this inference by inserting the inferred constraint into the
\cds. Then, for $a \geq 2$, it takes only $O(1)$-time to conclude that
no tuple of the form $(a, 1, c)$ can be active. It does this inference
by inserting constraint $\langle a, 1, (0,+\infty)\rangle$, which is
merged with $(i)$ to become $\langle a, 1, (-\infty,
+\infty)\rangle$. Overall, we need only $O(N^2)$-time to reach the
same conclusion as the $\Omega(N^{3})$ brute-force strategy.


\end{example}

\subsection{Patterns}
\label{subsec:pattern-neo}

Recall that $\getpp$ returns a tuple $\mv t = (t_1,\dots,t_n)
\in \outspace$ such that $\mv t$ does not satisfy any of the
constraints stored in the \cds. We find $\mv t$ by computing $t_1,
t_2, \dots, t_n$, one value at a time, backtracking if necessary. We
need some notation to describe the algorithm and the properties that
we exploit.

Let $0 \leq k \leq n$ be an integer. A vector $\mv p = \langle p_1,
\dots, p_k \rangle$ for which $p_i \in \mathbb N \cup \{*\}$ is
called a {\em pattern}.  The number $k$ is the {\em length} of the
pattern.  If $p_i \in \mathbb N$ then it is an {\em equality
  component} of the pattern, while $*$ is a {\em wildcard component}
of the pattern.

A node $u$ at depth $k$ in the tree $\ctree$ can be identified by a 
pattern of length $k$ corresponding naturally to the labels on the path from 
the root of $\ctree$ down to node $u$. 
The pattern for node $u$ is denoted by $P(u)$.
In particular, $P(\text{root}) = \epsilon$, the empty pattern.

Let $\mv p = \langle p_1, \dots, p_k\rangle$ be a pattern. Then, a 
{\em specialization} of $\mv p$ is another pattern $\mv p' = \langle p'_1,
\dots, p'_k\rangle$ {\em of the same length} 
for which $p'_i = p_i$ whenever $p_i \in \mathbb N$. 
In other words, we can get a specialization of $\mv p$ by changing some of the $*$ 
components into equality components.
If $\mv p'$ is a specialization of $\mv p$, then $\mv p$ is a 
{\em generalization} of $\mv p'$.
For two nodes $u$ and $v$ of the \cds, if $P(u)$ is a specialization of $P(v)$, 
then we also say that node $u$ is a specialization of node $v$.

The specialization relation defines a partially ordered set.  When
$\mv p'$ is a specialization of $\mv p$, we write $\mv p' \preceq \mv
p$.  If in addition we know $\mv p' \neq \mv p$, then we write $\mv p'
\prec \mv p$. 

Let $G(t_1, \dots, t_i)$ be the {\em principal filter} generated by
$(t_1,\dots,t_i)$ in this partial order, i.e., it is the set of all nodes $u$ of
the \cds such that $P(u)$ is a generalization of $\langle t_1, \dots,
t_i \rangle$ and that $u.\intv \neq \emptyset$.  The key property of
constraints that we exploit is summarized by the following
proposition.

\begin{prop}
Using the notation above, for a $\beta$-acyclic query, there exists a
GAO such that for each $t_1,\dots,t_i$ the principal filter
$G(t_1,\dots,t_i)$ is a chain.
\label{prop:chain}
\end{prop}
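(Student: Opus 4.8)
The plan is to connect the combinatorial structure of the constraints inserted by Algorithm~\ref{algo:outer-algorithm} to the $\beta$-acyclicity of the query via a carefully chosen GAO. First I would recall that for a $\beta$-acyclic hypergraph there is an elimination order on the attributes such that, at each step, the attribute being eliminated is contained in a set of atoms that is linearly ordered by inclusion when restricted to the remaining attributes (a so-called \emph{nest point}); this is the standard characterization of $\beta$-acyclicity via the $\beta$-elimination (or "nest-set") ordering. I would take the GAO $A_1, \dots, A_n$ to be the \emph{reverse} of such an elimination order, so that $A_i$ is a nest point of the subhypergraph induced by $\{A_1, \dots, A_i\}$. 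The point of this choice is to control, for each prefix $(t_1, \dots, t_i)$, which patterns $P(u)$ of length $i$ can possibly have $u.\intv \neq \emptyset$.

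The key structural step is to characterize the patterns that actually arise. Observe from line~\ref{line:additional-constraint} of Algorithm~\ref{algo:outer-algorithm} that every constraint inserted into the \cds has the form $\langle R[\dots], \dots, R[\dots], (\,\cdot\,,\,\cdot\,), \{*\}^{n-i}\rangle$ for a single relation $R$: the equality-component prefix is a sequence of index-tuple values \emph{all drawn from the same atom} $R$, and the interval sits on attribute $A_i$, which must therefore be an attribute of $R$. Moreover the positions among $A_1, \dots, A_{i-1}$ that carry equality components (as opposed to $*$) are exactly the attributes of $R$ among $A_1, \dots, A_{i-1}$; the pattern $P(u)$ of such a node is thus determined (in its "shape" of stars vs.\ equalities) by the atom $R$ that generated it. I would make this precise as a lemma: every node $u$ of $\ctree$ with $u.\intv \neq \emptyset$ has $P(u)$ whose non-$*$ positions coincide with $\bar A(R) \cap \{A_1,\dots,A_{|P(u)|}\}$ for some atom $R$ containing the attribute at depth $|P(u)| \,(= i$ here, once we're below the interval). (The inferred constraints added during memoization, as in Example~\ref{ex:memoization}, must be handled too, but they only \emph{generalize} existing patterns — turn equalities into stars — so they do not create new incomparable shapes.)

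Now fix $t_1, \dots, t_i$ and take two nodes $u, v \in G(t_1,\dots,t_i)$. Both $P(u)$ and $P(v)$ are generalizations of $\langle t_1,\dots,t_i\rangle$, so on every position they are either $*$ or agree with $t_j$; hence to show $P(u) \preceq P(v)$ or $P(v) \preceq P(u)$ it suffices to show that the \emph{sets of non-$*$ positions} are nested. By the lemma these sets are $\bar A(R_u)\cap\{A_1,\dots,A_i\}$ and $\bar A(R_v)\cap\{A_1,\dots,A_i\}$ for atoms $R_u, R_v$ each containing the depth-$i$ attribute $A_i$ (or, more carefully, the attribute immediately following the pattern in each case — one must check the indexing). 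Because the GAO makes $A_i$ a nest point of the induced subhypergraph on $\{A_1,\dots,A_i\}$, the atoms containing $A_i$ are linearly ordered by inclusion on this attribute set — which is exactly the statement that $\bar A(R_u)\cap\{A_1,\dots,A_i\}$ and $\bar A(R_v)\cap\{A_1,\dots,A_i\}$ are comparable. Therefore $P(u)$ and $P(v)$ are comparable in $\preceq$, so $G(t_1,\dots,t_i)$ is a chain.

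I expect the main obstacle to be the bookkeeping in the structural lemma — precisely pinning down the shape of $P(u)$ as a function of the generating atom, including the off-by-one issue of whether $A_i$ is the interval attribute or a pattern attribute, and confirming that the memoized (inferred) constraints never violate the shape invariant. A secondary subtlety is verifying that the relevant nest-point property is about the subhypergraph induced on the GAO-\emph{prefix} $\{A_1,\dots,A_i\}$ and that our reverse-elimination GAO delivers exactly that; this is where $\beta$-acyclicity (as opposed to mere $\alpha$-acyclicity) is essential, since we need the nest-point condition to survive in \emph{every} induced subhypergraph along the order.
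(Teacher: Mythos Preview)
Your proposal is correct and follows essentially the same approach as the paper: choose the GAO to be a \emph{nested elimination order} (the Brouwer--Kolen nest-point order, exactly as you describe), observe that every pattern $P(u)$ appearing in $G(t_1,\dots,t_i)$ has its non-$*$ positions equal to $\bar A(R)\cap\{A_1,\dots,A_i\}$ for some atom $R$ containing $A_{i+1}$, and conclude that $G$ embeds into the prefix poset $\calP_{i+1}$, which is a chain. The paper's own proof is terser---it simply asserts that $G$ is isomorphic to a sub-poset of $\calP_{i+1}$ and cites Proposition~\ref{prop:beta-gao}---while you spell out the structural lemma about pattern shapes explicitly; but the content is the same.

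Two small corrections to your write-up. First, the interval of a length-$i$ pattern sits on $A_{i+1}$, not $A_i$, so the nest-point condition you need is for $A_{i+1}$ in the subhypergraph on $\{A_1,\dots,A_{i+1}\}$ (you flagged this off-by-one yourself). Second, your description of the inferred constraints as ``turning equalities into stars'' is not quite right: the constraint inserted in line~\ref{line:new-constraint-beta} of Algorithm~\ref{algo:ctree-nextvalue-beta} reuses an existing pattern $P(u)$ verbatim, while the backtracking constraint in line~\ref{line:backtrack} of Algorithm~\ref{algo:getpp-beta} is a \emph{truncation} of the bottom pattern $\bar{\mv p}$ to length $i_0-1$. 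The truncation case needs a short check---that $\bar{\mv p}$ restricted to $\{A_1,\dots,A_{i_0-1}\}$ is again of the form $\bar A(R)\cap\{A_1,\dots,A_{i_0-1}\}$ for some $R\ni A_{i_0}$---which goes through because in the $\beta$-acyclic case the bottom of each $\calP_k$ is itself an atom restriction (no fill-in), and $A_{i_0}$ is by definition the largest index in that set. The paper handles this the same way, noting that for $\beta$-acyclic queries the backtracking patterns coincide with input-generated patterns.
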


Recall that a chain is a totally ordered set. In particular, $G =
G(t_1,\dots,t_i)$ has a smallest pattern $\bar{\mv p}$ (or bottom
pattern).  Note that these patterns in $G$ might come from constraints
inserted from relations, constraints inserted by the outputs of the
join, or even constraints inserted due to backtracking. Thinking of
the constraints geometrically, this condition means that the
constraints form a collection of axis-aligned affine subspaces of
$\outspace$ where one is contained inside another.

In Appendix~\ref{app:sec:beta-acyclic}, we prove
Proposition~\ref{prop:chain} using a result of Brouwer and
Kolen~\cite{brouwer-kolen-1980}. The class of GAOs in the proposition is called
a {\em nested elimination order}. We show that there exists a GAO that
is a nested elimination order if and only if the query is
$\beta$-acyclic. We also show that $\beta$-acyclicity and this GAO can
be found in polynomial time.

\subsection{The $\getpp$ Algorithm} Algorithm~\ref{algo:getpp-beta} 
describes $\getpp$ algorithm specialized to $\beta$-acyclic
queries. In turn, this algorithm uses Algorithm~\ref{algo:ctree-nextvalue-beta}, 
which is responsible for efficiently inferring constraints imposed by patterns 
above this level.  We walk through the steps of the algorithm below.

Initially, let $v$ be the root node of the \cds.  We set $t_1 =
v.\intv.\nxt(-1)$. This is the smallest value $t_1$ that
does not belong to any interval stored in $v.\intv$. We work under the
implicit assumption that any interval inserted into $\ctree$ that
contains $-1$ must be of the form $(-\infty, r)$, for some $r \geq
0$. This is because the domain values are in $\mathbb N$.  In
particular, if $t_1 = +\infty$ then the constraints cover the entire
output space $\outspace$ and $\NULL$ can be returned.

 Inductively, let $(t_1, \dots, t_i)$, $i \geq 1$, be the {\em prefix}
 of $\mv t$ we have built thus far. Our goal is to compute
 $t_{i+1}$. What we need to find is a value $t_{i+1}$ such that
 $t_{i+1}$ does not belong to the intervals stored in $u.\intv$ {\em
   for every} node $u \in G(t_1,\dots,t_i)$. For this, we call
 algorithm~\ref{algo:ctree-nextvalue-beta} that uses
 Prop.~\ref{prop:chain} to efficiently find $t_{i+1}$ or return that
 there is no such $t_{i+1}$. We defer its explanation for the
 moment. We only note that if such a $t_{i+1}$ cannot be found
 (i.e. if $t_{i+1} = +\infty$ is returned after the search), then we
 have to {\em backtrack} because what that means is that every tuple
 $\mv t$ that begins with the prefix $(t_1,\dots,t_i)$ satisfies some
 constraint stored in $\ctree$.  Line \ref{line:backtrack} of
 Algorithm~\ref{algo:getpp-beta} shows how we backtrack. In
 particular, we save this information (by inserting a new constraint
 into the CDS) in Line~\ref{line:backtrack} to avoid ever
 exploring this path again.

\paragraph*{Next Chain Value.} 
The key to Algorithm~\ref{algo:ctree-nextvalue-beta} 
is that such a $t_{i+1}$ can be found efficiently since one only needs
to look through a chain of constraint sets. We write $\mv p \precdot \mv p'$
if $\mv p \prec \mv p'$ and there is no pattern $\mv p''$ such that
$\mv p \prec \mv p'' \prec \mv p'$. 
Every interval from a node $u\in G$ higher up in the chain infers 
an interval at a node lower in the chain. For instance, in
Example~\ref{ex:memoization}, the chain $G$ consists of three nodes
$\langle a, b\rangle$, $\langle *, b\rangle$, and $\langle *, *\rangle$.
Further, every constraint of the form $\langle *, *, (2i-1,2i+1)\rangle$
infers a more specialized constraint of the form 
$\langle *, b, (2i-1,2i+1)\rangle$,
which in turns infers a constraint of the form 
$\langle a, b, (2i-1,2i+1) \rangle$.
Hence, if we infer every single constraint downward from the top pattern
to the bottom pattern, we will be spending a lot of time. 
The idea of Algorithm~\ref{algo:ctree-nextvalue-beta} is to infer as large
of an interval as possible from a node higher in the chain before specializing
it down.
Our algorithm will ensure that whenever we infer a new constraint 
(line~\ref{line:new-constraint-beta} 
of 
Algorithm~\ref{algo:ctree-nextvalue-beta}), this constraint subsumes an old
constraint which will never be charged again in a future inference.

\begin{algorithm}[t]
\caption{$\cds.\getpp()$ for $\beta$-acyclic queries}
\label{algo:getpp-beta}
\begin{algorithmic}[1]
\Require{A $\ctree$ $\cds$}
\Statex

\State $i \gets 0$
\While {$i < n$}
  \State {\small $G \gets \setof{u \in \cds}{(t_1,\dots,t_i) \preceq P(u) \text{ and }
  u.\intv \neq \emptyset}$}
  \If {($G = \emptyset$)} 
    \State $t_{i+1} \gets -1$ \label{line:beta-Gempty}
    \State $i \gets i+1$
  \Else
    \State Let $\bar{\mv p} = \langle \bar p_1, \dots, \bar p_i \rangle$ be the
    {\em bottom} of $G$
    \State Let $\bar u \in \cds$ be the node for which $P(\bar u) = \bar{\mv p}$
    \State $t_{i+1} \gets \cds.\nextvalue(-1, \bar u, G)$
    \label{line:initial-call-beta}
    \State $i_0 \gets \max\{ k \suchthat k \leq i, \bar p_k \neq *\}$
    \If {($t_{i+1} = +\infty$) and $i_0=0$}
      \State \Return \NULL \Comment{No tuple $\mv t$ found}
    \ElsIf {($t_{i+1} = +\infty$)} 
      \State {\small $\cds.\insconst(\langle \bar p_1,\dots, \bar p_{i_0-1}, (\bar
              p_{i_0}-1, \bar p_{i_0}+1) \rangle)$} 
      \label{line:backtrack}
      \State $i \gets i_0-1$ \Comment{Back-track} 
    \Else
      \State $i \gets i+1$ \Comment{Advance $i$}
    \EndIf
  \EndIf
\EndWhile
\State \Return $\mv t = (t_1,\dots,t_n)$
\end{algorithmic}
\end{algorithm}

\begin{algorithm}[t]
\caption{$\cds.\nextvalue(x, u, G)$, where $G$ is a chain}
\label{algo:ctree-nextvalue-beta}
\begin{algorithmic}[1]
\Require{A $\ctree$ $\cds$, a node $u \in G$}
\Require{A chain $G$ of nodes, and a starting value $x$}
\Ensure{the smallest value $y\geq x$ not covered by {\em any}
$v.\intv$, for all $v \in G$ such that $P(u) \preceq P(v)$}
\Statex
\If {there is no $v \in G$ for which $P(u) \precdot P(v)$} \Comment{At the top
of the chain $G$}
\label{step:nextval-end-beta}
  \State \Return $u.\intv.\nxt(x)$ \label{line:nextvalue-basecase}
\Else
  \State $y \gets x$
  \Repeat
    \State Let $v\in G$ such that $P(u) \precdot P(v)$ 
    \State \Comment{Next node up the
    chain}
    \State $z \gets \cds.\nextvalue(y, v, G)$ \label{line:z-beta}
    \State \Comment{first ``free value'' $\geq y$ at all nodes up the chain}
    \State $y \gets u.\intv.\nxt(z)$ \label{line:y-beta}
    \State \Comment{first ``free value'' $\geq z$ at $u$}
  \Until {$y = z$}
  \State $\cds.\insconst(\langle P(u), (x-1,y)\rangle)$ 
  \label{line:new-constraint-beta}
  \State \Return $y$
\EndIf
\end{algorithmic}
\end{algorithm}

\subsection{Runtime Analysis} 

The proofs of the following main results are in 
Appendix~\ref{app:sec:beta-acyclic}.

\begin{lmm}
Suppose the input query $Q$ is $\beta$-acyclic. Then, there exists a
GAO such that each of the two operations $\getpp$ and $\insconst$ of
$\ctree$ takes amortized time $O(n2^n\log W)$, where $W$ is the total
number of constraints ever inserted.
\label{lmm:amortized-analysis-acyclic-Q}
\end{lmm}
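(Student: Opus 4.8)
The statement to prove is Lemma~\ref{lmm:amortized-analysis-acyclic-Q}: assuming $Q$ is $\beta$-acyclic, there is a GAO (a nested elimination order, via Proposition~\ref{prop:chain}) under which both $\getpp$ and $\insconst$ on the $\ctree$ run in amortized time $O(n2^n\log W)$. The plan is to fix the nested elimination order guaranteed by Proposition~\ref{prop:chain}, so that for every prefix $(t_1,\dots,t_i)$ the principal filter $G(t_1,\dots,t_i)$ is a chain. Then the bulk of the argument is an amortized (potential-function / charging) analysis of the interleaved sequence of $\insconst$ and $\getpp$ calls. For $\insconst$ itself, Proposition~\ref{prop:insert-ctree} already gives amortized $O(n\log W)$, so the real content is $\getpp$, specifically the recursion in Algorithm~\ref{algo:ctree-nextvalue-beta} and the backtracking loop in Algorithm~\ref{algo:getpp-beta}.

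\textbf{Key steps.} First I would set up the accounting for a single top-level call to $\getpp$. Each iteration of the \textbf{while} loop either advances $i$, backtracks, or returns; the potentially expensive work inside one iteration is the call to $\cds.\nextvalue(-1,\bar u, G)$ plus $O(\log W)$ bookkeeping (locating $G$'s bottom node $\bar u$, the $\nxt$ queries, and one possible $\insconst$ on Line~\ref{line:backtrack}). So it suffices to bound (a) the number of while-iterations summed over the whole execution, and (b) the amortized cost of each $\nextvalue$ call. For (b): $\nextvalue$ recurses up the chain $G$; at each level it runs a \textbf{repeat} loop alternating $\nxt$ queries at node $u$ and recursive calls one level up, until the value stabilizes ($y=z$). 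The crucial invariant — stated in the paper's prose preceding the algorithm — is that each time we \emph{do} extra work in the \textbf{repeat} loop (i.e. the loop runs more than once, or we reach Line~\ref{line:new-constraint-beta} and insert $\langle P(u),(x-1,y)\rangle$), the newly inferred interval strictly subsumes an interval that was previously charged at a \emph{strictly higher} node in the chain, and that higher interval will never be charged again. I would make this precise by assigning $O(n\log W)$ credits to every interval at the moment it is inserted, and showing each unit of ``loop iteration'' work in $\nextvalue$ is paid for by destroying (merging away) a previously-credited interval one level up. Since the chain has length at most $n$ and each constraint-insertion deposits a bounded number of credits, the total $\nextvalue$ work over the whole run is $O(W \cdot n \cdot \log W)$ plus the per-call ``base case'' cost (Line~\ref{line:nextvalue-basecase}), i.e. one $\nxt$ query at the top of the chain per $\getpp$ call.

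\textbf{Bounding the number of iterations.} For (a), the number of ``advance'' steps over a single $\getpp$ call is at most $n$, and each backtrack on Line~\ref{line:backtrack} inserts a constraint that, combined with the chain structure, guarantees the exact subtree just explored is never revisited — so backtracks over the whole execution are charged to distinct inserted constraints. The extra $2^n$ factor in the claimed bound comes from the fact that, within one while-iteration at depth $i$, the set $G(t_1,\dots,t_i)$ ranges over generalizations of $\langle t_1,\dots,t_i\rangle$, of which there are up to $2^i \le 2^n$ many (each coordinate kept or replaced by $*$); walking/maintaining the chain and doing the $\nxt$ queries across these incurs the $2^n$ blowup. Multiplying the $n$ from the chain length, $2^n$ from the generalization count, and $\log W$ from each $\nxt$/tree operation gives the stated amortized $O(n2^n\log W)$ per operation.

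\textbf{Main obstacle.} The hard part will be rigorously justifying the ``subsumes an old constraint which will never be charged again'' invariant for $\nextvalue$ — i.e. proving that the \textbf{repeat} loop in Algorithm~\ref{algo:ctree-nextvalue-beta} performs amortized $O(1)$ extra iterations per level once we account for the memoized constraints inserted on Line~\ref{line:new-constraint-beta}. This requires (i) arguing correctness of $\nextvalue$'s output (that $y$ is indeed the smallest free value $\ge x$ across all $v\in G$ with $P(u)\preceq P(v)$), which leans essentially on $G$ being a chain (Proposition~\ref{prop:chain}) so that ``free at all higher nodes'' is well-defined by a single upward path, and (ii) setting up a potential function that decreases by the amount of work done whenever the loop iterates, by charging to the interval at the parent node $v$ that gets shrunk/absorbed. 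Getting the potential to telescope correctly across the mutual recursion, and checking that $\insconst$ on Line~\ref{line:new-constraint-beta} and Line~\ref{line:backtrack} only pays the $O(n\log W)$ already budgeted (rather than triggering a cascade), is the delicate bookkeeping; the rest is routine once Proposition~\ref{prop:chain} and Proposition~\ref{prop:insert-ctree} are in hand. I would carry this out by induction on the height of $u$ in the chain $G$.
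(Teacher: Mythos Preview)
Your overall architecture is right --- fix a nested elimination order so that each $G(t_1,\dots,t_i)$ is a chain (Proposition~\ref{prop:chain}), then run a credit/potential argument on the mutual recursion in $\nextvalue$, proved by induction up the chain --- and that is exactly what the paper does. But the accounting you sketch does not close, for two linked reasons.

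First, the credit amount is too small. You propose to deposit $O(n\log W)$ credits on each inserted interval and to pay for each extra \textbf{repeat}-loop iteration at node $u$ by ``destroying a previously-credited interval one level up.'' In fact the interval consumed when $y\neq z$ is at the \emph{current} node $u_j$ (it is the interval $(\ell,y)\in u_j.\intv$ containing $z$), and the cost it must cover is not $O(\log W)$ but the full cost of the recursive call to $u_{j-1}$ on Line~\ref{line:z-beta}. That recursive call itself may trigger further iterations at $u_{j-1}$, each of which triggers calls to $u_{j-2}$, and so on. The paper handles this by giving every interval at a node $u$ a reserve of $(2^{|P(u)|+1}-2)c\log W$ credits --- i.e., the credits scale \emph{exponentially} with the number of equality components in $P(u)$ --- and then proves by induction on $j$ that a call $\nextvalue(x,u_j,G)$ costs at most $(2^{|P(u_j)|+2}-3)c\log W$. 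The inequality $|P(u_j)|\ge |P(u_{j-1})|+1$ is exactly what makes the reserve at $u_j$ large enough to pay for one recursive call to $u_{j-1}$. With only $O(n\log W)$ credits per interval the recursion does not telescope: the Line~\ref{line:new-constraint-beta} insertion creates a fresh interval at $u_j$ that can itself be consumed later (paying for another full call to $u_{j-1}$), so the number of calls to $u_{j-1}$ can be roughly twice the number of calls to $u_j$ plus the externally-inserted intervals at $u_j$, and unrolling this gives a $2^{k}$ factor where $k$ is the chain length.

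Second, and consequently, your explanation of where the $2^n$ factor comes from is off. It is \emph{not} that one must ``walk $2^i\le 2^n$ generalizations'': since $G$ is a chain, it has at most $i+1\le n$ elements, and walking it is linear in $n$. The $2^n$ is exactly the geometric blowup in the recursion above. Correspondingly, the paper gives each $\insconst$ not just $O(n\log W)$ credits for the insertion itself, but an additional $O(n2^n\log W)$ credits: $(2^n-2)c\log W$ to seed the interval-credit invariant on the new interval, plus $5\cdot 2^n c\log W$ on each equality/wildcard component preceding the interval component. Those per-component stashes are what fund backtracking: when Line~\ref{line:backtrack} fires and we jump from depth $i$ back to $i_0-1$, the stored credits on $\bar p_{i_0},\dots,\bar p_i$ are released to (i) pay for the new constraint inserted there and (ii) refill the depth-$i'$ budgets for $i_0-1\le i'\le i$ that were already spent on the way down. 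Your one-line ``backtracks are charged to distinct inserted constraints'' gestures at this but does not supply the credits needed to revisit depths $i_0,\dots,i$ later with a different prefix.

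In short: keep your induction-on-chain-height plan, but replace the $O(n\log W)$-per-interval potential with one that grows like $2^{|P(u)|}\log W$, and locate the $2^n$ in the recursion rather than in the size of $G$.
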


The above lemma and Theorem~\ref{thm:analyze-outer-algorithm} leads directly 
to one of our main results.

\begin{cor}[Restatement of Theorem~\ref{thm:io-beta-neo}]
Suppose the input query is $\beta$-acyclic then there exists a GAO
such that $\ms$ computes its output in time
\[ O\left( 2^nm^2n\left(4^r|\cert| + Z \right) \log N \right). \]
In particular, its data-complexity runtime is essentially optimal
in the certificate complexity world: $\tilde O(|\cert|+Z)$.
\label{cor:thm:io-beta-neo}
\end{cor}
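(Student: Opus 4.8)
The plan is to obtain this corollary as a direct composition of Lemma~\ref{lmm:amortized-analysis-acyclic-Q} with Theorem~\ref{thm:analyze-outer-algorithm}; once those two results are invoked, what is left is bookkeeping. First I would fix the GAO. Since $Q$ is $\beta$-acyclic, the discussion following Proposition~\ref{prop:chain} --- which rests on the characterization of $\beta$-acyclicity via nested elimination orders, a consequence of the result of Brouwer and Kolen~\cite{brouwer-kolen-1980} --- supplies a GAO for which Proposition~\ref{prop:chain} holds, i.e.\ every principal filter $G(t_1,\dots,t_i)$ in the \ctree\ is a chain. It is important here that Theorem~\ref{thm:analyze-outer-algorithm} is \emph{GAO-agnostic}: it holds for any GAO consistent with the indexes on the input relations. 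Hence the single GAO just chosen simultaneously validates both ingredients, and there is no conflict.

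With the GAO fixed, I would substitute the quantitative bounds. Theorem~\ref{thm:analyze-outer-algorithm} says that outside the \cds\ the algorithm spends $O\bigl((4^r|\cert| + rZ)m\log N\bigr)$ time, that it issues $O(m4^r|\cert| + Z)$ calls to $\insconst$, and that it issues $O(2^r|\cert| + Z)$ calls to $\getpp()$; of course $\getpp$ also internally issues $\insconst$ calls (during backtracking on line~\ref{line:backtrack} of Algorithm~\ref{algo:getpp-beta}, and during the recursion of Algorithm~\ref{algo:ctree-nextvalue-beta}), but these are likewise counted and bounded inside the amortized analysis of Lemma~\ref{lmm:amortized-analysis-acyclic-Q}. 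By that lemma every \ctree\ operation costs amortized $O(n2^n\log W)$, where $W$ is the total number of constraints ever inserted; since $W$ is polynomially bounded in the input plus output size (each inserted constraint being determined by a bounded number of input values, with $|\cert|\le rN$ and $Z\le N^m$), we have $\log W = O(m\log N)$, and the stray factor of $m$ is comfortably inside the slack between $n2^n$ times the operation count and the stated bound. Multiplying the number of \cds\ operations by the amortized per-operation cost, bounding $r \le n$ and every other polynomial factor crudely, and adding the non-\cds\ term, yields
\[
T(\cds) + O\bigl((4^r|\cert| + rZ)m\log N\bigr) \;=\; O\bigl(2^n m^2 n\,(4^r|\cert| + Z)\log N\bigr),
\]
which is exactly Theorem~\ref{thm:io-beta-neo}. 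The $\tilde O(|\cert|+Z)$ conclusion then follows immediately, since $n,m,r$ are functions of the query only.

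The step that carries the actual difficulty is hidden inside Lemma~\ref{lmm:amortized-analysis-acyclic-Q}, and is invoked rather than reproved here. A single $\getpp()$ call is \emph{not} cheap in the worst case: it can set off a long cascade of recursive $\nextvalue$ calls climbing the chain, together with a sequence of backtracking steps, each inserting a fresh constraint, so its cost cannot be bounded locally. The argument must instead be a potential/amortization argument leaning on the chain property of Proposition~\ref{prop:chain}: because $G(t_1,\dots,t_i)$ is totally ordered, each constraint inferred on line~\ref{line:new-constraint-beta} of Algorithm~\ref{algo:ctree-nextvalue-beta} \emph{subsumes} a strictly lower constraint in the chain, which therefore can never again be charged by a future inference (this is the point of the discussion preceding Algorithm~\ref{algo:ctree-nextvalue-beta}), while each backtracking constraint on line~\ref{line:backtrack} permanently prunes a subtree of the \ctree\ from future exploration. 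Turning this into a clean amortized bound --- showing that the internally generated $\insconst$ count is within an $O(n2^n)$ factor of the external operation count, that $W$ stays polynomial, and that each $\getpp$/$\insconst$ touches only $O(n2^n)$ \ctree\ nodes per unit of potential released --- is the crux, and it is where essentially all the work goes; I would carry it out in the appendix (as the paper does) and merely cite its statement at this point.
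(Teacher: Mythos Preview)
Your proposal is correct and follows exactly the paper's approach: the corollary is obtained by combining Theorem~\ref{thm:analyze-outer-algorithm} with Lemma~\ref{lmm:amortized-analysis-acyclic-Q}, where the GAO is supplied by the nested-elimination-order characterization of $\beta$-acyclicity. Your bookkeeping on $\log W = O(m\log N)$ and the absorption of stray factors into the stated bound is correct, and your high-level description of the credit/potential argument underlying Lemma~\ref{lmm:amortized-analysis-acyclic-Q} (chain structure ensures each inferred constraint subsumes one already charged, backtracking constraints permanently prune) matches the paper's appendix proof.
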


Beyond $\beta$-acyclic queries, we show that we cannot do better modulo a
well-known complexity theoretic assumption.

\bprop[Re-statement of Proposition~\ref{prop:no-io-non-beta}]
\label{restatement:prop:no-io-non-beta}
Unless the $3$\textsf{SUM} problem can be solved in sub-quadratic
time, for any $\beta$-{\em cyclic} query $Q$ in any GAO, there does
not exist an algorithm that runs in time $O(|\cert|^{4/3-\eps} + Z)$
for any $\eps>0$ on all instances.  
\eprop

\paragraph*{Comparison with Worst-Case Optimal Algorithms}
It is natural to wonder if Yannakakis' worst-case optimal algorithm
for $\alpha$-acyclic queries or the worst-case optimal algorithms
of~\cite{DBLP:conf/pods/NgoPRR12} (henceforth, \nprr)
or~\cite{DBLP:journals/corr/abs-1210-0481} (henceforth \lb) can
achieve runtimes of $\tilde{O}(|\cert|+Z)$ for $\beta$-acyclic
queries. We outline the intuition about why this cannot be the case.

Yannakakis' algorithm performs pairwise semijoin reducers.
If we pick an instance where $|\cert|=o(N)$ such that there is a relation pair
involved each with size $\Omega(N)$, then Yannakakis's algorithm will
exceed the bound. For $\nprr$ and $\lb$, consider the family of
instances in which one computes all paths of length $\ell$ (some constant)
in a directed graph $G=(V,E)$ (this can be realized by a ``path" query of length
$\ell$ where the relations are the edge set of $G$). Now consider the
case where the longest path in $G$ has size at most $\ell-1$. In this
case the output is empty and since each relation is $E$, we have
$|\cert|\le O(|E|)$ and by Corollary~\ref{cor:thm:io-beta-neo}, we
will run in time $\tilde{O}(|E|)$. Hence, when $G$ has many
paths (at least $\omega(|E|)$) of length at most $\ell$, 
then both $\nprr$ and $\lb$ will have to explore all $\omega(|E|)$ paths
leading to an $\omega(|\cert|)$ runtime.

Appendix~\ref{app:sec:counter-examples} presents a rich family of
$\beta$-acyclic queries and a family of instances that combines both
of the ideas above to show that all the three worst-case optimal
algorithms can have arbitrarily worse runtime than \ms. In particular,
even running those worst-case algorithms in parallel
is not able to achieve the certificate-based guarantees.

\section{Extensions}
\label{sec:tw}

We extend in two ways: queries with bounded tree width and we describe
faster algorithms for the triangle query.

\subsection{Queries with bounded tree-width}
While Proposition~\ref{prop:no-io-non-beta} shows that $O(|\cert|^{4/3-\eps}+Z)$-time
is not achievable for $\beta$-cyclic queries, we are able to show the following
analog of the treewidth-based runtime under the traditional worst-case
complexity notion \cite{journals/ai/DechterP89, MR985145}.

\bthm[\ms for bounded treewidth queries]
\label{thm:tw}
Suppose the GAO has {\em elimination width} bounded by $w$,
Then, $\ms$ runs in time
\[ O\left( m^3n^34^n \left( nm^{w+1} 8^{n(w+1)} |\cert|^{w+1} + Z\right)       
     \log N \right).
\]
In particular, if we ignore the dependence on the query size, the runtime is
$\tilde O \left( |\cert|^{w+1} + Z \right)$.
Further, if the input query $Q$ has treewidth bounded by $w$, then there
exists a GAO for which $\ms$ runs in the above time.
\ethm

The overall structure of the algorithm remains identical to the
$\beta$-acyclic case, the only change is in $\getpp$.  The $\getpp$
algorithm for general queries remains very similar in structure to
that of the $\beta$-acyclic case (Algorithm~\ref{algo:getpp-beta}),
and if the input query is $\beta$-acyclic (with a nested elimination
order as the GAO), then the general $\getpp$ algorithm is {\em
  exactly} Algorithm~\ref{algo:getpp-beta}.  The new issue we have to
deal with is the fact that the poset $G$ at each depth is not
necessarily a chain.  Our solution is simple: we mimic the behavior of
Algorithm~\ref{algo:getpp-beta} on a shadow of $G$ that {\em is} a
chain and make use of both the algorithm and the analysis for the
$\beta$-acyclic case.
%
Appendix~\ref{app:sec:tw} contains all the algorithm details, and the proofs
of the above theorem, along with the following negative result.

It is natural to wonder if Theorem~\ref{thm:tw} is tight. In addition to the
obvious $\Omega(Z)$ dependency, the next result indicates that the dependence 
on $w$ also cannot be avoided, {\em even if} we just look at the class of
$\alpha$-acyclic queries. 

\bprop
\label{prop:tw-eth}
Unless the {\em exponential time hypothesis} is false, for every large
enough constant $k>0$, there is an $\alpha$-acyclic query $Q_k$ for
which there is no algorithm with runtime $|\cert|^{o(k)}$. Further,
$Q_k$ has treewidth $k-1$.  \eprop

Our analysis of \ms is off by at most $1$ in the exponent.
\bprop
\label{prop:ms-tw-lb}
For every $w\ge 2$, there exists an ($\alpha$-acyclic) query $Q_w$ with treewidth $w$ with the following property. For every possible global ordering of attributes, there exists an (infinite family of) instance on which the \ms\ algorithm takes $\Omega(|\cert|^w)$ time.
\eprop

\subsection{An implementation of \ms}
\begin{figure}
\centering
\begin{tabular}{|c||c|c||c|c||c|c||}
\hline
Query & \multicolumn{2}{c||}{com-Orkut} & \multicolumn{2}{c||}{soc-Epinions1}
& \multicolumn{2}{c||}{soc-LiveJournal1}
\\
\cline{2-7}
& $N$ & $|\cert|$ & $N$ & $|\cert|$ & $N$ & $|\cert|$ \\
\hline
Star & 352M & 214K & 1.5M & 1,067 & 207M & 172K \\
$3$-path & 352M & 119K & 1.5M & 842  & 207M & 138K  \\
Tree & 469M & 2.8M &  2M & 3,441 &  276M &  2.7M \\
\hline
\end{tabular}
\caption{Input size ($N$) versus Certificate size ($|\cert|$). Units are Million(M) and Thousand(K). The
  three graph datasets are from Orkut, Epinions,
  and LiveJournal network
  \url{http://snap.stanford.edu/data/}.}
\label{fig:real:CertSize}
\end{figure}

With the help of LogicBlox, we implemented \ms inside the LogicBlox
engine. Our results are preliminary: it is implemented for main memory
data and all experiments are run in a multi-threaded mode. We run
three queries: a star query, a small path query, and a tree query,
which are described below, on three data sets Orkut online social
network, Who-trusts-whom network of Epinions.com, and LiveJournal
online social network.
\begin{itemize}
\item Star query: $Q = R_1(A) \Join S(A,B) \Join S(A,C) \Join S(A,D) \Join R_2(B) \Join R_3(C) \Join R_4(D)$.
\item $3$-path query: $Q = S(A,B) \Join S(B,C) \Join S(C,D) \Join R_5(A) \Join R_6(B) \Join R_7(C) \Join R_8(D)$.
\item Tree query: $Q = S(A,B) \Join S(B,C) \Join S(B,D) \Join S(D,E) \Join R_9(A) \Join R_{10}(C) \Join R_{11}(D) \Join R_{12}(E)$. 
\end{itemize}

For each query and each dataset, relation $S$ is a graph dataset,
while every $R_i$ relation contains a subset of vertices from that
graph dataset, where every vertex is chosen with a probability
$0.001$.  Figure~\ref{fig:real:CertSize} shows the input size versus
certificate size on different queries and different graph
datasets. The certificate size is measured by counting the number of
$\findgap$ operations during computing join queries. These numbers
show that certificate size is very small compared to input size and so
it indicates that a practical implementation might be obtained.

\subsection{The Triangle Query}

We consider the triangle query $Q_{\triangle}=R(A,B)\Join S(B,C)\Join
T(A,C)$ that can be viewed as enumerating triangles in a given
graph. Using the \cds described so far, \ms computes this query in
time $\tilde{O}(|\cert|^2+Z)$, and this analysis is tight.\footnote{A
  straightforward application of our more general analysis given in
  Theorem~\ref{thm:tw}, which gives $\tilde O(|\cert|^3+Z)$.} The
central inefficiency is that the $\cds$ wastes time determining that
many tuples with the same prefix $(a,b)$ have been ruled out by
existing constraints. In particular, the $\cds$ considers all possible
pairs $(a,b)$ (of which there can be $\Omega(|\cert|^2)$ of them). By
designing a smarter \cds, our improved \cds explores $O(|\cert|)$ such
pairs. We can prove the following result.
(The details are in Appendix~\ref{app:triang}.)
\begin{thm}
\label{thm:triang}
We can solve the triangle query, $Q_{\Delta}$ in time
$O\left(\left(|\cert|^{3/2}+Z\right)\log^{7/2}{N}\right)$.
\end{thm}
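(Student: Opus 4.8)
The plan is to improve the generic $\tilde{O}(|\cert|^2+Z)$ bound for $Q_{\triangle}$ down to $\tilde{O}(|\cert|^{3/2}+Z)$ by redesigning the \cds\ so that the outer algorithm never enumerates all $\Omega(|\cert|^2)$ prefixes $(a,b)$. First I would fix the GAO to be $A,B,C$ and recall how the generic analysis spends its time: for each prefix $(a,b)$ the \cds\ must certify, using constraints on the $C$-attribute coming from the three relations $R(A,B)$, $S(B,C)$, $T(A,C)$, either that $(a,b,\cdot)$ is entirely blocked or that it yields an output. The key combinatorial observation I would establish is that the number of prefixes $(a,b)$ for which the outer algorithm actually issues a $\getpp$ probe is $O(|\cert|)$ rather than $O(|\cert|^2)$: a prefix $(a,b)$ is ever visited only if $a$ survives constraints on the $A$-axis and $b$ survives the $B$-axis constraints given $a$, and each such ``survival'' event can be charged to a distinct comparison of the certificate (by the same perturbation argument used in Theorem~\ref{thm:analyze-outer-algorithm}: if some element defining the relevant gap were comparison-free in $\cert$, we could slide it to align with the probe point and contradict minimality). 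The geometry of a $3$-cycle is what makes this work --- the constraints on $C$ factor through at most the two relations $S$ and $T$ whose $B$- and $A$-coordinates are pinned, so the ``active'' $(a,b)$ pairs form essentially a union of $O(|\cert|)$ combinatorial rectangles and lines rather than a full grid.

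Next I would describe the smarter \cds. Rather than the plain $\ctree$, I would maintain, at the $C$-level, a two-dimensional structure indexed by the surviving $a$'s and surviving $b$'s, together with a sorted structure over the $C$-intervals contributed by $R[a,b]$ (the single value), by $S[b,*]$ and by $T[a,*]$. The crucial interface operation is: given the current prefix, in time $\tilde{O}(\sqrt{|\cert|})$, either produce a free $C$-value or report that the whole line $(a,b,\cdot)$ is blocked and, simultaneously, produce a large blocked region in the $(a,b)$-plane to be inserted as a higher-level constraint (the memoization idea of Example~\ref{ex:memoization}, one dimension up). The $\sqrt{\cdot}$ factor I expect to come from a standard interval-merging / heavy-light trick on the intervals contributed by $S(b,C)$ and $T(a,C)$: there are $\Theta(|\cert|)$ relevant $C$-endpoints in total, and by splitting the $b$'s (resp.\ $a$'s) into $\sqrt{|\cert|}$ buckets one balances the cost of ``re-deriving'' the $C$-blockage per $(a,b)$ against the cost of precomputing per bucket. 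Multiplying $O(|\cert|)$ visited prefixes by $\tilde{O}(\sqrt{|\cert|})$ per prefix, plus $\tilde{O}(1)$ per output tuple and the $\log N$ overhead of $\findgap$, gives the claimed $O\big((|\cert|^{3/2}+Z)\log^{7/2}N\big)$; I would track the $\log$ exponents carefully (one from $\findgap$, one or two from the nested sorted structures, one from the amortized merging, and a fractional one from the bucketing).

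The main steps, in order, would be: (1) set up the triangle-specific notation and the factorization of $C$-constraints through $S$ and $T$; (2) prove the $O(|\cert|)$ bound on the number of distinct prefixes $(a,b)$ the outer loop ever probes, via the perturbation/charging argument; (3) define the improved \cds\ and its $\insconst/\getpp$ interface, stating the per-prefix cost as a lemma; (4) prove that lemma by the bucketing/interval-merging argument, being careful that the memoized higher-level constraints are inserted so that each certificate comparison is charged $O(1)$ times overall; (5) assemble the pieces with Theorem~\ref{thm:analyze-outer-algorithm} (suitably re-run with the new \cds\ cost in place of $T(\cds)$) and bound the $\log$ factors to get $\log^{7/2}N$. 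I expect step (4) to be the main obstacle: making the amortization airtight requires that every interval we infer downward in the $(a,b)$-plane subsumes an older one that is never charged again, exactly as in the $\beta$-acyclic analysis, but now in two levels at once and interacting with the $\sqrt{|\cert|}$ bucketing; getting a clean potential function that simultaneously pays for re-derivation within a bucket and for cross-bucket merges is the delicate part, and it is also where the tightness of the $3/2$ exponent (matching the known $|\cert|^{3/2}$-type barriers for triangle enumeration) has to be respected so we do not accidentally claim something too strong.
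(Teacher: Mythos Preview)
Your high-level decomposition matches the paper's: you correctly isolate the key counting lemma that the number of ``live'' prefixes $(a,b)$ ever explored is $O(|\cert|)$ (this is Lemma~\ref{lem:Ba} in the appendix, proved by exactly the perturbation/charging argument you sketch). Where your plan diverges from the paper --- and where it has a real gap --- is the mechanism that turns $O(|\cert|)$ prefixes into $|\cert|^{3/2}$ total work.

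You aim for a \emph{uniform} $\tilde O(\sqrt{|\cert|})$ cost per prefix via a flat $\sqrt{|\cert|}$-bucketing of the $b$'s. The paper does \emph{not} get a uniform per-prefix bound; instead, its \cds\ maintains, for every dyadic interval $x$ of $B$-values, the intersection $I(*,=x)=\bigcap_{b\in x} I(*,=b)$ (kept consistent under inserts in amortized $O(\log^3 N)$, Proposition~\ref{prop:upward-maintain}), and its $\getpp$ walks this dyadic tree running a MERGE of $I(=a,*)$ against $I(*,=x)$ at each visited node. The cost at node $(a,x)$ is therefore $\min(|I(=a,*)|,|I(*,=x)|)$, which can be as large as $|\cert|$ for some nodes. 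The $3/2$ exponent comes \emph{only} at the very end, from the inequality
\[
\sum_{(a,x)\in J}\min(|I(=a,*)|,|I(*,=x)|)\ \le\ \sqrt{|J|\cdot \textstyle\sum_a |I(=a,*)|\cdot \sum_x |I(*,=x)|}\ \le\ |\cert|^{3/2}\sqrt{\log N},
\]
which is a direct application of Cauchy--Schwarz (Lemma~\ref{lem:chris-ineq}) with $|J|=\sum_a|B'(a)|=O(|\cert|\log N)$ and each $\ell_1$-norm bounded by $|\cert|$.

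Your flat bucketing does not obviously reproduce this. Precomputing per bucket still leaves you to combine $I(=a,*)$ with a per-bucket aggregate, and the cost of that combination is again a $\min$ of two list sizes that you have not bounded; without the hierarchical (dyadic) aggregation you lose the telescoping that makes $\sum_x |I(*,=x)|\le O(|\cert|\log N)$, and without the $\min$-then-Cauchy--Schwarz step you have no route from varying per-pair costs to a global $|\cert|^{3/2}$. In short: step~(2) of your plan is right and matches the paper; step~(3)/(4) needs to be replaced by the dyadic-intersection \cds\ plus the $\min$-cost MERGE and the Cauchy--Schwarz summation --- that, not bucketing, is where the $\sqrt{\cdot}$ actually comes from.
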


\section{Related Work}
\label{sec:related:work}

Our work touches on a few different areas, and we structure the related
work around each of these areas: join processing, certificates for set
intersection, and complexity measures that are finer than worst-case
complexity.

\subsection{Join Processing}
Many positive and negative results regarding conjunctive query
evaluation also apply to natural join evaluation. On the negative
side, both problems are $\mathsf{NP}$-hard in terms of expression
complexity \cite{DBLP:conf/stoc/ChandraM77}, but are easier in terms
of data complexity \cite{DBLP:conf/stoc/Vardi82} (when the query is
assumed to be of fixed size). They are $\mathsf{W}[1]$-complete and
thus unlikely to be fix-parameter tractable
\cite{DBLP:conf/pods/PapadimitriouY97,DBLP:conf/pods/Grohe01}.

On the positive side, a large class of conjunctive queries (and thus
natural join queries) are tractable. In particular, the classes of
acyclic queries and bounded treewidth queries can be evaluated
efficiently \cite{DBLP:conf/vldb/Yannakakis81,
  DBLP:journals/tcs/ChekuriR00, DBLP:journals/jcss/GottlobLS02,
  DBLP:journals/jacm/FlumFG02, DBLP:journals/jcss/Willard02}.  For
example, if $|q|$ is the query size, $N$ is the input size, and $Z$ is
the output size, then Yannakakis' algorithm can evaluate acyclic
natural join queries in time $\tilde O(\poly(|q|)(N\log N + Z))$.
Acyclic conjunctive queries can also be evaluated efficiently in the
I/O model \cite{DBLP:conf/pods/PaghP06}, and in the RAM model even
when there are inequalities \cite{DBLP:journals/jcss/Willard02}.  For
queries with treewidth $w$, it was recognized early on that a runtime
of about $\tilde O(N^{w+1} + Z)$ is
attainable~\cite{journals/ai/DechterP89,
  Freuder:1990:CKS:1865499.1865500}. our result strictly generalizes
these results. In Appendix~\ref{app:sec:counter-examples}, we show
that Yannakakis' algorithm does not meet our notion of certificate
optimality.

The notion of treewidth is loose for some queries. For instance, if we
replicate each attribute $x$ times for every attribute, then the
treewidth is inflated by a factor of $x$; but by considering all
duplicate attributes as one big compound attribute the runtime should
only be multiplied by a polynomial in $x$ and there should not be a
factor of $x$ in the exponent of the runtime.  Furthermore, there is
an inherent incompatibility between treewidth and acyclicity: an
acyclic query can have very large treewidth, yet is still tractable. A
series of papers~\cite{DBLP:journals/tcs/ChekuriR00, MR2351517,
  Gottlob:2009:GHD:1568318.1568320, DBLP:journals/jcss/GottlobLS02,
  DBLP:journals/jacm/FlumFG02} refined the treewidth notion leading to
generalized hyper treewidth \cite{DBLP:journals/jcss/GottlobLS02} and
ultimately {\em fractional hypertree width}
\cite{DBLP:journals/talg/Marx10}, which allows for a unified view of
tractable queries.  (An acyclic query, for example, has fractional
hypertree width at most $1$.)

The fractional hypertree width notion comes out of a recent tight
worst-case output size bound in terms of the input relation sizes
\cite{FOCS:AtsGroMar08}. An algorithm was presented that runs in time
matching the bound, and thus it is worst-case optimal in
\cite{DBLP:conf/pods/NgoPRR12}.  Given a tree decomposition of the
input query with the minimum fractional edge cover over all bags, we
can run this algorithm on each bag, and then Yannakakis algorithm
\cite{DBLP:conf/vldb/Yannakakis81} on the resulting bag relations,
obtaining a total runtime of $\tilde O(N^{w^*}+Z)$, where $w^*$ is the
fractional hyper treewidth.  The {\em leap-frog triejoin} algorithm
\cite{DBLP:journals/corr/abs-1210-0481} is also worst-case optimal and
runs fast in practice; it is based on the idea that we can efficiently
skip unmatched intervals.  The indices are also built or selected to
be consistent with a chosen GAO. In the
Appendix~\ref{app:sec:counter-examples}, we show that neither Leapfrog
nor the algorithm from~\cite{DBLP:conf/pods/NgoPRR12} can achieve the
certificate guarantees of \ms for $\beta$-acyclic queries.

\vspace{-7pt}
\paragraph*{Notions of acyclicity} 
There are at least five notions of acyclic hypergraphs, four of which
were introduced early on in database theory (see e.g,
\cite{DBLP:journals/jacm/Fagin83}), and at least one new one
introduced recently \cite{Duris2012}. The five notions are {\em not}
equivalent, but they form a strict hierarchy in the following way:
\[ 
    \text{Berge-acyclicity} \varsubsetneq \gamma\text{-acyclicity}               
   \varsubsetneq \text{jtdb} \varsubsetneq \beta\text{-acyclicity} 
   \varsubsetneq \alpha\text{-acyclicity}
\]
Acyclicity or $\alpha$-acyclicity \cite{Beeri:1981:PAD:800076.802489,
  Beeri:1983:DAD:2402.322389, Fagin:1982:SUR:319732.319735,
  Goodman:1982:TQS:319758.319775, Maier:1982:CAH:588111.588118} was
recognized early on to be a very desirable property of data base
schemes; in particular, it allows for a data-complexity optimal
algorithm in the worst case \cite{DBLP:conf/vldb/Yannakakis81}.
However, an $\alpha$-acyclic hypergraph may have a sub-hypergraph that
is not $\alpha$-acyclic. For example, if we take {\em any} hypergraph
and add a hyperedge containing all vertices, we obtain an
$\alpha$-acyclic hypergraph.  This observation leads to the notion of
$\beta$-acyclicity: a hypergraph is $\beta$-acyclic if and only if
every one of its sub-hypergraph is ($\alpha$-) acyclic
\cite{DBLP:journals/jacm/Fagin83}.  It was shown (relatively) recently
\cite{ordyniak_et_al:LIPIcs:2010:2855} that {\sc sat} is in
$\mathsf{P}$ for $\beta$-acyclic CNF formulas and is
$\mathsf{NP}$-complete for $\alpha$-acyclic CNF formulas.  Extending
the result, it was shown that negative conjunctive queries are
poly-time solvable if and only if it is $\beta$-acyclic
\cite{braultbaron:LIPIcs:2012:3669}.  The separation between
$\gamma$-acyclicity and $\beta$-acyclicity showed up in logic
\cite{Duris:2008:HAE:1381308.1382241}, while Berge-acyclicity is
restrictive and, thus far, is of only historical interest
\cite{Berge:1985:GH:1096893}.

\vspace{-7pt}
\paragraph*{Graph triangle enumeration}
In social network analysis, computing and 
listing the number of triangles in a graph is at the heart of the clustering
coefficients and transitivity ratio. There are four decades of research on
computing, estimating, bounding, and lowerbounding the number of triangles and the 
runtime for such algorithms
\cite{DBLP:conf/www/SuriV11,
DBLP:journals/im/KolountzakisMPT12,
Vassilevska:2009:FMC:1536414.1536477,
STOC:VasWil06, 
MR599482, 
DBLP:journals/siamcomp/ItaiR78}. 
This problem can easily be reduced to a join
query of the form $Q = R(A,B) \Join S(B, C) \Join T(A, C)$.

\subsection{Certificates for Intersection} 
The problem of finding the union and intersection of two sorted arrays
using the fewest number of comparisons is well-studied, dated back to
at least Hwang and Lin \cite{MR0297088} since 1972. In fact, the idea
of skipping elements using a binary-search jumping (or leap-frogging)
strategy was already present in \cite{MR0297088}.  
Demaine et
al. \cite{DBLP:conf/soda/DemaineLM00} used the leap-frogging strategy
for computing the intersection of $k$ sorted sets. They introduced the
notion of proofs to
capture the intrinsic complexity of such a problem.  Then, the idea of
gaps and certificate encoding were introduced to show that their
algorithm is average case optimal. (See Appendix~\ref{app:sec:C-vs-dlm} for a
more technical discussion.)

\dlm's notion of proof inspired another adaptive complexity notion for the 
set intersection problem called partition certificate by Barbay and Kenyon in
\cite{DBLP:conf/soda/BarbayK02, DBLP:journals/talg/BarbayK08}, where instead
of a system of inequalities essentially a set of gaps is used to encode and
verify the output. Barbay and Kenyon's idea of a partition
certificate is very close to the set of intervals that \ms outputs. 
In the analysis of \ms in Appendix~\ref{app:sec:intersection} for the set 
intersection problem, we (implicitly) show a correspondence between these partition 
certificates and \dlm's style proofs.
In addition to the fact that join queries are more general than set
intersection, our notion of certificate is value-oblivious; our 
certificates do not depend on specific values in the domain, while
Barbay-Kenyon's partition certificate does.

It should be noted that these lines of inquiries are not only of theoretical
interest. They have yielded good experimental results in text-datamining and
text-compression\cite{DBLP:conf/sccc/BarbayL09}.\footnote{We thank J\'er\'emy Barbay
for bringing these references to our attention.}


\subsection{Beyond Worst-case Complexity}

There is a fairly large body of work on analyzing algorithms with more refined
measures than worst-case complexity. (See, e.g., the
excellent lectures by Roughgarden on this topic~\cite{tim-notes}.)
This section recalls the related works that are most closely related to ours.

A fair amount of work has been done in designing {\em adaptive} algorithms for
sorting~\cite{adaptive-sort}, where the goal is to design a sorting algorithm
whose runtime (or the number of comparisons) matches a notion of difficulty of
the instance (e.g. the number of inversions, the length of longest monotone
subsequence and so on -- the survey~\cite{adaptive-sort} lists at least eleven
such measures of {\em disorder}). This line of work is similar to ours in the
sense that the goal is to run in time proportional to the difficulty of the
input. The major difference is that in these lines of work the main goal is to
avoid the logarithmic factor over the linear runtime whereas in our work, our
potential gains are of much higher order and we ignore log-factors. 

Another related line of work is on self-improving algorithms of Ailon et
al.~\cite{self-improving}, where the goal is to have an algorithm that runs on inputs that are drawn i.i.d. from an {\em unknown} distribution and in expectation converge to a runtime that is related to the entropy of the distribution. In some sense this setup is similar to online learning while our work requires worst-case per-instance guarantees.

The notion of instance optimal join algorithms was (to the best of our
knowledge) first explicitly studied in the work of Fagin et
al.~\cite{fagin-io}. The paper studies the problem of computing the
top-$k$ objects, where the ranking is some aggregate of total ordering
of objects according to different attributes. (It is assumed that the
algorithm can only iterate through the list in sorted order of
individual attribute scores.) The results in this paper are stronger
than ours since Fagin et al. give $O(1)$-optimality ratio (as opposed
to our $O(\log{N})$-optimality ratio). On the other hand the results
in the Fagin et al. paper are for a problem that is arguably narrower
than the class we consider of join algorithms.

The only other paper with provable instance-optimal guarantees that we are aware of is the Afshani et al. results on some geometric problems~\cite{geometric-io}. Their quantitative results are somewhat incomparable to ours. On the one hand their results get a constant optimality ratio: on the other hand, the optimality ratio is only true for {\em order oblivious} comparison algorithms (while our results with $O(\log N)$ optimality ratio hold against all comparison-based algorithms).


%
%


\section{Conclusion and Future Work}
\label{sec:conclusions}


 We described the \ms algorithm for processing join queries on data
 that is stored ordered in data structures modeling traditional
 relational databases. We showed that \ms can achieve stronger runtime
 guarantees than previous algorithms; in particular, we believe \ms is
 the first algorithm to offer beyond worst-case guarantees for
 joins. Our analysis is based on a notion of certificates, which
 provide a uniform measure of the difficulty of the problem that is
 independent of any algorithm. In particular, certificates are able to
 capture what we argue is a natural class of comparison-based join
 algorithms.

Our main technical result is that, for $\beta$-acyclic queries there
is some GAO such that \ms runs in time that is linear in the
certificate size. Thus, \ms is optimal (up to an $O(\log N)$ factor)
among comparison-based algorithms. Moreover, the class of
$\beta$-acyclic queries is the boundary of complexity in that we show
no algorithm for $\beta$-{\em cyclic} queries runs in time linear in
the certificate size. And so, we are able to completely characterize
those queries that run in linear time for the certificate and hence
are optimal in a strong sense. Conceptually, certificates change the
complexity landscape for join processing as the analogous boundary for
traditional worst-case complexity are $\alpha$-acyclic queries, for
which we show that there is no polynomial bound in the certificate
size (assuming the strong form of the exponential time hypothesis).
We then considered how to extend our results using treewidth. We
showed that our same \ms algorithm obtains $\tilde O( |\cert|^{w+1} +
Z)$ runtime for queries with treewidth $w$. For the triangle query
(with treewidth $2$), we presented a modified algorithm that runs in
time $\tilde{O}(|\cert|^{3/2}+Z)$.

\vspace{-7pt}
\paragraph*{Future Work}
We are excited by the notion of certificate-based complexity for join
algorithms; we see it as contributing to an emerging push beyond
worst-case analysis in theoretical computer science. We hope there is
future work in several directions for joins and certificate-based
complexity.

\vspace{-7pt}
\paragraph*{Indexing and Certificates}
The interplay between indexing and certificates may provide fertile
ground for further research. For example, the certificate size depends
on the order of attributes. In particular, a certificate in one order
may be smaller than in another order. We do not yet have a handle on
how the certificate-size changes for the same data in different
orders. Ideally, one would know the smallest certificate size for any
query and process in that order. Moreover, we do not know how to use
of multiple access paths (eg. Btrees with different search keys) in
either the analysis or the algorithm. These indexes may result in
dramatically faster algorithms and new types of query optimization.

\vspace{-7pt}
\paragraph*{Fractional Covers}
A second direction is that join processing has seen a slew of
  powerful techniques based on increasingly sophisticated notions of
  covers and decompositions for queries. We expect that such covers
  (hypergraph, fractional hypergraph, etc.) could be used to tighten
  and improve our bounds. For the triangle query, we have the
  fractional cover bound, i.e., $\tilde{O}( |\cert|^{3/2})$. But is this
  possible for all queries?


\section*{Acknowledgments} 
We thank LogicBlox, Mahmoud Abo Khamis, Semih Salihoglu and Dan Suciu for many
helpful conversations.

HQN's work is partly supported by NSF grant CCF-1319402 and a gift from Logicblox. DTN's work is partly supported by NSF grant CCF-0844796 and a gift from Logicblox.
CR's work on this project is generously supported by NSF CAREER Award under     
No. IIS-1353606, NSF award under No. CCF-1356918, the ONR                       
under awards No.  N000141210041 and No. N000141310129,                          
Sloan Research Fellowship, Oracle, and Google.                                  
AR's work is partly supported by                                                
NSF CAREER Award CCF-0844796, NSF grant CCF-1319402 and a                   
gift from Logicblox.

{
\bibliographystyle{siam}

}

\pagebreak

\appendix
\section{The GAO and query's structure}
\label{app:sec:gao}

The input query $Q$ can be represented by a hypergraph $\calH=(\calV, \calE)$,
where $\calV$ is the set of all attributes, and $\calE$ is the collection of
input relations' attribute sets. Any global attribute order (GAO) 
is just a permutation of vertices of $\calV$. 
In the logic, constraint satisfaction, and databases \cite{Schafhauser2006}, 
graphical models, 
and sparse matrix computation \cite{Heggernes:2008:FCM:1654201.1654211} 
literature, any permutation of vertices of a hypergraph is called an {\em
elimination order}, which can be used to characterize many important
properties of the hypergraph.  

Our algorithm is no different: its performance intimately relates to properties
of the GAO, which characterizes structural properties of the hypergraph $\calH$.
In this section we state some relevant known results and derive two 
slightly new results regarding the relationship between elimination
orders and notions of widths and acyclicity of hypergraphs.

\subsection{Basic concepts}
\label{app:subsec:gao-basics}

There are many definitions of acyclic hypergraphs. A hypergraph $(\cal
V, \cal E)$ is $\alpha$-acyclic if the GYO procedure returns
empty~\cite[p.~128]{DBLP:books/aw/AbiteboulHV95}. Essentially, in GYO
one iterates two steps: (1) remove any edge that is empty or contained
in another hyperedge, or (2) remove vertices that appear in at most
one hyperedge. If the result is empty, then the hypergraph is
$\alpha$-acyclic. A query is $\beta$-acyclic if the graph formed by
any subset of hyperedges is $\alpha$-acyclic. Thus, the requirement
that a hypergraph be $\beta$-acyclic is (strictly) stronger than
$\alpha$-acyclic.  We illustrate this with an example.

\begin{example}
We map freely between hypergraphs and queries. The query $Q_{\Delta} =
R(A,B) \Join S(A,C) \Join T(B,C)$ is both $\alpha$-cyclic and
$\beta$-cyclic. However, if one adds the relation $U(A,B,C)$ to form
$Q_{\Delta + U} = R(A,B) \Join S(A,C) \Join T(B,C) \Join U(A,B,C)$
this query is $\alpha$-acyclic, but it is still $\beta$-cyclic.
\end{example}

We can also define these concepts via a notion of tree decomposition.

\bdefn[Tree decomposition]
Let $\calH = (\calV, \calE)$ be a hypergraph.
A {\em tree-decomposition} of $\calH$ is a pair $(T, \chi)$
where $T = (V(T), E(T))$ is a tree and $\chi : V(T) \to 2^{\calV}$ assigns to
each node of the tree $T$ a set of vertices of $\calH$.
The sets $\chi(t)$, $t\in V(T)$, are called the {\em bags} of the 
tree-decomposition.  There are two properties the bags must satisfy
\bi
 \item[(a)] For every hyperedge $F \in \calE$, there is a bag $\chi(T)$ 
 such that $F\subseteq \chi(t)$.
 \item[(b)] For every vertex $v \in \calV$, the set 
 $\{ t \suchthat t \in T, v \in \chi(t) \}$ is not empty and forms a 
 connected subtree of $T$.
\ei
\edefn

There are at least five notions of acyclic hypergraphs, four of which were
introduced very early on 
(see e.g, \cite{DBLP:journals/jacm/Fagin83}), and at least one 
new one introduced recently \cite{Duris2012}. The five notions are {\em not}
equivalent, but they form a strict hierarchy as discussed in
Section~\ref{sec:related:work}.
Of interest to us in this paper are $\beta$-acyclicity and acyclicity.

\bdefn[Acyclicity]\label{defn:acyclic}
A hypergraph $\calH = (\calV, \calE)$ is $\alpha$-{\em acyclic} or just {\em acyclic} if and only if
there exists a tree decomposition 
$(T= (V(T), E(T)), \{\chi(t) \suchthat t \in V(T)\} )$
in which every bag $\chi(t)$ is a hyperedge of $\calH$.
When $\calH$ represents a query $Q$, the tree $T$ is also called the {\em join
tree} of the query. A query is acyclic if and only if its hypergraph is acyclic.
\edefn

\bdefn[$\beta$-acyclicity]\label{defn:beta-acyclic}
A hypergraph $\calH = (\calV, \calE)$ is {\em $\beta$-acyclic} 
if and only if there is {\bf no} sequence $$(F_1, u_1, F_2, u_2, \cdots, F_m, u_m,
F_{m+1}=F_1)$$ with the following properties
\bi
 \item $m \geq 3$
 \item $u_1,\dots,u_m$ are distinct vertices of $\calH$
 \item $F_1, \dots, F_m$ are distinct hyperedges of $\calH$
 \item for every $i \in [m]$, $u_i \in F_i \cap F_{i+1}$, and $u_i \notin F_j$
 for every $j \in [m+1] -\{i,i+1\}$.
\ei
A query is $\beta$-acyclic if and only if its hypergraph is $\beta$-acyclic.
\edefn

The rest of this section roughly follows the definitions given in 
\cite{DBLP:journals/talg/Marx10}.
For a more detailed discussion of (generalized) hypertree decomposition, the
reader is referred to \cite{DBLP:conf/wg/GottlobGMSS05}.

The {\em width} of a tree-decomposition is the quantity 
\[ \max_{t \in V(T)} |\chi(t)|-1. \]
The {\em treewidth} of a hypergraph $\calH$, denoted by $\tw(\calH)$,
is the minimum width over all tree decompositions of the hypergraph.

\subsection{Elimination orders, prefix posets, acyclicity, and hypergraph
widths}
\label{app:subsec:prefix-poset}

An {\em elimination order} of a hypergraph $\calH=(\calV,\calE)$ is simply
a total order $v_1,\dots,v_n$ of all vertices in $\calV$.
Fix an elimination order $\rho = v_1,\dots,v_n$ of $\calH$, for
$j=n,n-1,\dots,1$ we recursively define $n$ hypergraphs 
$\calH_n,\calH_{n-1},\dots,\calH_1$, and $n$ set collections 
$\calP_n,\calP_{n-1},\dots,\calP_1$, as follows.

\bi
 \item[(a)] Let $\calH_n = \calH = (\calV_n, \calE_n=\calE)$ and define
 \begin{eqnarray*}
  \partial(v_n) &=& \{ F \in \calE_n \suchthat v_n \in F \},\\
  \calP_n &=& \{ F - \{v_n\} \suchthat F \in \partial(v_n)\},\\
  U(\calP_n) &=& \bigcup_{F \in \calP_n} F.
 \end{eqnarray*}
 In other words, $\partial(v_n)$ is the collection of hyperedges of $\calH_n$ each of
 which contains $v_n$. (The notation $\partial(v)$ is relatively standard in graph
 theory, denoting the set of edges incident to the vertex $v$.)
 Next, $\calP_n$ is the same set of hyperedges in $\partial(v_n)$ with $v_n$
 removed. Note that the empty set might be a member of $\calP_n$.
 Finally, $U(\calP_n)$ is the ``universe'' of sets in $\calP_n$.
 \item[(b)] For each $j=n-1, n-2, \dots, 1$, define $\calH_j = (\calV_j,
 \calE_j)$ as follows.
 \begin{eqnarray*}
 \calV_j &=& \{v_1,\dots,v_j\}\\
     \calE_j &=& \{ F - \{v_{j+1}\} \suchthat F \in \calE_{j+1} \} \cup 
                 \{ U(\calP_{j+1}) \}\\
 \partial(v_j)&=&\left\{ F \in \calE_j \suchthat v_j \in F\right\}\\
 \calP_j &=& \{ F - v_j \suchthat F \in \partial(v_j)\}\\
 U(\calP_j) &=& \bigcup_{F\in\calP_j} F.
 \end{eqnarray*}
 In other words, let $\calH_j=(\calV_j,\calE_j)$ be the hypergraph obtained from 
 $\calH_{j+1}$ by removing $v_{j+1}$ from $\calH_{j+1}$ from all hyperedges, 
 adding a new hyperedge which is the union of all sets in $\calP_{j+1}$.
 Finally, let $\calP_j$ be the collection of all hyperedges of $\partial(v_j)$
 with $v_j$ removed.
\ei
In particular, the hypergraph $\calH_j$ is on vertex set $\{v_1,\dots,v_j\}$,
and the hypergraph $\calH_1$ has only $\{v_1\}$ as a hyperedge.
The universe $U(\calP_k)$ of $\calP_k$ is a subset of $\{v_1,\dots,v_{k-1}\}$,
and in particular $U(\calP_1) = \emptyset$.

\paragraph*{Prefix posets} 
For each $k\in [n]$, the set collection $\calP_k$ is a collection of subsets 
of $\{v_1,\dots,v_{k-1}\}$.
We will view $\calP_k$ as a {\em partially ordered set} (poset) using the {\em
reversed inclusion order}. In particular, for any
$S_1,S_2\in \calP_k$, we write $S_1 \preceq S_2$ if and only if 
$S_2 \subseteq S_1$. 


These posets $\calP_k$ are called the {\em prefix posets} with respect to the
elimination order $v_1,\dots,v_n$ of $\calH$.
The {\em bottom element} of a poset $\calP$ is an element $F \in \mathcal
P$ such that $F \preceq F'$ for every $F' \in \calP$.
The poset $\calP$ is a {\em chain} if all members of $\calP$ can
be linearly ordered using the $\preceq$ relation. In other words, $\calP$
is a chain if its members form a nested inclusion collection of sets.
It turns out that we can characterize $\beta$-acyclicity
and treewidth of $\calH$ using the prefix posets of some elimination order.

\bdefn[Nested elimination order]
For any $\beta$-acyclic hypergraph $\calH$, a vertex ordering 
$v_1,\dots,v_n$ of $\calH$ is called
a {\em nested elimination order} if and only if every prefix poset $\calP_k$ 
is chain. 
\label{defn:neo}
\edefn

\bprop[$\beta$-acyclicity and the GAO]
A hypergraph $\calH = (\calV, \calE)$ is {\em $\beta$-acyclic} 
{\em if and only if} there exists a vertex ordering $v_1,\dots, v_n$ which is
a nested elimination order for $\calH$.
\label{prop:beta-gao}
\eprop
\bp
For the forward direction, suppose $\calH$ is $\beta$-acyclic.
A {\em nest point} of $\calH$ is a vertex $v \in \calH$ such that the 
collection of hyperedges containing $v$ forms a nested sequence of subsets, 
one contained in the next. In 1980, Brouwer and Kolen \cite{brouwer-kolen-1980}
proved that any $\beta$-acyclic hypergraph $\calH$ has at least two nest 
points. Let $v_n$ be a nest point of $\calH$. Then, from the definition of nest
point, $\partial(v_n)$ is a chain each of whose members contains $v_n$.
The set $\calP_n$ is thus also a chain as it is the same as $\partial(v_n)$ with
$v_n$ removed, and $\calP_n$'s bottom element is precisely $U(\calP_n)$.
Consequently, $\calH_{n-1}$ is precisely $\calH - \{v_n\}$. 
The graph $\calH- \{v_n\}$ is $\beta$-acyclic because $\calH$ is
$\beta$-acyclic. By induction there exists an elimination order
$v_1,\dots,v_{n-1}$ such that every prefix poset $\calP_k$, $k\in [n-1]$,
is a chain. Thus, the elimination order $v_1,\dots,v_n$ satisfies the desired
property.

Conversely, suppose there exists an ordering $v_1,\dots,v_n$ of all vertices of
$\calH$ such that every poset $\calP_k$ is a chain. Assume to the contrary that
$\calH$ is not $\beta$-acyclic. Then, there is a sequence
$$(F_1,u_1,F_2,u_2,\dots,F_m,u_m,F_{m+1}=F_1)$$ satisfying the conditions stated
in Definition~\ref{defn:beta-acyclic}.
Without loss of generality, suppose $u_m$ comes last in the elimination 
order $v_1,\dots,v_n$, and that $u_m = v_k$ for some $k$. 
Then, the poset $\calP_k$ contains the set 
$F_m \cap \{v_1,\dots,v_{k-1}\}$ and the set $F_1 \cap
\{v_1,\dots,v_{k-1}\}$. Since both $u_2$ and $u_{m-1}$ come before $u_m$ in the
ordering, we have
\begin{eqnarray*}
u_2 &\in& \left(F_1 \cap \{v_1,\dots,v_{k-1}\}\right) \setminus 
        \left(F_m \cap \{v_1,\dots,v_{k-1}\}\right)\\
u_{m-1} &\in& \left(F_m \cap \{v_1,\dots,v_{k-1}\}\right) \setminus 
        \left(F_1 \cap \{v_1,\dots,v_{k-1}\}\right).
\end{eqnarray*}
Consequently, $\calP_k$ is not a chain.
\ep

We have just characterized $\beta$-acyclicity with a polynomial-time verifiable
property of the GAO. We next characterize the treewidth of a hypergraph using
the best ``elimination width'' of its GAO. This result is well-known in the
probabilistic graphical model literature.

\bprop[Treewidth and the GAO]
Let $\calH = (\calV, \calE)$ be a hypergraph with treewidth $w$.
Then there exists an elimination order $v_1,\dots, v_n$ of all vertices of 
$\calH$ such that for every $k \in [n]$ we have 
$\left| U(\calP_k) \right| \leq w.$
\label{prop:gao-tw}
\eprop
\bp
This follows from the well-known fact that the smallest induced treewidth 
(over all elimination orders) of $\calH$ is the same as the treewidth of
$\calH$ (see, e.g., \cite{journals/ai/DechterP89, MR985145}). 
The maximum size of the universes $U(\calP_k)$, $k\in [n]$, is
precisely the induced treewidth of the Gaifman graph of $\calH$ with respect to
the given elimination order.
\ep

\section{Certificates}
\label{app:sec:certificates}

The notion of certificate is subtle. In this section we give a series of examples
and proofs of propositions exploring its properties. 

\subsection{Illustrations and basic examples}
\label{app:sec:certificate-basics}

To understand the notion of certificates, it is important to understand the
input to $\ms$ and how relations are accessed. Figure~\ref{fig:R-trie} gives
an illustration of index tuples to access a relation $R$. In this example,
the nodes (except for the root) are the variables $R[\mv x]$ whose contents have
been filled out by a database instance.
\begin{figure*}
\centerline{\includegraphics[width=4in]{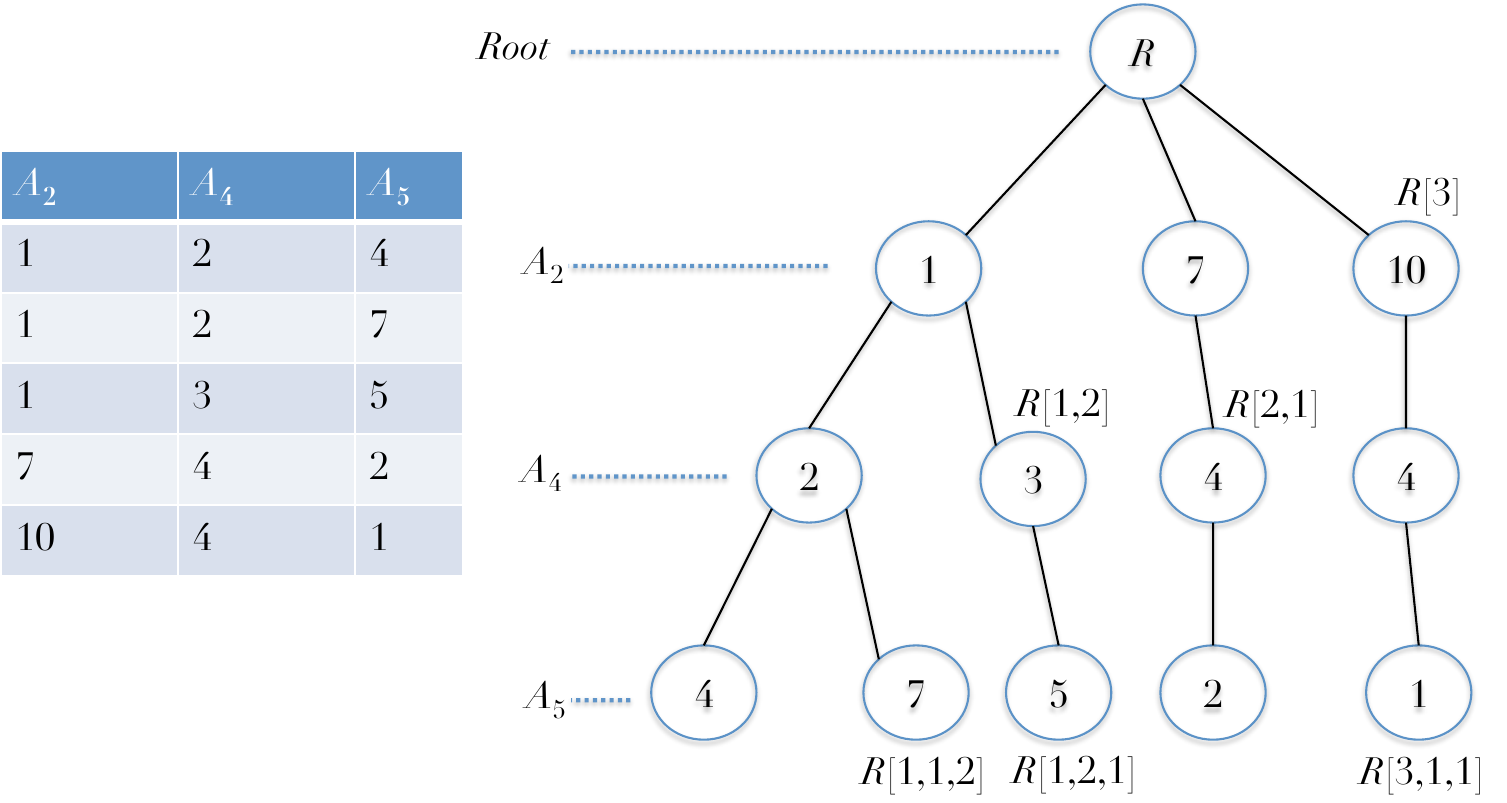}}
\caption{The (unbounded fanout) $\btree$ data structure. 
Here $R[x_1,x_2]$ is the value of the node where we take the $x_1$th branch of
the first level, then the $x_2$th branch at the second level of $R$'s $\btree$.
For this example, $|R[*]| = 3$, $|R[1, *]| = 2$, $|R[2, *]| = 1$.}
\label{fig:R-trie}
\end{figure*}

Next, we start with an extremely simple join query to illustrate the
notion of certificates, showing that certificates can have constant size
and they can be a lot smaller than the output size.

\begin{example}[Constant size certificates]
Consider the query $R(A) \Join S(A, B)$ where 
\begin{eqnarray*}
    R &=& [N]\\
    S &=& \{ (N+1, i + N) \suchthat i \in  [N]\}.
\end{eqnarray*}
In this case, $\{ R[N] < S[1] \}$ is a certificate showing that the output is
empty: for every database $I$ in which $R^I[N] < S^I[1]$ there is no tuple in
the output.
\end{example}

\begin{example}[$|\cert| \ll Z$]
Next, consider the following instance of the same query as above.
\begin{eqnarray*}
    R &=& [N]\\
    S &=& \{(N, 10i), i \in [N]\}.
\end{eqnarray*}    
In this case, $\{ R[N] = S[1] \}$ is a certificate because, for every input
database $I$ for which $R^I[N] = S^I[1]$, the outputs are tuples of the form
$(R^I[N], S^I[1, i]), i \in [N]$. And, the witnesses are pairs of index tuples
$\{ N, (1, i) \}$.
These two examples show that certificates can be of constant size, and they can
be arbitrarily smaller than the output size.
\end{example}

Extrapolating from the above example, it is not hard to show that for any join
query we can construct an instance whose optimal certificate size is only a
function of the query size and not the data. In essence, such certificates are
of constant size in data complexity. Consequently, algorithms whose runtime is a
function of the optimal certificate can be extremely fast!.

\subsection{Certificate subtleties}
\label{app:sec:certificate-subtleties}

The comparisons of the forms shown in \eqref{eqn:comparison-forms} allow for 
comparisons between tuples of the same relation. Comparisons between tuples from
the same relation and the equalities can help 
tremendously in reducing the size of the overall certificate. This fact will
make the job of the algorithm designer more difficult if we aim for a runtime
proportional to the optimal certificate size. Consider the
following example.

\begin{example}[Equalities and same-relation comparisons are important]
\label{ex:same-relation-and-equalities}
Consider the following query, where the global attribute order is $A, B, C$
\[ Q = R(A, C) \Join S(B, C), \]
where
\begin{eqnarray*}
R(A, C) &=& [N] \times \{2k \suchthat k \in [N]\}\\
S(B, C) &=& [N] \times \{2k-1 \suchthat k \in [N]\}
\end{eqnarray*}
The join is empty, and there is a certificate of size $O(N^2)$ showing
that the output is empty.  Note that both the relations have size $N^2$. 
The certificate consists of the following comparisons:
\begin{eqnarray*}
R[1,c] &=& R[a,c], \text{ for } a, c \in [N], a > 1\\
S[1,c] &=& S[b,c], \text{ for } b, c \in [N], b > 1,
\end{eqnarray*}
\[ S[1,1] < R[1,1] < S[1,2] < R[1,2] < \cdots < S[1, N] < R[1,N]. \]

If we don't use any equality, or if we only compare tuples from different
relations, any certificate will have to be of size
$\Omega(N^3)$ because it will have to show for each pair
$a, b$ that $R[a,*] \cap S[b,*] = \emptyset$ which takes $2N-1$ inequalities,
for a grand total of $N^2(2N-1) = \Omega(N^3)$ comparisons.
\end{example}

A certificate is a function of the GAO (and of course, the data).
For the same input data, changing the GAO can 
dramatically change the optimal certificate size, and for non-trivial queries
we cannot predict the dramatic difference between optimal certificate sizes
of different GAOs without examining the data values.

\begin{example}[Certificate's dependency on the GAO]
\label{ex:cert-different-GAO}
Consider the same query as in Example~\ref{ex:same-relation-and-equalities} 
but with the global
attribute order of $C, A, B$. In this case, 
\begin{eqnarray*}
R(C, A) &=& \{2k \suchthat k \in [N]\} \times [N]\\
S(C, B) &=& \{2k-1 \suchthat k \in [N]\} \times [N]
\end{eqnarray*}
The following is an $O(N)$-sized certificate proving that the output is empty:
\[ S[1] < R[1] < S[2] < R[2] < \cdots < S[N] < R[N]. \]
This GAO is a nested elimination order for this query, and thus \ms 
runs in time $\tilde O(N)$ on this instance, thanks to
Theorem~\ref{thm:io-beta-neo}.
\end{example}

Examples \ref{ex:same-relation-and-equalities} and \ref{ex:cert-different-GAO} 
indicate a trend that we can prove rigorously. 

\bprop
Let $\rho$ be any GAO. Let $B$ be any private attribute of some relation $R$,
i.e. $B$ does not belong to any other relation.  Let $\rho'$ be an attribute
order obtained from $\rho$ by removing $B$ from $\rho$ and adding it to the end of
$\rho$.  Let $\cert(\rho)$ denote an optimal certificate with respect to the
GAO $\rho$. Similarly, define $\cert(\rho')$. Then,
$|\cert(\rho')| \leq |\cert(\rho)|$.
\eprop
\bp
First, we observe that in an optimal certificate $\cert$ (for any GAO), there is
no comparison involving $B$-variables. If $\cert$ does contain such comparison, 
let $\calA$ be the argument obtained from $\cert$ by removing all comparisons 
involving $B$-variables. We want to show that $\calA$ remains a certificate,
still, contradicting the optimality of $\cert$. 
Let $K$ be any database instance satisfying $\cert$. (If there is no such $K$,
then there is no database instance satisfying $\calA$, and hence $\calA$ is
 vacuously a  certificate!)
Let $I$ and $J$ be two database instances satisfying the argument $\calA$. 
Let $I'$ and $J'$ be obtained from $I$ and $J$ by filling in the $B$-variables
using values from the corresponding $B$-variables from $K$. Then, $I'$ and $J'$
satisfy $\cert$. Consequently, every witness for $Q(I')$ is a witness for
$Q(J')$ and vice versa. But every witness for $Q(I')$ is {\em also} a witness
for $Q(I)$, and every witness for $Q(J')$ is also a witness for $Q(J)$, and vice
versa, because $B$ is a private attribute! Hence, $\calA$ is a certificate as
desired.

Second, we can now assume that $\cert(\rho)$ has no comparison between
$B$-variables. Note that, a variable on a relation $R$ is simply a node on its
search tree. When we change $\rho$ to $\rho'$, some nodes on a variable coming
after $B$ in a relation might collapse into one node because their values
are equal. Call the new node an {\em image} of the old node.
Let $\calA$ be an argument for the $\rho'$ GAO obtained from
$\cert(\rho)$ by replacing every comparison in $\cert(\rho)$ with the comparison
between their images in $\rho'$. Every database satisfying $\calA$ also
satisfies $\cert$, from which we can infer the set of witnesses. Thus $\calA$ is
a certificate, which can be smaller than $\cert(\rho)$ because of the collapsing
of nodes.
\ep

From the above proposition, we know that better certificates can be obtained 
by having  GAOs in which all private attributes come at the end of the order.
Unfortunately, that is as far as the GAO can tell us about
the optimal certificate size. If there were more than one non-private attribute,
then the optimal certificate size is highly data dependent.
The following example illustrates this point further.

\begin{example}[Certificate's dependency on the GAO even without private
attributes]
Consider the join query $$Q = R(A, B) \Join S(A, B).$$
Suppose
\begin{eqnarray*}
R &=& \{ (i, i) \suchthat i \in [N] \}\\
S &=& \{ (N+i, i) \suchthat i \in [N] \}.
\end{eqnarray*}
Then, the optimal certificate for the $(A, B)$ order has size $O(1)$: 
\[ R[N] < S[1], \]
while the optimal certificate for the $(B, A)$ order has size
$\Omega(N)$: 
\[ R[i, N] < S[i, 1], \text{ for all } i \in [N]. \]
And, we can't tell which is which by just looking at the
shape of the search trees for $R$ and $S$.
\end{example}

The following example illustrates that the runtime of 
$O(|\cert(\rho)|+Z)$ 
in the GAO $\rho$ may not be better than the runtime of, say,
$O(|\cert(\rho')|^{w+1}+Z)$ 
in another GAO $\rho'$ for {\em the same} data. 
The notion of {\it nested elimination order} was defined
earlier in Section~\ref{app:sec:gao}.

\begin{example}[Nested elimination order may have large certificate]
It is easy to construct a query and the data so that a nested elimination
order has a much larger optimal certificate than a non-nested elimination
order. Consider the following query
\[ Q = R(A, B, C) \Join S(A, C) \Join T(B, C). \]
This query is $\beta$-acyclic, and $\rho = (C, A, B)$ is a nested elimination order
while $\rho' = (A, B, C)$ is not. \ms runs in time 
$\tilde O(|\cert(C,A,B)|+Z)$ for the former order, and in time
$\tilde O(|\cert(A, B, C)|^3 + Z)$ for the latter.
However, it is entirely possible that $|\cert(A, B, C)|^3 \ll |\cert(C,A,B)|$.
For example, consider
\begin{eqnarray*}
R(A,B,C) &=& \{ (i, i, i) \suchthat i \in [N] \}\\
S(A,C) &=& \{ (N+i, i) \suchthat i \in [N] \}\\
T(B,C) &=& \{ (i, i) \suchthat i \in [N] \}.
\end{eqnarray*}
In this case, similar to the previous example
$|\cert(A, B, C)| = 1$ (where it says $R[N] < S[1]$), while
$|\cert(C, A, B)| = \Omega(N)$.
\end{example}

\subsection{Proof of Proposition~\ref{prop:Omega(|C|)}}
\label{app:sec:prop:Omega(|C|)}
\bp
To prove this proposition, it is sufficient to show that the set of comparisons
issued by an execution of a comparison-based algorithm is a certificate.
To be concrete, we model a comparison-based join algorithm by a decision tree.
Every branch in the tree corresponds to a comparison of the form
\eqref{eqn:comparison-forms}. An execution of the join algorithm is a path
through this decision tree, reaching a leaf node. At the leaf node, the result
$Q(I)$ is labeled. The label at a leaf is the set of tuples the algorithm deems 
the output of the query applied to database instance $I$. 
The collection of comparisons down the path is
an argument $\calA$ which we want to prove a certificate.

First, note that for every tuple $\mv t = (t_1,\dots,t_n) \in Q(I)$, the values
$t_i$ have to be one of the values $R^I[\mv x]$ for some $R \in \atoms(Q)$.
If this is not the case, then we can perturb the instance $I$ as follows: for
every attribute $A_i$ let $M_i$ be the maximum value occurring in any $A_i$-value
overall tuples in the input relations. Now, add $M_i+1$ to every $A_i$-value.
Then, all $A_i$-values are shifted the same positive amount. In this new
database instance $J$, all of the comparisons in the argument have the same
Boolean value, and hence the output has to be the same. Hence, if there was a
value $t_i$ in some output tuple not equal to $R[\mv x]$, the output would be
wrong.

Second, we show that every output tuple can be uniquely identified with a
witness, independent of the input instance $I$.
Recall that a collection $X$ of (full) index tuples is said to
be a {\em witness} for $Q(I)$ if $X$ has exactly one full
index tuple from each relation $R\in \atoms(Q)$, and all index tuples
in $X$ contribute to the same $\mv t \in Q(I)$.

Fix an input instance $I$ and an output tuple $\mv t = (t_1,\dots,t_n)$. Note as
indicated above that the $t_i$ can now be thought of as a variable $R[\mv x]$
for some index tuple $\mv x$ (not necessarily full) and some relation $R\in
\atoms(Q)$. By definition of the natural join operator, there has to be a
witness $X$ for this output tuple $\mv t$.

Consider, for example, a full index tuple $\mv y = (y_1, \dots, y_k)$
from some relation $S$ which is a member of the witness $X$. 
Suppose the relation $S$ is on attributes $(A_{s(1)}, A_{s(2)}, \dots,
A_{s(k)})$.
We show that, for every $j\in [k]$, the argument $\calA$ must imply via the 
transitivity of the
equalities in the argument that $S[y_1, \dots, y_j] = t_{s(j)}$.

Suppose to the contrary that this is not the case. Let $V$ be the set of all 
variables transitively connected to the variable $S[y_1, \dots, y_j]$ by 
the equality comparisons in $\calA$.

Now, construct an instance $J$ from instance $I$ by doing the following
\bi
 \item set $R^J[\mv x] = 2 R^I[\mv x] + 1$ for all variables $R[\mv x]$ appearing
     in the argument $\calA$ but $R[\mv x]$ is not in $V$.
 \item set $R^J[\mv x] = 2 R^I[\mv x] + 2$ for all variables $R[\mv x]$ appearing in
     $V$.
\ei
Then, any comparison between a pair of variables both not in $V$ or 
both in $V$ have the same outcome in both databases $I$ and $J$. For a
pair of variables $R[\mv x] \in V$ and $T[\mv y] \notin V$
the comparison cannot be an equality from the definition of $V$, and hence 
the $<$ or $>$ relationship still holds true. 
This is because if $a$ and $b$ are natural numbers, then 
$a<b$ implies $2a+2<2b+1$ and $2a+1<2b+2$. 
Consequently, the instance $J$ also satisfies all comparisons in the argument
$\calA$.
However, at this point $S[\mv y]$ can no longer be contributing to $\mv t$.
More importantly, {\bf no} full index tuple from $S$ can contribute to $\mv t$
in $Q(J)$. Because, 
\begin{eqnarray*}
  S^J[y_1, \dots, y_{j-1}, y_j-1] 
    &\leq& 2S^I[y_1, \dots, y_{j-1}, y_j-1] +1\\
 &\leq& 2(S^I[y_1, \dots, y_{j-1}, y_j]-1) +1\\
 &=& 2S^I[y_1, \dots, y_{j-1}, y_j] - 1\\
 &=& 2t^I_{s(j)} -1\\
 &<& t^J_{s(j)}.
\end{eqnarray*}
(The first inequality is an equality except when $y_j=1$.)
Similarly,
\begin{eqnarray*}
    S^J[y_1, \dots, y_{j-1}, y_j+1] 
    &\geq& 2S^I[y_1, \dots, y_{j-1}, y_j+1] +1\\
 &\geq& 2(S^I[y_1, \dots, y_{j-1}, y_j]+1) +1\\
 &=& 2S^I[y_1, \dots, y_{j-1}, y_j] + 3\\
 &=& 2t^I_{s(j)} +3\\
 &>& t^J_{s(j)}.
\end{eqnarray*}
(Except when $y_j =
|S[y_1,\dots,y_{j-1},*]|$, the first inequality is an equality.)
\ep

\subsection{Proof of Proposition~\ref{prop:optimal-certificate-upperbound}}
\label{app:sec:prop:optimal-certificate-upperbound}
\bp
We construct a certificate $\cert$ as follows.
For each attribute $A_i$, let $v_1< v_2<\cdots< v_p$ denote the set of {\em all}
possible $A_i$-values present in any relations from $\atoms(Q)$ which has
$A_i$ as an attribute. More concretely,
\[ \{v_1,v_2,\dots,v_p\} := \bigcup_{R \in \atoms(Q), A_i \in \bar A(R)}
\pi_{A_i}(R).
\]
For each $k \in [p]$, let $T_k$ denote the set of all 
tuples from relations containing $A_i$ such that the tuple's $A_i$-value 
is $v_k$. Note that the tuples in $T_k$ can come from
the same or different relations in $\atoms(Q)$. Next, add to $\cert$
at most $|T_k|-1$ equalities connecting all tuples in $T_k$ 
asserting that their $A_i$-values are equal. 
(The reason we may not need exactly $|T_k|-1$ equalities is because there might
be many tuples from the same relation $R$ that share the $A_i$-value, and $A_i$
comes earlier than other attributes of $R$ in the total attribute order.)

Then, for each $k \in [p]$, pick an arbitrary tuple $\mv t_k \in T_k$ 
and add $p-1$ inequalities stating that $\mv t_1.A_i < \mv
t_2.A_i < \cdots < \mv t_p.A_i$. 
(Depending on which relation $\mv t_k$ comes from, the actual syntax for
$\mv t_k.A_i$ is used correspondingly. For example, if $\mv t_k$ is from the
relation $R[A_j, A_i, A_\ell]$, then $\mv t_k.A_i$ is actually 
$R[x_j, x_i]$.)

Overall, for each $A_i$ the total number of comparisons we added is
at most the number of tuples that has $A_i$ as an attribute.
Hence, there are at most $rN$ comparisons added to the certificate $\cert$, 
and they represent all the possible relationships we know about 
the data. The set of comparisons is thus a certificate for this instance.
\ep

\section{Running Time Analysis}
\label{app:sec:rt:analysis}

In this paper, we use the following notion to benchmark the runtime of join algorithms.

\bdefn
We say a join algorithm $\algo$ for a join query $Q$ to be instance optimal for $Q$ with optimality ratio $\alpha$ if the following holds. For every instance for $Q$, the runtime of the algorithm is bounded by $O_{|Q|}(\alpha\cdot |\cert|)$, where $O_{|Q|}(\cdot)$ ignores the dependence on the query size and $\cert$ be the certificate of the smallest size for the given input instance. We allow $\alpha$ to depend on the input size $N$. Finally, we refer to an instance optimal algorithm for $Q$ with optimality ratio $O(\log{N})$ simply as near instance optimal\footnote{Technically we should be calling such algorithms as near instance optimal for certificate-based complexity but for the sake of brevity we drop the qualification. Further, we use the term near instance optimal to mirror the usage of the term near linear to denote runtimes of $O(N\log N)$.} for $Q$.
\edefn

Next, we briefly justify our definition above. First note that we are using 
the size of the optimal certificate as a benchmark to quantify the 
performance of join algorithms. We have already justified this as a natural 
benchmark to measure the performance of join algorithms in 
Section~\ref{sec:certificate}. In particular, recall that 
Proposition~\ref{prop:Omega(|C|)} says that $|\cert|$ is a valid lower bound 
on the number of comparisons made by any comparison-based algorithm that 
``computes" the join $Q$. Even though this choice makes us compare 
performance of algorithms in two different models (the RAM model for the 
runtime and the comparison model for certificates), this is a natural choice 
that has been made many times in the algorithms literature: most notably, 
the claim that algorithms to sort $n$ numbers that run in 
$O(n\log{n})$ time are optimal in the comparison model. (This has also been 
done recently in other works, e.g., in~\cite{self-improving,geometric-io}.)

Second, the choice to ignore the dependence on the query size is standard 
in database literature. In particular, in this work we focus on the data 
complexity of our join algorithms.

Perhaps the more non-standard choice is to call an algorithm with
optimality ratio $O(\log{N})$ to be (near) instance optimal. We made this
choice because this is {\em unavoidable} for comparison-based
algorithm. In particular, there exists a query $Q$ so that every
(deterministic) comparison-based join algorithm for $Q$ needs to make
$\Omega(\log{N}\cdot|\cert|)$ many comparisons on {\em some} input
instance. This follows from the easy-to-verify fact for the selection
problem (given $N$ numbers $a_1,\dots,a_N$ in sorted order, check whether a
given value $v$ is one of them), every comparison-based algorithm
needs to make $\Omega(\log{N})$ many comparisons while every instance
can be ``certified" with constant many comparisons~\cite[Problem
  1(a)]{tim-ps}. For the sake of completeness we sketch the argument
below.

Consider the query $Q=R(A)\Join S(A)$. Now consider the instance where $R(A)=\{a_1,\dots,a_N\}$ and $S(A)=\{v\}$. Note that for this instance, we have $|\cert|\le O(1)$ (and that the output of $Q$ is empty if and only if $v$ does not belong to $\{a_1,\dots,a_N\}$). However, given any sequence of $\lfloor \log{N}\rfloor -1$ comparisons between (the only) element of $S$ and some element of $R$, there always exists two instantiation of $a_1,\dots,a_N$ and $v$ such that in one case the output of $Q$ is empty and is non-empty in the other case. (Basically, the adversary will always answer the comparison query in a manner that forces $v$ to be in the larger half of the ``unexplored" numbers.) 

Finally, we remark that even though this $\Omega(\log{N})$ lower bound on the optimality ratio is stated for the specific join query $Q$ above, it can be easily extended to any join query $Q'$ where at least two relations share an attribute (by ``embedding" the above simple set intersection query $Q$ into $Q'$).

\section{The outer algorithm}
\label{app:sec:outer-analysis}

\subsection{Worked Example of \ms}
\label{app:sec:ms-worked-example}

\begin{example}[$\ms$ in action]
Let $Q_2$ join the following relations:
\begin{eqnarray*}
    R(A_1) &=& [N], \\
    S(A_1,A_2) &=& [N] \times [N], \\
    T(A_2,A_3) &=& \set{(2,2),(2,4)}, \\
    U(A_3) &=& \set{1,3},
\end{eqnarray*}
where $(A_1, A_2, A_3)$ is the global attribute order.

In this example, the value domain of every attribute is $[N]$. The algorithm to compute $Q_2$ will run as follows:
\begin{itemize}
\item First the constraint set $\cds$ is empty.
\item WLOG, assume $\mv t = (-1,-1,-1)$ is 
    the first tuple returned by $\cds.\getpp()$.
\item Step 1, the following constraints will be added to $\cds$:
\begin{eqnarray*}
\langle (-\infty, 1), *, * \rangle &:& \text{from $R$ and $S$}\\
\langle 1, (-\infty, 1), * \rangle &:& \text{from $S$}\\
\langle *, (-\infty, 2), * \rangle &:& \text{from $T$}\\
\langle *, =2, (-\infty, 2) \rangle &:& \text{from $T$}\\
\langle *, *, (-\infty, 1) \rangle &:& \text{from $U$}
\end{eqnarray*}
Then $\cds$ returns, say, $\mv t = (1, 2, 2)$ which does not satisfy any of
the above constraints.
\item Step 2, the following constraint will be added to $\cds$:
\begin{eqnarray*}
\langle *, *, (1, 3) \rangle &:& \text{from $U$}
\end{eqnarray*}
Then $\cds$ returns, say, $\mv t = (1, 2, 3)$ which does not satisfy any of
the above constraints.

\item Step 3, the following constraint will be added to $\cds$:
\begin{eqnarray*}
\langle *, =2, (2, 4) \rangle &:& \text{from $T$}
\end{eqnarray*}
Then $\cds$ returns, say, $\mv t = (1, 2, 4)$ which does not satisfy any of
the above constraints.

\item Step 4, the following constraint will be added to $\cds$:
\begin{eqnarray*}
\langle *, *, (3,+\infty) \rangle &:& \text{from $U$}
\end{eqnarray*}
Then $\cds$ returns, say, $\mv t = (1, 3, 1)$ which does not satisfy any of
the above constraints.

\item Step 5, the following constraint will be added to $\cds$:
\begin{eqnarray*}
\langle *, (3,+\infty), * \rangle &:& \text{from $T$}\\
\langle *, =2, (4,+\infty) \rangle &:& \text{from $T$}
\end{eqnarray*}
At this point no $\mv t \in\outspace$ is free from the constraints and the
algorithm stops, reporting that the output is empty.
\end{itemize}
\label{ex:dung}
\end{example}

\subsection{Proof of Theorem~\ref{thm:analyze-outer-algorithm}}
\label{app:subsec:thm:analyze-outer-algorithm}

\bp 
We account for the maximum number of iterations to be $O(2^r|\cert|+Z)$ as follows.
We give each comparison in the optimal certificate $O(2^r)$ credits and each output 
tuple $O(1)$ credits.
Every iteration is represented by a distinct probe point (or active tuple) $\mv
t$. Hence, instead of counting the number of iterations we count the number of
probe points $\mv t$ returned by the \cds.

Consider a probe point 
$$\mv t = (t_1,t_2,\dots,t_n)$$ returned by the \cds in some iteration of
Algorithm~\ref{algo:outer-algorithm}. 
If $\mv t$ is an output tuple, then we use a credit from the output tuple
to pay for this iteration.
Hence, the hard part is to account for the probe points $\mv t$ that are
not part of the query's output. In these cases we will use the credits
from the comparisons of $\cert$.

\noindent
{\bf Case 1.} First, let us assume that no input relation has a private
attribute\footnote{An attribute is private if it only appears in one relation.}. (Intuitively, private attributes should not be a factor in any join
decision, so this is the harder case.)

Consider a relation $R \in \atoms(Q)$ with $\arity(R) = k$.
Let the attributes of $R$, in accordance with the GAO, be 
$$\bar A(R) = (A_{s(1)}, \dots, A_{s(k)}).$$
(Strictly speaking, the function $s: [k] \to [n]$ depends on $R$, but we will
implicitly assume this dependency to simplify notation.)
Let $p$ be an integer such that $p \in \{0,1,\dots,k-1\}$. 
For any vector 
$$\mv v = (v_1,\dots,v_p) \in \{\ell, h\}^p,$$ the variable
$$R\left[i^{(v_1)}_R, 
        i^{(v_1,v_2)}_R, \dots,
        i^{(v_1,\dots,v_p)}_R, 
        i^{(v_1,\dots,v_p, h)}_R
  \right]$$ 
is said to be {\em $\mv t$-alignable} if all variables
\[ R\left[i^{(v_1)}_R\right], \cdots,
R\left[i^{(v_1)}_R, 
        i^{(v_1,v_2)}_R, \dots,
        i^{(v_1,\dots,v_p)}_R
\right]\] are already $\mv t$-alignable {\bf and} if
\[ R\left[i^{(v_1)}_R, 
        i^{(v_1,v_2)}_R, \dots,
        i^{(v_1,\dots,v_p)}_R, 
        i^{(v_1,\dots,v_p, h)}_R
  \right]\] is either equal to $t_{s(p+1)}$
or it is not involved in any comparison in the certificate $\cert$.
Similarly, we define $\mv t$-alignability for the variable
$$R\left[i^{(v_1)}_R, 
        i^{(v_1,v_2)}_R, \dots,
        i^{(v_1,\dots,v_p)}_R, 
        i^{(v_1,\dots,v_p, \ell)}_R
\right].$$
The semantic of $\mv t$-alignability is as follows.
For any $p \in [k]$, if a $\mv t$-alignable variable 
$$e = R\left[i^{(v_1)}_R, 
        i^{(v_1,v_2)}_R, \dots,
        i^{(v_1,\dots,v_p)}_R
\right]$$
is not already equal to $t_{s(p)}$, setting $e=t_{s(p)}$ will transform the
input instance into another database instance satisfying all
comparisons in $\cert$ without violating the relative order in the relation
that $e$ belongs to.
Following this semantic, any element whose index is out of range is {\em not}
$\mv t$-alignable.

{\bf Claim:}
Since $\mv t$ is not an output tuple, we claim that there must be a relation 
$R \in \atoms(Q)$ with arity $k$, some $p\in \{0,\dots,k-1\}$
and a vector $\mv v \in \{\ell, h\}^p$ for which both variables
$$R\left[i^{(v_1)}_R, 
        i^{(v_1,v_2)}_R, \dots,
        i^{(v_1,\dots,v_p)}_R, 
        i^{(v_1,\dots,v_p, \ell)}_R
\right]$$ and
$$R\left[i^{(v_1)}_R, 
        i^{(v_1,v_2)}_R, \dots,
        i^{(v_1,\dots,v_p)}_R, 
        i^{(v_1,\dots,v_p, h)}_R
\right]$$ 
are {\bf not} $\mv t$-alignable.

To see the claim, suppose for every relation $R$ the above claim does not hold.
Then, for every relation $R \in \atoms(Q)$ there is a vector 
$$\mv v^R = (v^R_1, \dots, v^R_k) \in \{\ell, h\}^k$$ 
with $k = \arity(R)$ such that the variable
$$R\left[i^{(v^R_1)}_R, 
        i^{(v^R_1,v^R_2)}_R, \dots,
        i^{(v^R_1,\dots,v^R_k)}_R \right]$$
is $\mv t$-alignable. By definition of $\mv t$-alignability, {\em all} the variables 
\begin{equation}
    R\left[i^{(v^R_1)}_R, 
        i^{(v^R_1,v^R_2)}_R, \dots,
        i^{(v^R_1,\dots,v^R_j)}_R \right]
\label{eqn:alignable-vars}
\end{equation}
are also $\mv t$-alignable for every $j \in [k]$.

Now, to reach a contradiction we construct two database instances $I$ and $J$
satisfying all comparisons in $\cert$ yet there is a witness for $Q(I)$ which is
{\em not} a witness for $J$.
\bi
\item {\em The database instance $I$.} Keep all variables the same except for the
following: for each $j \in [k]$, we set
\[ R\left[i^{(v^R_1)}_R, i^{(v^R_1,v^R_2)}_R, \dots, i^{(v^R_1,\dots,v^R_j)}_R\right]
   = t_{s(j)},
\]
for every $R\in\atoms(Q)$.
Then, clearly the following set of full index tuples
\begin{equation}
    \left\{ \left(i_R^{(v^R_1)}, i_R^{(v^R_1,v^R_2)}, \dots,
    i_R^{(v^R_1,\dots,v^R_k)}\right)
     \suchthat R\in \atoms(Q) \right\}
\label{eqn:general-witness}
\end{equation}
is a witness for $Q(I)$.
\item {\em The database instance $J$.} Note that we are in the case where 
$\mv t$ is not an output tuple. Hence, of the variables specified in
\eqref{eqn:alignable-vars}, 
there must be at least one relation $R$ of arity $k$ 
and one index $j \in [k]$ for which
\[R\left[i^{(v^R_1)}_R, i^{(v^R_1,v^R_2)}_R, \dots,
    i^{(v^R_1,\dots,v^R_j)}_R \right] \neq t_{s(j)}.
\]
(Note again that $s$ is a function of $R$ too, but we dropped the subscript for
clarity.)
Now, we set all of the alignable variables in \eqref{eqn:alignable-vars} to
be equal to corresponding coordinate in $\mv t$, except for the above. Then, the
set defined in \eqref{eqn:general-witness} is no longer a witness for $Q(J)$.
\ei
This is a contradiction and the claim is thus proved.

Now, fix a relation $R$ for which the pair of variables in the claim exists.
Let $p$ be the smallest integer in the set $\{0,1,\dots, k-1\}$ for which the
pair of variables are not $\mv t$-alignable. 
In particular, for this value of $p$ the pair
$$R\left[i^{(v_1)}_R, 
        i^{(v_1,v_2)}_R, \dots,
        i^{(v_1,\dots,v_p)}_R, 
        i^{(v_1,\dots,v_p, \ell)}_R
\right]$$ and
$$R\left[i^{(v_1)}_R, 
        i^{(v_1,v_2)}_R, \dots,
        i^{(v_1,\dots,v_p)}_R, 
        i^{(v_1,\dots,v_p, h)}_R
\right]$$ 
are not $\mv t$-alignable due to the fact that both of them are not equal to
$t_{s(p+1)}$, not because a prefix variable wasn't alignable.
In particular, $t_{s(p+1)}$ falls strictly in the open interval between these
two variables.

For this pair, the constraint
added in line \ref{line:additional-constraint} is not empty.
And, each variable in this non-$\mv t$-alignable pair is involved
in a comparison in $\cert$.
We will pay for $\mv t$ by charging this pair of comparisons.
(If one end of this pair is out of range, we will only charge the
non-out-of-range end. The other end is either $-\infty$ or $+\infty$.)

Finally, we want to upper bound how many times a pair of comparisons
is charged. 
Consider a pair of non-$\mv t$-alignable variables
\[ e^\ell = R\left[i^{(v_1)}_R, 
        i^{(v_1,v_2)}_R, \dots,
        i^{(v_1,\dots,v_p)}_R, 
        i^{(v_1,\dots,v_p, \ell)}_R
\right]\] and
\[ e^h = R\left[i^{(v_1)}_R, 
        i^{(v_1,v_2)}_R, \dots,
        i^{(v_1,\dots,v_p)}_R, 
        i^{(v_1,\dots,v_p, h)}_R
\right].\]
Each of $e^\ell$ and $e^h$ is involved in a comparison in $\cert$, and  
we need to bound the total charge for these pairs of comparisons.
We think of the pair of comparisons as an interval between 
$e^\ell$ and $e^h$ in a high dimensional space.

To see the charging argument, let us consider a few simple cases.
When $p=0$, then the interval between $e^h = R[i^h_R]$ and $e^\ell =
R[i^\ell_R]$ is a band from one hyperplane $H_1$ to
another hyperplane $H_2$ of the output space $\outspace$. 
This band consists of all points in $\outspace$ whose 
$A_{s(1)}$-values are between $R[i^\ell_R]$ and $R[i^h_R]$.
We call such an interval an $n$-dimensional interval.
Due to the constraint added in line \ref{line:additional-constraint},
a probe point $\mv t$ from a later iteration cannot belong to the
band. However, $\mv t$ might belong to the ``left'' of $H_1$ or
the ``right'' of $H_2$, in which case a new $n$-dimensional
interval might be created that is charged to the comparison involving
$H_1$ or involving $H_2$. Consequently, each comparison from a
$n$-dimensional interval can be charged twice.

When $p=1$, the interval between $e^\ell$ and $e^h$ 
is an $(n-1)$-dimensional interval which is
a band lying inside the hyperplane whose $A_{s(1)}$-value is
equal to $R[i^{(v_1)}_R]$. In this case, each comparison might be charged $4$ 
times: one from one side of the hyperplane, one from the other side, and twice 
from the two sides {\em inside} the hyperplane itself.

It is not hard to formally generalize the above reasoning to show that 
the comparison involving 
$$R\left[i^{(v_1)}_R, 
        i^{(v_1,v_2)}_R, \dots,
        i^{(v_1,\dots,v_p)}_R, 
        i^{(v_1,\dots,v_p, \ell)}_R
\right]$$ or
$$R\left[i^{(v_1)}_R, 
        i^{(v_1,v_2)}_R, \dots,
        i^{(v_1,\dots,v_p)}_R, 
        i^{(v_1,\dots,v_p, h)}_R
\right]$$ might be charged $2^{p+1}$ times.
Hence, the total number of iterations is at most $O(2^r|\cert| + Z)$.

The total number of constraints inserted into the data structure
$\cds$ is at most $O(m4^r|\cert| + Z)$, because if the probe point $\mv t$ is an
output tuple then only one constraint is inserted, and when $\mv t$ is not an
output tuple then at most $m2^r$ constraints are inserted.

As for the total run-time, when $\mv t$ is an output tuple the iteration does
about $O(mr\log N)$ amount of work. When $\mv t$ is not an output tuple, the
amount of work is $O(m2^r\log N)$. Hence, the total runtime is
$O((4^r|\cert|+rZ)m\log N)$, not counting the total time $\cds$ takes.

\noindent
{\bf Case 2.} Now, suppose some input relation has some private
attribute. Although this was supposed to be the easier case, and it is, it
still needs to be handled with delicate care to rigorously go through.
Let us see where the above proof might fail.

The proof fails at the main claim above. 
When we construct the database instance $J$, in order to use the fact that
\[R\left[i^{(v^R_1)}_R, i^{(v^R_1,v^R_2)}_R, \dots,
    i^{(v^R_1,\dots,v^R_j)}_R \right] \neq t_{s(j)}
\]
to conclude that the set \eqref{eqn:general-witness} is no longer a witness for
$Q(J)$, we crucially use the assumption that there is another relation having
the attribute $A_{s(j)}$. But, it might be the case that in the alignable
variables in \eqref{eqn:alignable-vars}, all the variables on non-private 
attributes already aligns perfectly with $\mv t$, and only private attributes
give us the gap (i.e. $\neq t_{s(j)}$ above) we wanted. In this case, the set
\eqref{eqn:general-witness} actually {\em is} a witness for $Q(J)$ too.
What happens really is that the probe point $\mv t$ is in between output
tuples. All of the non-private attributes already align! 
In this case, we actually need to charge one of the output tuples that align
with all the non-private attributes (and align with $\mv t$).
It is not hard to see that each output tuple is still charged only a constant
number of times this way.
\ep

\subsection{Proof of Theorem~\ref{thm:cds-limit}}
\label{app:subsec:thm:cds-limit}

\bp
Recall from Theorem \ref{thm:analyze-outer-algorithm} that the total runtime of
Algorithm \ref{algo:outer-algorithm} is 
\[O\left( \left( 4^r|\cert| + rZ \right) m \log(N) + T(\cds) \right),\] 
where $T(\cds)$ is the total time taken by the
constraint data structure.  And that the algorithm inserts a total of
$O(m4^r|\cert| + Z)$ constraints to the $\cds$ and issues $O(2^r|\cert| + Z)$
calls to $\getpp()$.  Recall also from Proposition
\ref{prop:optimal-certificate-upperbound} that the optimal certificate size is
only $O(N)$. From this, we show the ``negative'' result stated.

We prove this theorem by using the reduction from {\sc unique-clique} to the
natural join evaluation problem. The reduction is standard
\cite{DBLP:conf/pods/PapadimitriouY97}. The {\sc unique-$m$-clique} input
instance ensures that the output size is at most $1$.  In the reduction the
clique size $m$ will become the number of relations.  The result of Chen et al.
\cite{DBLP:conf/soda/ChenLSZ07} implied that if there was an $O(N^{o(m)})$-time
algorithm solving {\sc unique-$m$-clique}, then the exponential time hypothesis
is wrong, and many $\mathsf{NP}$-complete problems have sub-exponential running
time.  Consequently, if there was a constraint data structure satisfying the
stated conditions, the exponential time hypothesis would not hold.
\ep

\section{Details on the \cds}
\label{app:sec:cds}

This section provides more details on how the \cds is built, and analyzes some
of its basic operations.
The \ctree \ data structure is an implementation of the \cds and we use the two
terms interchangeably. We first need two basic building blocks 
called $\arr$ and $\iarr$.

\subsection{The $\arr$ building block}

A $\arr$ data structure $L$ can store $N$ numbers in {\em sorted} order with 
the following operations:
\begin{enumerate}
\item $L.\fnd(v)$ returns {\sc true} if $v \in L$, {\sc false} otherwise.
\item $L.\fndlub(v)$ returns the smallest $v' \geq v$ in $L$. Return {\sc false} if no such $v'$ exists.
\item $L.\ins(v)$ inserts the value $v$ into $L$
\item $L.\del(v)$ deletes value $v$ from $L$ 
\item $L.\deli(\ell,r)$ deletes all the values stored in $L$ that are in the 
interval $(\ell,r)$, where $\ell, r \in \mathbb Z \cup \{-\infty, +\infty\}$.
\end{enumerate}

\begin{rmk}
Even though we defined the $\arr$ data structure for numbers one can of course 
store more complex elements as long as there is a key value whose domain is 
totally ordered.
\end{rmk}

\begin{prop}
\label{prop:slist}
There exists a data structure that implements a $\arr$ $L$ with $N$ elements 
such that the first four operations above can be performed in time 
$O(\log{N})$ in the worst-case and $\deli$ can be implemented in 
$O(\log{N})$ amortized time.
\end{prop}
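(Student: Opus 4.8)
The plan is to instantiate the $\arr$ $L$ with any standard balanced search tree — for concreteness a red-black tree (equivalently an AVL tree or a 2-3 tree) whose leaves, read left to right, store the keys in increasing order, with internal nodes caching subtree minima/maxima as needed. The first four operations are then textbook. $\fnd(v)$ is an ordinary root-to-leaf search; $\fndlub(v)$ is the same search, returning the successor leaf if $v$ itself is absent (the successor is determined from the last ``go-left'' branch on the search path, or reported as {\sc false} if there is none); and $\ins(v)$, $\del(v)$ are the usual insert/delete followed by $O(1)$ rotations per level to restore the balance invariant. Each of these touches only $O(\log N)$ nodes, giving the claimed worst-case bound, where $N$ is the number of elements currently stored (and, for the amortized claim below, we may take $N$ to be an upper bound on the number of elements ever present).

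For $\deli(\ell,r)$ the plan is to build it out of these primitives: first invoke $\fndlub(\ell)$ (handling the open left endpoint appropriately) to locate the smallest stored value $v' > \ell$; then repeatedly take the successor of the current element and delete it, stopping as soon as the next element is $\ge r$ or $L$ is exhausted. If this call removes $k$ elements, its running time is $O((k+1)\log N)$: one initial search plus $k$ successor-and-delete steps, each costing $O(\log N)$.

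To upgrade this to an amortized $O(\log N)$ bound for $\deli$ I would run a simple credit argument. When $\ins(v)$ is executed we deposit $c\log N$ credits on the new element, for a suitable constant $c$; this only inflates the amortized cost of $\ins$ by a constant factor, and leaves the worst-case costs of $\fnd$, $\fndlub$, $\del$ unchanged since they spend no credits. Every element is created by exactly one $\ins$ and destroyed by at most one later operation; when it is destroyed inside a $\deli$ call, the $O(\log N)$ cost of that particular deletion is paid out of the element's own credits. The only part of a $\deli$ call not covered this way is the initial $\fndlub$ search, costing $O(\log N)$, which we charge directly to the $\deli$ operation. Hence the amortized cost of $\deli$ is $O(\log N)$, as required.

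There is essentially no hard step here: the result is a routine packaging of standard balanced-tree machinery, and the only point needing a sentence of care is checking that the credit scheme is globally consistent — i.e., that the credits spent while deleting an element were in fact deposited by the matching $\ins$ and are never double-spent, which holds because each element has a unique lifetime between its insertion and its (at most one) deletion. (If one instead wanted $O(\log N)$ \emph{worst-case} for $\deli$, one could replace the red-black tree by a balanced search tree supporting $O(\log N)$ split and join — e.g.\ a weight-balanced tree or a treap — and realize $\deli(\ell,r)$ by two splits and one join; but the amortized guarantee stated in the proposition is all that is invoked in the sequel, so I would present the simpler argument above.)
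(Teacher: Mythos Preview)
Your proof is correct and follows essentially the same approach as the paper: instantiate $\arr$ with a balanced binary search tree so that the first four operations are $O(\log N)$ worst-case, and implement $\deli$ by locating the boundary via $\fndlub$ and then deleting the interior elements one by one, amortizing the cost of each such deletion against the $\ins$ that created it. The paper's version is slightly terser (it locates both endpoints with two $\fndlub$ calls rather than iterating successors) and leaves the amortization argument implicit, but the substance is identical.
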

\begin{proof}
By using any balanced binary search trees such as AVL or Red Black trees,
the claim on the first four operations follow immediately. For the claim on 
$\deli$ note that this implies figuring out the index $i'$ of the smallest 
$\ell'\ge \ell$ and the index $j'$ of the smallest $r'\ge r$ in $L$. This can be 
done by calling $\fndlub$. Then we perform $\del$ operations on elements from 
index $i'$ (and $i'+1$ if $\arr(i') = \ell$) to index $j'-1$. There might be 
many of these $\del$ operations but since each of those deleted elements must 
be added at some earlier point, leading to an overall $O(\log{N})$ 
amortized runtime.
\end{proof}

\subsection{The $\iarr$ building block}

The next building block is called an $\iarr$. This data structure stores 
open intervals $(\ell,r)$, where 
$\ell, r \in \mathbb Z \cup \{-\infty, +\infty\}$, and supports the 
following operations (where $I$ is the $\iarr$):
\begin{itemize}
\item $I.\nxt(v)$ returns the smallest integer $v'$ such that 
(i) $v\le v'$ and (ii) $v'\not\in (\ell,r)$ for every $(\ell,r) \in I$.
\item $I.\covers(v)$ returns whether the integer $v$ is covered by some interval in $I$
\item $I.\ins(\ell,r)$ inserts the interval $(\ell,r)$ into $I$.
\end{itemize}

It is not hard to show that one can use a variation of any {\em segment tree} or
{\em interval tree} data structure to construct an $\iarr$ with 
the above operations taking logarithmic {\em amortized} cost.
In the following proposition, for completeness we describe a simple
implementation of $\iarr$ using $\arr$.

\begin{prop}
\label{prop:iarr}
An $\iarr$ can be implemented on $N$ intervals such that
the $\nxt$ and $\covers$ work in $O(\log{N})$ worst-case time, and 
$\ins$  operates in $O(\log{N})$ amortized time.
\end{prop}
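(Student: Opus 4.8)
The plan is to realize the $\iarr$ by storing its current family of \emph{maximal} open intervals — pairwise disjoint as subsets of $\mathbb R$ — in a single $\arr$ $L$ from Proposition~\ref{prop:slist}, each interval kept as a pair $(\ell,r)$ and $L$ sorted by left endpoint $\ell$. Since $\arr$ is realized by a balanced binary search tree, besides $\fndlub$ it also supports, in $O(\log N)$ worst-case time, the symmetric predecessor query returning the stored interval with the largest left endpoint $\le v$. The invariant maintained throughout is genuine disjointness: for any two consecutive stored intervals $(\ell_i,r_i),(\ell_{i+1},r_{i+1})$ we have $r_i\le\ell_{i+1}$ (so, e.g., two touching intervals $(\ell_1,r_1),(r_1,r_2)$ are allowed to coexist and are \emph{not} merged, because the integer point $r_1$ is genuinely uncovered).

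For $\covers(v)$ and $\nxt(v)$ I would first fetch, via the predecessor query, the candidate interval $(\ell,r)$ — the unique stored interval that can possibly contain $v$: any stored interval containing $v$ has left endpoint $<v$, and if it were not the one returned it would overlap the returned one, contradicting disjointness (a two-line check). Then $\covers(v)$ returns ``$\ell<v<r$''. For $\nxt(v)$: if $v$ is uncovered, return $v$; otherwise ($\ell<v<r$) return $r$ when $r$ is finite and $+\infty$ when $r=+\infty$. Correctness of returning $r$ rests on the small lemma that, under the disjointness invariant, the right endpoint of a stored interval is never covered by any stored interval (a covering interval $(\ell',r')$ with $\ell'<r<r'$ would overlap the interval it bounds). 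Both operations make $O(1)$ accesses to $L$, hence run in $O(\log N)$ worst case.

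For $\ins(\ell,r)$ (a no-op if $\ell\ge r$) I would locate the maximal contiguous run of stored intervals overlapping $(\ell,r)$ — exactly those with right endpoint $>\ell$ and left endpoint $<r$, which are consecutive in $L$ by the sorted order, so the run is found with $O(1)$ boundary queries plus a forward scan — delete each of them with $\del$, and insert the single merged interval whose endpoints are the min of the left endpoints and the max of the right endpoints among $(\ell,r)$ and the deleted run; this restores the invariant. A single $\ins$ does $O((1+k)\log N)$ work where $k$ is the number of intervals merged away; since each stored interval is created exactly once and destroyed at most once, charging every $\del$ to the insertion that created that interval gives amortized cost $O(\log N)$ per $\ins$. (Alternatively one can invoke the amortized $\deli$ bound of Proposition~\ref{prop:slist} after noting that a run of overlapping intervals can be removed with $O(1)$ many $\deli$/$\del$ calls.)

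The only real subtlety — and hence the main obstacle — is handling the degenerate cases cleanly in the \emph{correctness} argument rather than the complexity one: open-versus-closed endpoints (touching intervals must not be merged), stored intervals that contain no integer at all (harmless, but the $\nxt$ analysis must not assume otherwise), and the $\pm\infty$ endpoints (an interval $(\ell,+\infty)$ covering $v$ must force $\nxt(v)=+\infty$, which is the termination signal consumed by the $\getpp$ routines). These only require stating the invariant as true disjointness of real intervals and proving the one-line ``a stored right endpoint is never covered'' lemma; they do not affect the asymptotics.
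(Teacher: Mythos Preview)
Your proposal is correct and follows essentially the same approach as the paper: both maintain the current set of pairwise disjoint open intervals in a balanced search tree (the paper's $\arr$), answer $\nxt$/$\covers$ with a single boundary query, and achieve amortized $O(\log N)$ insertion by charging the deletions incurred when merging to the earlier insertions that created those intervals. The only cosmetic differences are that the paper stores individual endpoints tagged $L$/$R$/$M$ and queries with the successor $\fndlub$, whereas you store whole pairs keyed by left endpoint and query with the predecessor; neither choice affects the argument or the bounds.
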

\bp 
The main idea is to store the $N$ intervals as {\em disjoint} intervals. The 
end points are then stored in a $\arr$ and then we use the various operations 
of an $\arr$ to implement the operations of $\iarr$. Details follow.

At any point of time we maintain $m\le N$ {\em disjoint} intervals $(s_i,t_i)$
($i\in [m]$) such that
\[\bigcup_{i=1}^N (\ell_i,r_i)=\bigcup_{j=1}^m (s_j,t_j).\]
We then store all unique numbers $s_1\le t_1\le s_2\le t_2 \le\cdots\le t_m$ in
a $\arr$ $L$ (with two extra bits of information saying whether the number of the
left end point of an interval, right end point of an interval, or both left end
point of some interval and right end point of another interval). For example,
with $2$ intervals $(2,5), (5,9)$, elements in $\arr$ $L$ will be stored as
follows: $(2,L), (5,M), (9,R)$, where $2$ is the left end point of $(2,5)$, $5$
is the mixed point (i.e. the right end point of $(2,5)$ and the left end point
of $(5,9)$), and $9$ is the right end point of $(5,9)$.  Next we show how we
implement the three operations needed on $I$.

We begin with $I.\nxt(v)$. Let $u = L.\fndlub(v)$. If there is no such $u$, then
return $v$. Otherwise, if $u$ is the right end point of an interval or the mixed
point, then $v$ is in an interval whose right end point is $u$ or $v = u$. In
this case, return $u$. Finally, if $u$ is the left end point, then $v$ is not
covered by any interval in $L$. And so, return $v$. By
Proposition~\ref{prop:slist}, all this can be implemented in
$O(\log{m})=O(\log{N})$ time.
 
Next, we consider $I.\covers(v)$. Let $u = L.\fndlub(v)$. If there is no such
$u$, then return false. Otherwise, if $u$ is the right end point or the mixed
point, and $u \neq v$, then $v$ is covered by some interval. In this case, it
returns true. Finally, if $u = v$ or $u$ is the left end point, then $v$ is not
covered by any interval; and so it returns false. It is easy to check that only
$\fndlub$ is used and by Proposition~\ref{prop:slist} this takes $O(\log{N})$
time.

We consider $I.\ins(\ell,r)$ operation. The main idea is to delete all elements
between $\ell$ and $r$ in $\arr$ $L$ and then adjust the left end point and the
right end point. We now present the details. First by $\covers$ operation, we
determine if $l$ and $r$ are covered by some intervals in $I$. Now run
$L.\deli(l,r)$ that will delete all elements in $\arr$ $L$ that are strictly
between $\ell$ and $r$. Finally, we need to adjust the end points $\ell$ and
$r$. If $\ell$ is covered by some interval in $I$, then no action needs to be
taken on this side; the newly inserted interval $(\ell,r)$ will be merged with
the existing interval in $I$ on the left side. Consider the case when there is
an entry $(\ell,b)$ in $\arr$ $L$, where b indicates whether $\ell$ is a left
end point, right end point, or a mixed point. This can be checked by
$L.\fnd(\ell)$. If $\ell$ is the right end point, then update this entry in
$\arr$ $L$ by $(\ell,M)$. In the final case, if $\ell$ is not covered by any
interval and no entry $(\ell,b)$ exists in $\arr$ $L$, then insert $(\ell,L)$
into $L$. That is all about handling the left side. A similar argument handles
the argument for the right side $r$.  To analyze the running time, note that all
operations used are $I.\covers$,$L.\ins$, $L.\fnd$, and $L.\deli$. By
Proposition~\ref{prop:slist} and the above reasoning, this leads to an overall
$O(\log{N})$ amortized runtime.
\ep

\subsection{The $\ctree$ and the $\insconst$ operation}

As described in Section~\ref{sec:cds-outline}, a $\ctree$ is a tree with $n$
levels, one for each of the attributes with the root as the first attribute
in the GAO.  (See also Figure~\ref{fig:ctree}.) The two key data structures
associated with each node are implemented using $\arr$ and $\iarr$: $v.\chld$ is
a $\arr$ and $v.\intv$ is a $\iarr$.


We next describe how the \cds supports $\insconst$. The operation $\insconst$ is
supported by a member function of $\ctree$ called $\inst$ that takes 
as parameter a constraint vector $\mv c= \langle c_1,\dots,c_n \rangle$.
Algorithm~\ref{algo:insert} inserts a constraint vector into a $\ctree$.

\begin{algorithm*}[!thp]
\caption{$\cds.\inst(\vc)$}
\label{algo:insert}
\begin{algorithmic}[1]
\Require{A $\ctree$ $T$ and a constraint vector 
         $\vc= \langle c_1,\dots,c_n \rangle$.}
\Ensure{Update the data structure with $\vc$.}
\Statex

\State $i \gets 1$
\State $v \gets \text{root}(T)$
\While {$c_i$ is not an interval component} \Comment{$c_i \in \mathbb N \cup
\{*\}$}
  \If{$c_i \in \mathbb N$ and $v.\intv.\covers(c_i)$} \label{step:null-check1}
      \State \Return \Comment{$\vc$ is subsumed by an exiting constraint} 
             \label{step:null-check2}
  \ElsIf {$(v.\chld.\fnd(c_i) = \textsc{false})$} 
         \Comment{search even if $c_i = *$}
         \State $v.\chld.\ins(c_i)$ 
         \State Create a new node in $T$ and point $v.\chld(c_i)$ to it 
         \label{step:create-node}
  \EndIf
  \State $v \gets v.\chld(c_i)$
  \State $i \gets i+1$
\EndWhile
\State Suppose $c_i = (\ell,r)$ \Comment{$c_i$ is an interval component}
\State $v.\intv.\ins(\ell,r)$ 
   \Comment{Insert the interval into interval list} \label{step:insert-int}
\State $v.\chld.\deli(\ell,r)$\Comment{Update the $v.\chld$}
\label{step:child-clear}
\end{algorithmic}
\end{algorithm*}

From the description above Proposition~\ref{prop:insert-ctree}
follows straightforwardly.
Note again that when we insert a new interval that covers a lot of existing
intervals we will have to remove existing intervals; hence the cost is
amortized.

\section{$\beta$-acyclic queries}
\label{app:sec:beta-acyclic}

This section analyzes $\getpp$ algorithm for $\beta$-acyclic queries. The key
assumption is that the GAO has to be a nested elimination order, which
as shown in Section~\ref{app:sec:gao} precisely characterizes 
$\beta$-acyclic queries. Since we deal extensively with the partially
ordered sets formed by patterns, Figure~\ref{fig:pattern-of-node} should help 
visualizing these posets.
\begin{figure}[th]
\centerline{\includegraphics[width=2.8in]{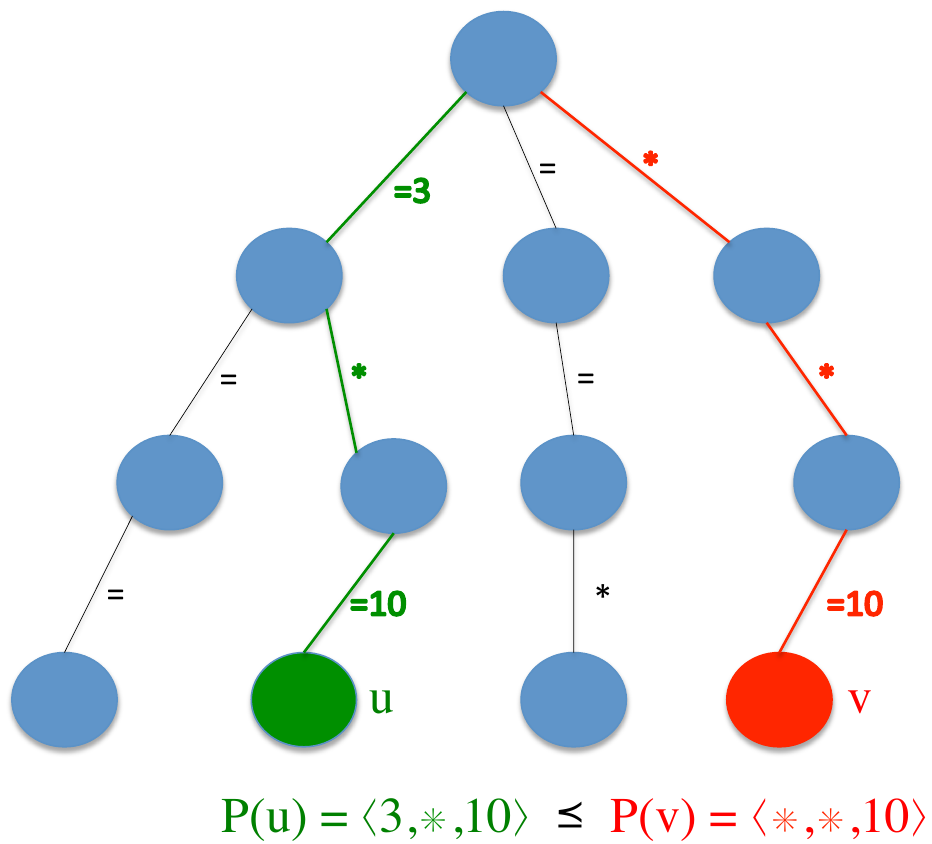}}
\caption{Patterns of nodes and the notion of specialization}
\label{fig:pattern-of-node}
\end{figure}

The meaning of the terms ``specialization'' and ``generalization'' are as
follows. Suppose $P(u)$ is a specialization of $P(v)$. Then, the constraints
stored in $v$ are of ``higher-order'' than the constraints stored in $u$. 
To be more concrete, suppose $P(u) = \langle 3, 5 \rangle$ and
$P(v) = \langle *, 5 \rangle$. Then, for a tuple $\mv t = (t_1,t_2,t_3)$ to
satisfy a constraint stored in $P(u)$, it must be the case that $t_1=3$,
$t_2=5$, and $t_3$ belongs to some interval stored in $u.\intv$.
On the other hand, for the tuple to satisfy $P(v)$ we only need $t_2 = 5$ and
$t_3 \in v.\intv$.

\subsection{Proof of Proposition~\ref{prop:chain}}
\label{app:subsec:prop:chain}

\bp
Let $t_1,\dots,t_i$ be an arbitrary prefix. Recall that the principal filter
$G = G(t_1,\dots,t_i)$ is a set of nodes $u$ -- or equivalently the set of patterns
$P(u)$ -- which are above the pattern $\langle t_1,\dots, t_i\rangle$ in the
partial order defined in Section~\ref{subsec:pattern-neo}.
In particular, for every pattern $P(u)$ in $G$, its equality component comes
from one of $\{t_1,\dots,t_i\}$. It follows that $G$ is isomorphic to a
sub-poset of the prefix poset $\calP_{i+1}$, which 
is a chain by Proposition~\ref{prop:beta-gao}.

Note the important fact that, strictly speaking, the patterns in
$G(t_1,\dots,t_i)$ might come from constraints inserted from relations, or
constraints inserted by the outputs of the join. The constraints corresponding
to the outputs of the joins always match every entry in a prefix $\langle
t_1,\dots,t_i\rangle$, hence even though Proposition~\ref{prop:beta-gao} only
infers that the patterns from input relations form a chain, we can safely
conclude that the entire poset $G(t_1,\dots,t_i)$ is a chain.
\ep

\subsection{Proof of Lemma~\ref{lmm:amortized-analysis-acyclic-Q}}
\label{app:subsec:lmm:amortized-analysis-acyclic-Q}

This section analyzes the overall run-time of $\ms$ for $\beta$-acyclic
queries. Let us first summarize what we know so far.
Theorem \ref{thm:analyze-outer-algorithm} showed that
the total runtime of $\ms$ (Algorithm \ref{algo:outer-algorithm}) is
\[O\left( \left( 4^r|\cert| + rZ \right) m \log(N) + T(\cds) \right),\]
where $T(\cds)$ is the total time it takes the constraint data structure.
$\ms$ inserts a total of $O(m4^r|\cert| + Z)$ constraints to 
$\cds$ and issues $O(2^r|\cert| + Z)$ calls to $\getpp()$, where $\cert$ is any
certificate, $Z$ is the output size, and $r$ is the maximum arity over all
relations. What we will prove in this section is,
provided that the global attribute order is the nested 
elimination order, we have
\[ T(\cds) = O\left((4^r|\cert|+Z)mn2^n\log N\right). \]
This means the overall runtime is
$O\left( mn2^n \left( 4^r|\cert| + Z \right) \log(N)\right)$. 
Hence, in terms of data complexity the runtime is nearly optimal: $\tilde{O}(|\cert| + Z)$.

\begin{lmm}[Re-statement of Lemma~\ref{lmm:amortized-analysis-acyclic-Q}]
Suppose the input query $Q$ is $\beta$-acyclic, and the global attribute order
$A_1,\dots,A_n$ is a nested elimination order, then each of the two 
operations $\getpp()$ and $\insconst()$ of $\ctree$ takes amortized time 
$O(n2^n\log W)$, where $W$ is the total number of constraints ever 
inserted into $\ctree$.
\end{lmm}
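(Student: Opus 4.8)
The plan is to reduce the lemma to an amortized analysis of $\getpp$, since $\insconst$ is already shown to run in $O(n\log W)$ amortized time by Proposition~\ref{prop:insert-ctree}, which is well within the claimed bound. Recall that $\getpp$ (Algorithm~\ref{algo:getpp-beta}) builds the active tuple $\mv t=(t_1,\dots,t_n)$ coordinate by coordinate: at depth $i$ it forms the principal filter $G=G(t_1,\dots,t_i)$, which by Proposition~\ref{prop:chain} is a chain (hence has at most $i+1\le n+1$ nodes), finds its bottom node $\bar u$, and calls $\nextvalue(-1,\bar u,G)$ (line~\ref{line:initial-call-beta}); if that returns $+\infty$ it records a point constraint (line~\ref{line:backtrack}) and backtracks. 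So the total cost splits into (a) the per-level bookkeeping, dominated by assembling $G$; (b) the work inside the recursive $\nextvalue$ invocations; and (c) bounding the total number of level-visits (advances plus backtracks).

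For (a), a length-$i$ prefix has at most $2^i\le 2^n$ generalizations, and each can be located by walking down the $\ctree$ through the sorted children lists $v.\chld$ in $O(n\log W)$ time, so $G$ is assembled in $O(2^n n\log W)$ time; the remaining per-level work (reading $\bar{\mv p}$, computing $i_0$, one $\insconst$ on backtrack) is $O(n\log W)$. Thus (a) contributes $O(2^n n\log W)$ per level-visit.

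The crux is (b). I would run the amortization suggested by the remark after Algorithm~\ref{algo:ctree-nextvalue-beta}, using a potential that places $\Theta(\log W)$ credits on every interval currently stored in the $\ctree$. The key combinatorial fact for a single invocation $\nextvalue(x,u,G)$ whose repeat--until loop runs $K$ rounds is that its internal values are monotone, $x\le z_1\le y_1< z_2\le y_2<\cdots$, and for each round $j<K$ the strict inequality $z_j<y_j$ means some interval $I_j\in u.\intv$ covers $z_j$; moreover $I_j$'s left endpoint is $\ge y_{j-1}\ge x$, so the $I_j$'s are pairwise distinct and (except possibly $I_1$, whose left endpoint might be $-\infty$) all lie inside $(x-1,y)$, where $y$ is the returned value. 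Since the invocation ends by inserting $\langle P(u),(x-1,y)\rangle$ (line~\ref{line:new-constraint-beta}), the $\iarr$ merge absorbs all those interior $I_j$'s, releasing their credits; I would pay for all but a constant number of the rounds with these released credits, pay the remaining $O(\log W)$ rounds and the newly created interval's endowment out of pocket, and charge each later destruction of an interval to the credit it already carries. This makes the amortized cost of each $\nextvalue$ invocation $O(\log W)$, and since each nested call arises as one round of the invocation one level up the chain (whose length is $\le n+1$), the total number of invocations and the total work fold down to $O(n\log W)$ times the number of level-visits plus the number of intervals ever created.

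\textbf{The main obstacle} is closing this last accounting: one must bound the total number of intervals ever created, which includes not only the $W$ external insertions but also one new interval per $\nextvalue$ invocation and one per backtrack, and the naive bound ``each internal insertion absorbs an older interval'' is circular because deep recursive calls proliferate a $\Theta(2^n)$-factor down each chain. Breaking the circularity requires exploiting the memoization more carefully---the interval $\langle P(v),(\cdot)\rangle$ that $\nextvalue$ leaves at a node $v$ high in the chain must be shown to absorb the cost of all subsequent calls at $v$ within the same and later invocations, so that the only genuinely new progress is ultimately charged against the $W$ external constraints (together with the query-dependent $2^n$, $4^r$ factors and the $O(\log N)$ for each $\iarr$/$\arr$ operation). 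Secondary, but still fiddly, are verifying the monotonicity and distinctness claims in the degenerate cases ($\pm\infty$ endpoints, out-of-range or empty intervals, equal consecutive values) and confirming that a prefix eliminated by a backtrack point-constraint is never re-explored, so that the number of level-visits is itself $O(W)$ up to the stated factors. Once (a), (b), (c) are bounded by $O(n2^n\log W)$ per $\ctree$ operation, the lemma follows, and combining it with Theorem~\ref{thm:analyze-outer-algorithm} gives the overall running time.
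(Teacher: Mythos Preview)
Your outline has the right skeleton---chain structure from Proposition~\ref{prop:chain}, interval absorption inside \nextvalue, and a credit amortization---but the accounting you propose does not close, and you yourself flag the hole. The flat $\Theta(\log W)$-per-interval potential is too weak: when a round at $u_j$ absorbs one interval of $u_j.\intv$, you release $\Theta(\log W)$ credits, yet that round's cost is not $\Theta(\log W)$; it includes the full recursive call $\nextvalue(y,u_{j-1},G)$, which in turn creates a fresh interval at $u_{j-1}$ that also needs its own endowment, and so on up the chain. This is exactly the circularity you name, and a flat potential cannot break it; your tentative claim that ``the amortized cost of each \nextvalue\ invocation is $O(\log W)$'' is therefore not established (and, as stated, would even beat the $2^n$ factor in the lemma).

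The paper resolves this by making the potential \emph{depth-dependent}. The \textbf{interval credit invariant} is: every interval stored at a node $u$ carries a reserve of $(2^{|P(u)|+1}-2)c\log W$ credits, where $|P(u)|$ is the number of equality components of $P(u)$. With this choice one proves, by induction up the chain $\bar u=u_k\prec\cdots\prec u_1$, that a call $\nextvalue(x,u_j,G)$ costs at most $(2^{|P(u_j)|+2}-3)c\log W$. The induction step works because one round at $u_j$ costs (recursive call at $u_{j-1}$) $+$ (one $\nxt$), which is at most $(2^{|P(u_{j-1})|+2}-3)c\log W+c\log W\le(2^{|P(u_j)|+1}-2)c\log W$, i.e.\ exactly one $u_j$-interval's reserve; hence every round after the first is paid by the interval it absorbs, and the first round, the final \insconst, and the new interval's reserve together give the stated bound. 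Since $|P(u_k)|\le n-1$, the top-level call costs $O(2^n\log W)$, and the per-depth budget of $\getpp$ is set accordingly.

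Your treatment of backtracking is also underspecified. The paper handles it by over-funding \insconst: in addition to the interval's own reserve, each \emph{prefix component} of every inserted constraint is given $5\cdot 2^n c\log W$ credits. When $\getpp$ backtracks from depth $i$ to $i_0-1$ via line~\ref{line:backtrack}, the components $\bar p_{i_0},\dots,\bar p_i$ of $\bar{\mv p}$ are permanently ruled out, so their banked credits are released to (i) pay for the backtrack \insconst\ and (ii) replenish the per-depth $\getpp$ credits for the depths $i_0-1,\dots,i$ that will now be revisited. This is what actually bounds the number of level-visits; it is not that ``level-visits are $O(W)$'' directly, but that each revisit is funded by credits previously deposited on components that the backtrack has just eliminated.
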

\bp
For each node $u\in \cds$, let $|P(u)|$ denote the number of equality components
in the pattern $P(u)$. For example, if $P(u) = \langle *, *, * \rangle$ then
$|P(u)| = 0$; and if $P(u) = \langle *, 3, 2 \rangle$ then $|P(u)| = 2$.
Note that $|P(u)| \leq n-1$, for all $u\in \cds$.

Our proof strategy is as follows. We equip each of the $\insconst$ and 
$\getpp$ operations with $O(n2^n\log W)$ ``credits.''  We then show that those 
credits are
sufficient to account for the runtime of each operation, {\em and} at the 
same time maintain the following {\em interval-credit invariant}.

{\bf Interval credit invariant:} 
for every node $u\in \cds$, and for every interval in the list $u.\intv$,
there is always a reserve of at least $\bigl(2^{|P(u)|+1}-2\bigr)c \log W$ 
credits at any point in time, where $c$ is a constant to be specified later.
(Note that, by definition, if $|P(u)| = 0$ then the intervals in $u.\intv$ do 
not need any reserve credits to maintain the invariant.)

First, for the $\insconst$ operation, the interval-credit invariant is easy 
to maintain.
From Proposition~\ref{prop:insert-ctree}, $O(n\log W)$ credits 
per operation is already sufficient; furthermore, we have up to
$O(n2^n\log W)$ credits to
spend. Hence, we have more than enough to give 
$(2^n-2)c \log W$ credits to the interval of the new constraint for a large enough constant $c$.
In fact, we will be very generous by assigning credits as follows.
\bi
 \item We give the interval component of the newly inserted constraint 
$(2^n-2)c\log W$ credits to maintain the interval-credit invariant. Note that
$2^n \geq 2^{|P(u)|+1}$ for any node $u$ in the tree.
 \item We give each component $c_i$ (equality or wildcard)
that comes {\em before} the interval component of the new constraint
$5 \cdot 2^n c\log W$ credits. 
How these credits will be used is explained below.
\ei
Overall, each $\insconst$ operation requires at most
\[ n2^{n+3}c\log W = O(n2^n\log W) \] credits as desired.
Note again that $n2^{n+3}c\log W$ credits is a lot more than what is required 
for the $\insconst$ operation by itself.
We need the extra credit to pay for something else down the line.

Next, we consider a $\getpp$ operation. We iterate through the depth $i$ of the
\cds, for $i$ goes from $0$ (the root) to $n-1$ (a leaf). At each depth $i$,
we try to compute the value $t_{i+1}$, backtracking if necessary.
The crucial observation is the following: at each depth $i$ of the
algorithm, 
thanks to Proposition~\ref{prop:beta-gao},
the set $G$ forms a {\em totally ordered set} because the global 
attribute order is a nested elimination order.
Note that Proposition~\ref{prop:beta-gao} only considers
input relations. In the constraint tree there might be constraints
inserted due to the output tuples. However, those constraints are always the most
specific (i.e. they are at the bottom of any poset they participate in),
and thus in any poset $G$ at any depth $i$ the pattern coming from an
output-initiated constraint is a specialization of any 
input-initiated pattern. 
Furthermore, and this is a slightly subtle point, there are also intervals
inserted due to backtracking; but luckily due to the chain property
of the prefix poset $\calP_k$, for $\beta$-acyclic queries the backtracking
intervals have patterns which are just the same as the patterns 
from input-generated intervals.
Since $G$ is a totally ordered set, it has a bottom element $\bar u \in G$.

The basic idea is to show that at each depth $i$ the algorithm takes 
$O(2^n\log W)$-amortized time, accounted for by using newly infused 
$O(n2^n\log W)$ credits from $\getpp$ 
{\em and} the reserved credits from existing intervals guaranteed by the 
invariant.
At the same time, we need to still maintain the interval credit invariant
and thus we cannot abuse the banked reserve of the data structure.
In particular, intervals whose reserved credits have been used up have to
somehow ``disappear'' or be infused with fresh credits to maintain the
invariant.

Specifically, we will equip the $\getpp$ exactly $n2^{n+1}c\log W$ credits,
distributing precisely $2^{n+1}c\log W$ credits to each depth $i$ of the tree.
These credits will be called the {\em depth-$i$} credits of $\getpp$.

Fix an iteration at depth $i\in \{0,\dots,n-1\}$ of the \cds.
If $G = \emptyset$ (line~\ref{line:beta-Gempty} of
Algorithm~\ref{algo:getpp-beta}), then we move on to the next depth and 
hence depth-$i$ credits of $\getpp$ is more than sufficient to spend here, 
assuming $c$ is sufficiently large.
Henceforth, suppose $|G| \geq 1$. Note that we are still considering
depth $i$.

{\bf Case 1.} Let us first assume that there is no backtracking at this 
depth.
Let $\bar u = u_k \prec u_{k-1} \prec \cdots \prec u_1$ be the members of the 
poset $G$, which as explained above is a total order. 
We will show by induction the following claim.

{\bf Claim.} For every $j \in [k]$,
the call $\cds.\nextvalue(x, u_j, G)$ takes amortized time
$$(2^{|P(u_j)+2|}-3) c \log W,$$ while maintaining the interval credit 
invariant.
In other words, we need to use $(2^{|P(u_j)+2|}-3) c \log W$ credits from somewhere
to pay for this call.

From the claim, and from the fact that
\[ n-1 \geq |P(u_k)| > |P(u_{k-1})| > \cdots > P(u_1) \geq 0 \]
the initial call $\cds.\nextvalue(-1, u_k, G)$ 
(line~\ref{line:initial-call-beta} of Algorithm~\ref{algo:getpp-beta}) 
takes time at most $(2^{n+1}-3) c \log W = O(2^n\log W)$.
Consequently, the depth-$i$ credits of $\getpp$ is sufficient to pay for the
call.

We next prove the claim by induction. The base case is when $j=1$,
i.e. when we are calling
$u_1.\intv.\nxt(x)$. Line~\ref{line:nextvalue-basecase}
of Algorithm~\ref{algo:ctree-nextvalue-beta} 
takes $O(\log W)$-time, thanks to Proposition~\ref{prop:iarr}. 
Note that, 
\[ (2^{|P(u_1)|+2}-3) c \log W \geq c\log W. \]
Hence, with $c$ large we have enough credits to pay for the call.

Next, consider $j \geq 2$ and assume the claim holds for $j-1$. 
Consider a call to $\cds.\nextvalue(x, u_j, G)$. 
An iteration of Algorithm~\ref{algo:ctree-nextvalue-beta} has two
steps: 
(a) line \ref{line:z-beta} takes $(2^{|P(u_{j-1})|+2}-3)c\log W$-credits 
by the claim's induction hypothesis,
and 
(b) line \ref{line:y-beta} takes $c \log W$-time for $c$ large.
In total, each iteration takes time at most
\[ \bigl(2^{|P(u_{j-1})|+2}-3\bigr)c\log W + c \log W \leq
   \bigl(2^{|P(u_j)|+1}-2\bigr)c\log W.
\]
If we had to continue on with the next iteration ($y\neq z$), then it must be the 
case that $z \in (\ell, y)$ for some interval $(\ell, y) \in u_j.\intv$.
By the interval credit invariant, this interval $(\ell, y)$ has
a credit reserve of 
$\bigl(2^{|P(u_j)|+1}-2\bigr)c\log W$, which by the above inequality is
sufficient to pay for the next iteration!
This process repeats itself.

Consequently, the reserves of credits at $u_j$-intervals that $z$ hits pay for 
subsequent iterations.
We are left to pay for
$(i)$ the first iteration,
$(ii)$ the insertion of the new interval in line~\ref{line:new-constraint-beta}, 
{\em and} 
$(iii)$ fresh credits to deposit to the newly inserted interval to maintain
the invariant. 
It is crucial to notice that the newly inserted interval ``consumes'' all
intervals whose credits we have used up to pay for subsequent iterations.
Hence, by paying for $(i)$, $(ii)$, and $(iii)$ above we are done.

With large $c$, the insertion in line~\ref{line:new-constraint-beta} takes time
at most $c \log W$. Hence,
the missing amount in all these three unpaid operations is at most
$$
    \left((2^{|P(u_{j-1})|+2}-3)c\log W + c \log W\right) + c \log W 
   + (2^{|P(u_j)|+1} -2) c \log W
   \leq (2^{|P(u_j)|+ 2}-3)c\log W.
$$
This proves the claim.
Figure~\ref{fig:|G|=2} illustrates the induction reasoning and where all the
credits go.
\begin{figure*}[!thp]
\centerline{\includegraphics[width=4.5in]{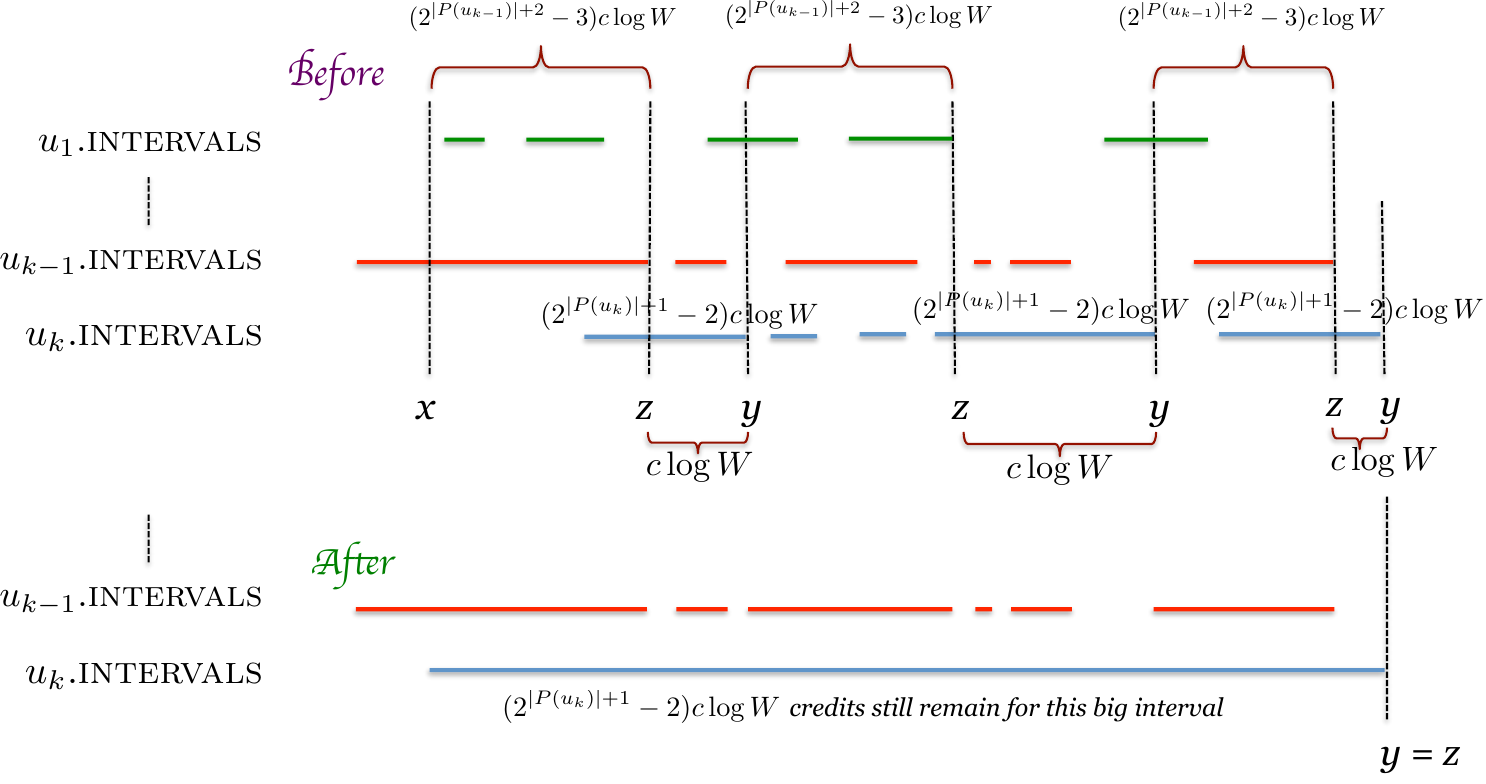}}
\caption{Illustration of the analysis of Algorithm~\ref{algo:ctree-nextvalue-beta}}
\label{fig:|G|=2}
\end{figure*}

{\bf Case 2.} Next, we consider the case when there is some backtracking.
When backtracking occurs in line~\ref{line:backtrack}, we have 
$5 \cdot 2^n c \log W$ credits for {\em each} of the components of the 
constraint $\bar u$ from $\bar p_{i_0}$ to $\bar p_i$.
We will use these credits as follows.
\bi
 \item $2^nc \log W$ credits of $\bar p_{i_0}$ is used for the insertion in
 line~\ref{line:backtrack} itself, 
 \item $2\cdot 2^nc \log W$ credits of $\bar p_{i_0}$ is now considered fresh 
 depth-$(i_0-1)$ credits of the $\getpp$ operation. 
 \item $2\cdot 2^nc \log W$ credits of $\bar p_{i'}$ for every $i'$ from $i_0$
 to $i$ are now considered fresh depth-$i'$ credits of $\getpp$.
 We need these fresh credits because the depth-$i'$ credits from $\getpp$ 
 have been used up when we visited depth up to $i$ before this backtracking
 step. 
 Luckily when we backtrack we will never visit the points
 $\bar p_{i_0}, \dots, \bar p_i$ again due to the constraint inserted in
 line~\ref{line:backtrack},
 and hence their credits can be used freely.
\ei
\ep

\subsection{Proof of Proposition~\ref{prop:no-io-non-beta}}
\label{app:subsec:prop:no-io-non-beta}

To prove Proposition~\ref{prop:no-io-non-beta} we need an auxiliary lemma.
\begin{lmm}[\cite{3sum}]
\label{lem:mihai}
If $3$-\textsf{SUM} problem does not have a sub-quadratic algorithm, then for every $c\ge 3$, there exist $c$-partite graphs $G$ such that listing all $O(|E|)$ $c$-cycles in $G$ needs to take time $\Omega\left(|E|^{4/3-\eps}\right)$ for any $\eps>0$. Further, the graph $G$ can be written as $(V_1,V_2,\dots,V_c;E)$ where $E$ can be written as the disjoint union of edge sets $E_{i,i+1\mod{c}+1}\subseteq V_i\times V_{i+1\mod{c}+1}$.
\end{lmm}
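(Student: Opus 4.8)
The plan is to reduce the general $c\ge 3$ case to the triangle case $c=3$, and then to prove the triangle case by a (randomized) reduction from $3$\textsf{SUM} in the style of \cite{3sum}. Assume first that, for $c=3$, one has produced a tripartite graph $G_0=(A,B,C;E_{AB},E_{BC},E_{CA})$ with $m_0=|E(G_0)|$ edges and only $O(m_0)$ triangles, on which any triangle‑enumeration algorithm provably needs $\Omega(m_0^{4/3-\eps})$ time under the $3$\textsf{SUM} assumption. For $c>3$, I would ``stretch'' the class $E_{CA}$ into a path of $c-2$ new layers: set $V_1=A$, $V_2=B$, $V_3=C$, let $V_4,\dots,V_c$ be disjoint copies of the set $E_{CA}$ (a vertex of $V_\ell$, $\ell\ge 4$, is a pair $(u,w)\in E_{CA}$), and take $E_{12}=E_{AB}$, $E_{23}=E_{BC}$, then $E_{34}=\{(u,(u,w)):(u,w)\in E_{CA}\}$, the identity maps $E_{\ell,\ell+1}=\{(e,e):e\in E_{CA}\}$ for $4\le \ell\le c-1$, and $E_{c,1}=\{((u,w),w):(u,w)\in E_{CA}\}$. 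Reading the classes as oriented $V_i\to V_{i+1}$ (indices mod $c$), as the notation $V_i\times V_{i+1}$ in the statement indicates, every $c$‑cycle of $G$ traverses $V_1,\dots,V_c$ in order; the identity bridges force the vertices chosen from $V_4,\dots,V_c$ all to be one edge $(u,w)\in E_{CA}$, and the two endpoint maps force $u$ to be the chosen $V_3$‑vertex and $w$ the chosen $V_1$‑vertex, so $c$‑cycles of $G$ are in bijection with triangles of $G_0$. For constant $c$, $|E(G)|=\Theta(m_0)$, the number of $c$‑cycles is $O(m_0)=O(|E(G)|)$, and $E(G)=\bigsqcup_{i=1}^{c}E_{i,i+1}$ by construction; hence an $O(|E(G)|^{4/3-\eps})$‑time $c$‑cycle lister would yield an $O(m_0^{4/3-\eps'})$‑time triangle lister for $G_0$, a contradiction. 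So it suffices to treat $c=3$.

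\textbf{The triangle case.} Here I would follow the two stages of \cite{3sum}. Stage one is the randomized self‑reduction of $3$\textsf{SUM} to \emph{Convolution‑$3$\textsf{SUM}} (given $A[0..n-1]$, decide whether $A[i]+A[j]=A[i+j]$ for some $i+j<n$): reduce the universe to $\poly(n)$ and then to $O(n^2)$ by an almost‑linear hash, re‑verify the $O(n)$ surviving candidates in $\tilde O(n)$ time, and then pass to the convolution form by the renaming argument, so that an $O(n^{2-\delta})$ algorithm for Convolution‑$3$\textsf{SUM} yields an $O(n^{2-\delta'})$ algorithm for $3$\textsf{SUM}. Stage two reduces Convolution‑$3$\textsf{SUM} on length $n$ to triangle enumeration: partition the index range into $n/\Delta$ blocks $I_0,I_1,\dots$ of width $\Delta$, so $i\in I_p,\ j\in I_q$ forces $i+j\in I_{p+q}\cup I_{p+q+1}$, and build a tripartite graph whose three parts are the (block, value) pairs arising from the first, second, and third coordinate, with edges dictated by this block arithmetic together with array‑membership checks precomputed by sorting/hashing, so that all solutions appear among the triangles. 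Choosing $\Delta=\tilde\Theta(n^{1/2})$ (and hashing values into a compatible number of buckets) makes the graph have $m=\tilde\Theta(n^{3/2})$ edges, makes the number of triangles $O(m)$ with high probability, and makes the construction run in $\tilde O(n^{3/2})$ time. The identity $(n^{3/2})^{4/3}=n^2$ then closes the argument: given an $O(m^{4/3-\eps})$‑time lister valid on all $O(m)$‑triangle graphs, run it on $G$ but abort after $\Theta(m^{4/3-\eps})$ steps or after $\Theta(m)$ outputs — an early abort means $G$ has $\omega(m)$ triangles, hence a solution; otherwise check the $O(m)$ listed triangles individually — which decides Convolution‑$3$\textsf{SUM}, hence $3$\textsf{SUM}, in $\tilde O(n^{(3/2)(4/3-\eps)})=\tilde O(n^{2-(3/2)\eps})$ time, contradicting the assumption. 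Varying $n$ produces the required infinite family, and since the reduction is randomized the lower bound is against (possibly randomized) algorithms.

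\textbf{Main obstacle.} The whole difficulty is concentrated in stage two: one must choose the block width and the value‑hash so that \emph{simultaneously} (i) the number of triangles is $O(m)$, i.e. spurious block/bucket triples are almost completely eliminated, (ii) $m=\tilde O(n^{3/2})$, and (iii) the graph is constructible in $\tilde O(n^{3/2})$ time; weakening any one of the three pushes the attainable exponent strictly below $4/3$. Getting all three at once is precisely the technical content of \cite{3sum} (and of its subsequent refinements), and for the exact hash family, the collision and false‑positive bounds, and the $\tilde O$‑versus‑$\Omega$ bookkeeping I would cite that paper; everything else above — in particular the $c\ge 3$ stretching gadget and the truncation trick — is routine.
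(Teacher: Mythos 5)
Your proposal is correct and takes essentially the same approach as the paper: invoke~\cite{3sum} for the $c=3$ base case, then extend to $c>3$ by subdividing one edge class of the hard tripartite instance into paths of length $c-2$, so that $c$-cycles of the resulting $c$-partite graph correspond exactly to triangles of the original. The paper's proof leaves the $c=3$ case entirely to the citation, while you also sketch the internal two-stage reduction via Convolution-$3$\textsf{SUM}; that extra detail is welcome but not required for the lemma as stated.
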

\bp
The result for $c=3$ appears in~\cite{3sum}. The extension to the case of $c>3$ is simple: pick any two partitions and replace each edge (between the two partitions) by a path of length $c-2$. Note that the resulting graph is $c$-partite (with the claimed special structure on the edge set), has $O(c|E|)$ edges and has a $c$-cycle if and only if the original tri-partite graph has a triangle.
\ep

\bp[Proof of Proposition~\ref{prop:no-io-non-beta}]
Consider an arbitrary $\beta$-cyclic query $Q$ (with attribute set $\{A_1,\dots,A_n\}$ and relations/hyperedges $R_1,\dots,R_m$). Note that this implies that $Q$ has a $\beta$-cycle of length $c\ge 3$. W.l.o.g. assume that the cycle involves relations $R_1,\dots,R_c$ and the attributes $A_1,\dots,A_c$. In other words, for every $1<i\le c$, $\{A_{i-1},A_i\} = R_i\cap \{A_1,\dots,A_c\}$ and $\{A_1,A_c\}= R_1\cap \{A_1,\dots,A_c\}$. The idea is to embed the hard instance for listing $c$-cycles from Lemma~\ref{lem:mihai} into this cycle. Details follow.

Define
\[ Q'= \ \Join_{i=1}^c R_i.\]
Let $G=(V_1,V_2,\dots,V_c;E)$ be the hard instance for listing $c$-cycles from Lemma~\ref{lem:mihai}, where $E$ is the disjoint union of $E_{i,i+1\mod{c}+1}\subseteq V_i\times V_{i+1\mod{c}+1}$. 
Further, define the instance for $Q$ as follows:
\begin{eqnarray*}
R_1 &=& \left\{(u,1,\dots,1,v)| (u,v)\in E_{1,c}\right\}
\times\left(\bigtimes_{c<j\le m: A_j\in R_1} \{1\}\right),\\
R_i&=&\left( \bigtimes_{j<i-1: A_j\in R_i} \{1\}\right)\times E_{i-1,i}\times
\left(\bigtimes_{j>i: A_j\in R_i} \{1\}\right) \text{ for } 1<i\le c,\\
R_i&=& \left(\bigtimes_{j\in [c]: A_j\in R_i} V_j\right) \times \left(\bigtimes_{c<j\le m: A_j\in R_i} \{1\}\right) \text{ for } c<i\le m.
\end{eqnarray*}
Note that the size of the output of $Q$ in the instance above is exactly the same as the size of the output of $Q'$. Further, there is a one-to-one correspondence between an output tuple of $Q'$ (and hence $Q$) and a $c$-cycle of $G$. (Indeed for each $c$-cycle $(v_1,\dots,v_c,v_{c+1}=v_1)$ in $G$ (where $v_j\in V_j$), the output tuple of $Q$ assigns $v_j$ to attribute $A_j$ for every $j\in [c]$ and every other attribute is assigned $1$.)

Since the relations $R_i$ for $i\in [c]$ are of size $O(|E|)$, Proposition \ref{prop:optimal-certificate-upperbound} implies that
the optimal certificate for $Q'$ has size $O(|E|)$. We claim that such a certificate can be extended to a certificate for $Q$
also of size $O(|E|)$ as we did in the proof of Proposition~\ref{prop:tw-eth}. Indeed, the certificate for $Q'$ is enough to pinpoint
which tuples are the output tuples with $O(|E|)$ comparisons
(or certify that the join is empty). Then $O(n\log(|E|))$ more comparisons can
verify whether each of the output tuples of $Q'$ can be extended to an output
tuple of $Q'$. (The formal argument is pretty much the same as the argument for
the proof of Proposition~\ref{prop:tw-eth}. The only difference is that $Q$ can
have attributes that are not in $Q'$ and for such attributes we have to check
that they have the same values in the projections to $R_j$ for appropriate
$c<j\le m$ but this can be done in the claimed time.)
Since $G$ has $O(|E|)$ $c$-cycles, this will produce an $O(|E|+c\cdot n\cdot|E|\cdot \log(|E|))=\tilde{O}_n(|E|)$ sized certificate for $Q$. Thus, an $O(|\cert|^{4/3-\eps})$ time algorithm to solve $Q$ for any $\eps>0$ would (by Lemma~\ref{lem:mihai}) imply a sub-quadratic time algorithm for $3$\textsf{SUM}, which is a contradiction.
\ep

\section{General queries}
\label{app:sec:tw}

In this section, we show that $\ms$ runs in time roughly 
$\tilde O( \cert^{w+1} + Z)$ for general queries ($\beta$-acyclic or not) where
$w$ is the elimination width of the GAO. In particular, if the query has
treewidth $w$, then there exists a GAO for which the above runtime holds,
thanks to Proposition~\ref{prop:gao-tw}.

The algorithm for general query is the same as that of the $\beta$-acyclic case;
the only (slight) difference is in $\getpp$, which is described next.

\subsection{Algorithms}
\label{app:subsec:tw-algorithms}

The $\getpp$ algorithm for general queries
remains very similar in structure to that of the $\beta$-acyclic case
(Algorithm~\ref{algo:getpp-beta}), and if the input query is $\beta$-acyclic
with the nested elimination order as the global attribute order, then 
the general $\getpp$ algorithm is exactly Algorithm~\ref{algo:getpp-beta}.
The new issue we have to deal with lies in the fact that the poset $G$ at
each depth is not necessarily a chain.
Our solution shown in Algorithm~\ref{algo:getpp-tw}
is very simple and quite natural: we mimic the behavior of
Algorithm~\ref{algo:getpp-beta} on a ``shadow'' of
$G$ that {\em is} a chain and make use of both the algorithm and the analysis
for the $\beta$-acyclic case.

The ``shadow'' of $G$ is constructed as follows. 
Let $u_1,\dots,u_k$ be an arbitrary linearization of nodes in $G$, 
i.e. if $1 \leq i<j\leq k$, then either $P(u_i) \preceq P(u_j)$ or 
$P(u_i)$ and $P(u_j)$ are incomparable using the relation $\preceq$.
A linearization always exists because $\preceq$ is a partial order.
Now, for $j \in [k]$, define the patterns
\[ \bar P(u_j) = \bigwedge_{i=j}^k P(u_i). \]
Here $\wedge$ denotes ``meet'' under the partial order $\preceq$.
Then, obviously the shadow patterns form a chain:
\[ \bar P(u_1) \preceq \bar P(u_2) \preceq \cdots \preceq \bar P(u_k). \]
Note that it is possible for $\bar P(u_i) = \bar P(u_j)$ for $i\neq j$.
For example, suppose the patterns of nodes in $G$ are
\[ 
\langle a,b,* \rangle,
\langle *,b,* \rangle,
\langle *,*,* \rangle,
\langle a,*,c \rangle,
\langle *,b,c \rangle;
\]
and suppose we pick the following linearization of these patterns:
\[ 
\langle a,*,c \rangle,
\langle *,b,c \rangle,
\langle a,b,* \rangle,
\langle *,b,* \rangle,
\langle *,*,* \rangle.
\]
Then, the $\bar P$ patterns are as follows.
\[
\begin{matrix}
\text{Linearization:} &
\langle a,*,c \rangle &
\langle *,b,c \rangle &
\langle a,b,* \rangle &
\langle *,b,* \rangle &
\langle *,*,* \rangle\\
\text{The $\bar P$ patterns:} &
\langle a,b,c \rangle &
\langle a,b,c \rangle &
\langle a,b,* \rangle &
\langle *,b,* \rangle &
\langle *,*,* \rangle.
\end{matrix}
\]
It should be apparent from the above example the two claims we made earlier: the
shadow patterns form a chain, and some shadow patterns are the same.
To continue with the above example, $\getpp$ is supposed to return a free value
$d$ on attribute $D$ which does not belong to any interval in the interval
lists of the nodes 
\[ \langle a,*,c \rangle,
\langle *,b,c \rangle,
\langle a,b,* \rangle,
\langle *,b,* \rangle,
\langle *,*,* \rangle. \]
For each node $u$, we will operate as if its pattern was actually $\bar P(u)$.
Algorithm~\ref{algo:getpp-tw} has the details.

\begin{algorithm*}[!thp]
\caption{$\cds.\getpp()$ for general queries}
\label{algo:getpp-tw}
\begin{algorithmic}[1]
\Require{A $\ctree$ $\cds$}
\Ensure{Returns a tuple $\mv t=(t_1,\dots,t_n)$ that does not satisfy any stored
constraint}
\Statex

\State $i \gets 0$
\While {$i < n$}
  \State $G \gets \setof{u \in \cds}{(t_1,\dots,t_i) \preceq P(u) \text{ and }
  u.\intv \neq \emptyset}$
  \If {($G = \emptyset$)} \Comment{$G$ will be empty for all later values of
  $i$}
    \State $t_{i+1} \gets -1$
    \State $i \gets i+1$
  \Else
    \State Let $G = \{u_1,\dots,u_k\}$, where $u_1,\dots,u_k$ is a linearization of $G$
    \State $\bar G \gets \emptyset$ \Comment{start constructing the shadow
    chain} \label{line:shadowchain-start}
    \For {$j \gets 1$ {\bf to} $k$}
      \State $\bar P(u_j) \gets \bigwedge_{\ell=j}^k P(u_\ell)$
    \Comment{$\wedge$ denotes {\em meet} under partial order $\preceq$}
      \If {$\cds$ has no node with pattern $\bar P(u_j)$} 
        \Comment{Create the shadow nodes}
        \State $\cds.\insconst(\langle \bar P(u_j), (-\infty, 0)\rangle)$
      \EndIf
      \State Add the pattern $\bar P(u_j)$ to $\bar G$ \Comment{$\bar G$ is a
      {\bf multiset}, also a chain, break ties arbitrarily}
    \EndFor \label{line:shadowchain-end}
    \State Let $\bar{\mv p} = \langle \bar p_1, \dots, \bar p_i \rangle$ be the
    {\em bottom} element of the poset $\bar G$ \label{line:bottom:multiset}
    \State Let $\bar u \in \cds$ be the node for which $P(\bar u) = \bar{\mv p}$
    \State $t_{i+1} \gets \cds.\nextvaluefromshadow(-1, \bar u, \bar G)$
    \Comment{Algorithm~\ref{algo:ctree-nextvalue}} \label{line:initial-call-tw}
    \State $i_0 \gets \max\{ k \suchthat k \leq i, \bar p_k \neq *\}$
    \If {($t_{i+1} = +\infty$) and $i_0=0$}
      \State \Return \NULL \Comment{No tuple $\mv t$ found}
\label{step:getpp-full}
    \ElsIf {($t_{i+1} = +\infty$)} 
      \State $\cds.\insconst(\langle \bar p_1,\dots, \bar p_{i_0-1}, (\bar
              p_{i_0}-1, \bar p_{i_0}+1) \rangle$ 
      \label{line:backtrack-tw}
      \State $i \gets i_0-1$ \Comment{Back-track} 
    \Else
      \State $i \gets i+1$ \Comment{Advance $i$}
    \EndIf
  \EndIf
\EndWhile
\State \Return{$\mv t = (t_1,\dots,t_n)$}
\end{algorithmic}
\end{algorithm*}


\begin{algorithm*}[!thp]
\caption{$\cds.\nextvaluefromshadow(x, u, \bar G)$, where $\bar G$ is a chain}
\label{algo:ctree-nextvalue}
\begin{algorithmic}[1]
\Require{A $\ctree$ $T$, a node $\bar u \in \bar G$ to start the recursion with}
\Require{A (multiset) chain $\bar G$ of nodes, and a starting value $x$}
\Ensure{the smallest value $y\geq x$ not covered by {\em any}
$v.\intv$, for all $v \in \bar G$ such that $P(\bar u) \preceq P(v)$}
\Comment{note that $v$ could be an original node or a shadow node}
\Statex
\State Let $u$ be the node that $\bar u$ is a shadow of 
       \Comment{$u$ could be the same as $\bar u$}
\If {there is no $v \in \bar G$ for which $P(\bar u) \precdot P(v)$} 
     \Comment{At the top of the chain $\bar G$} \label{step:nextval-end}
  \State \Return $\cds.\nextvalue(x,\bar u, \{\bar u, u\})$
  \Comment{Algorithm~\ref{algo:ctree-nextvalue-beta}} \label{line:nextvalue-basecase-tw}
\Else
  \State $y \gets x$
  \Repeat
    \State Let $v\in \bar G$ such that $P(u) \precdot P(v)$ 
       \Comment{Next node up the shadow chain}
    \State $z \gets \cds.\nextvaluefromshadow(y, v, \bar G)$ \label{line:z-tw}
    \Comment{first ``free value'' $\geq y$ up the chain}
    \State $y \gets \cds.\nextvalue(z, \bar u, \{\bar u, u\})$ \label{line:y-tw}
    \Comment{first ``free value'' $\geq z$ at $u$}
  \Until {$y = z$}
  \State $\cds.\insconst(\langle P(u), (x-1,y)\rangle)$ 
  \label{line:new-constraint-tw}
  \Comment{All values from $x$ to $y-1$ are not available}
  \State \Return $y$
\EndIf
\end{algorithmic}
\end{algorithm*}

\subsection{Proof of Theorem~\ref{thm:tw}}
\label{app:subsec:thm:tw}

\bp
Let us first go through the skeleton of Algorithm~\ref{algo:getpp-tw}.
We encourage the reader to view Algorithm~\ref{algo:getpp-beta} and
Algorithm~\ref{algo:getpp-tw} side-by-side. Their structures are identical
except in two places.
First, lines \ref{line:shadowchain-start} to \ref{line:shadowchain-end}
of Algorithm~\ref{algo:getpp-tw}
build a shadow chain because $G$ itself is not necessarily a chain as is the
case in Algorithm~\ref{algo:getpp-beta}.
Second, the call to $\nextvalue$ on line \ref{line:initial-call-beta} in
Algorithm~\ref{algo:getpp-beta} is replaced by the call 
on line \ref{line:initial-call-tw} 
of Algorithm~\ref{algo:getpp-tw}
to $\nextvaluefromshadow$.

For the moment, suppose the calls to $\nextvalue$ and 
$\nextvaluefromshadow$ take the same amount of time, then the only extra work
that Algorithm~\ref{algo:getpp-tw} does compared to 
Algorithm~\ref{algo:getpp-beta} per depth of the $\cds$ comes from building up
the shadow poset. It takes time $O(mn\log W)$ for each shadow poset. 
(Recall that $W$ is the number of intervals ever inserted into $\cds$ by
$\ms$.)
And, if we
also want to maintain the interval credit invariant then it takes 
$O(mn2^n\log W)$-time extra per depth per $\getpp$ operation.
Up to this point, we can mimic the proof from the $\beta$-acyclic case and
assign each $\getpp$ operation $O(mn^22^n\log W)$ credits.
The only difference (so far) from the analysis of the $\beta$-acyclic case is
that the amount of credits per depth assigned to $\getpp$ is blown up by a factor of
$mn$ due to the shadow poset construction and the extra credits needed for the
shadow intervals.
Note also that, modulo the difference between the $\nextvalue$ call
and the $\nextvaluefromshadow$ call, if the poset $G$ is a chain, then 
$\bar G$ is exactly $G$ and every node is a shadow of itself!
In this case, we do not need to do the extra work of building up the shadow
poset and indeed we ``get back'' Algorithm~\ref{algo:getpp-beta}.

Next, let us look at Algorithm~\ref{algo:ctree-nextvalue-beta}
and Algorithm~\ref{algo:ctree-nextvalue} side by side.
The key difference is in the calls to $\nxt$ on an interval list, we make a call
to $\nextvalue$ on the chain $\{\bar u, u\}$, where $\bar u$ is the shadow of
$u$. (Since $\bar u$ is the meet of all nodes from $u$ and above in the
linearization, $\bar u \preceq u$ and hence $\{\bar u, u\}$ is a chain.)
If we maintain the interval credit invariant, then each of these calls
to $\nextvalue(z, \bar u, \{\bar u, u\})$ 
takes amortized time $O(2^n\log W)$, a $2^n$-blowup compared to $\nxt$.
So far, we have an $mn2^n$ blowup factor (with a very loose analysis),
relative to the $\beta$-acyclic case.

Note again that, if the poset $G$ is a chain, implying $\bar G = G$, and 
every node is a shadow of itself, then the calls to 
$\intv.\nxt$ and to $\nextvalue(u, z, \{u\})$ are
identical. In this case $\nextvaluefromshadow$ is the same as $\nextvalue$
and we get back to the $\beta$-acyclic case.

In general, however, we cannot maintain the interval credit invariant by 
simply giving each inserted interval $O(n2^n\log W)$ credits (blown up by
$mn2^n$ more) as we have done in the $\beta$-acyclic case because the same
node $u$ might have different shadows depending on the prefix we are working on.
For example, the node $u = \langle *, b, * \rangle$ might have the shadows 
$\bar u^1 = \langle a, b, * \rangle$,
$\bar u^2 = \langle a', b, c' \rangle$,
$\bar u^3 = \langle *, b, c'' \rangle$,
and so forth. 
In this example, the number of credits we give to an interval in
the list $u.\intv$ has to be at least
three times as much as that in the 
$\beta$-acyclic case because $u$ might participate in the 
$\nextvalue$ calls with each of its shadows.
Consequently, we will have to give each inserted interval many
more credits than $O(mn^24^n\log W)$.
The key question is: {\em how many more credits}?

The number of credits assigned to each interval depends on the size of its
pattern, and on what type of interval it is.
From Theorem~\ref{thm:analyze-outer-algorithm}, we know that the number of 
intervals inserted into the $\cds$ is $O(m4^r|\cert| + Z)$. 
In fact, there are two types of intervals inserted: 
$Z$ intervals inserted by the output called the 
{\em output-generated intervals}, 
and $O(m4^r|\cert|)$ intervals inserted by the input relations, called the 
{\em input-generated intervals}.
There are also {\em backtracking intervals} created by the algorithm.
Hence, overall we have three types of intervals.

The overall credit-assignment scheme is intimately tied to the size of 
interval pattern under consideration.
Recall that 
the {\em size} of a pattern is defined to be
the number of equality components of the pattern.
For example, the pattern $P(u) = \langle a, b, * \rangle$ has size $2$
and the pattern $P(u) = \langle *,*,*\rangle$ has size $0$.

Consider the simplest case when $P(u)$ has size $0$, such as $P(u) =
\langle *,*,*\rangle$. In this case, the node $u$ is always on top of the
linearization $G$ and thus in $\bar G$ it is the shadow of itself. In other
words, it does not really have any shadow. 
The symmetric situation is when $P(u)$ has {\bf no} wildcard pattern. All of the
intervals that come from the outputs are of this type. 
This includes backtracking intervals which are created from a prefix of 
output-generated intervals.
In this case, $u$ is always at the {\em bottom} of the linearization
$G$ and thus in $\bar G$ it is also a shadow of itself.
For these types of intervals --
intervals in $u.\intv$ where $u$ is self-shadowed -- we can
give them the same credits as they get in the $\beta$-acyclic case 
(multiplied by the blowup factor).

Next, we analyze how many credits we need for intervals whose patterns 
are not self-shadowed at some point in the execution of $\ms$.
(A pattern might be self-shadowed at one point, but then not self-shadowed at
another time due to a different prefix.)
Let $P(u)$ be one such pattern with size $s$; 
let $k-1$ be the length of the pattern $P(u)$, i.e. intervals of $u$ are on
attribute $A_k$ in the global attribute order.
Then, the support of $P(u)$ (the positions of equality components) is precisely
a subset of the universe $U(\calP_k)$ of the prefix poset 
$\calP_k$ as defined in Section~\ref{app:subsec:prefix-poset}.
From Proposition~\ref{prop:gao-tw} we know that $s \leq w$,
and hence $1 \leq s \leq w-1$. 
(If $s \in \{0,w\}$ then $P(u)$ is self-shadowed.)
Since $s$ out of $w$ components in $P(u)$ are already fixed, the number
of different shadows $\bar P(u)$ of $u$ is at most $(m4^r|\cert|)^{w-s}$: it has
$w-s$ degrees of freedom, each of which can be attributed to some interval in
the set of input-generated intervals.
Consequently, we can give each interval in $u.\intv$ the following 
number of credits to pay for all
operations it's involved in: 
$(m4^r|\cert|)^{w-s} \cdot O(mn2^n\log W)$.
In the above accounting we did not need to distinguish between input-generated
intervals or backtracking intervals, namely $P(u)$ can be the prefix of a
backtracking interval too.

Let us summarize what we know thus far:
\bi
 \item Intervals whose patterns are self-shadowed get the same credits as in the
 $\beta$-acyclic case.
 \item Other intervals with size-$s$ patterns get $(m4^r|\cert|)^{w-s} \cdot
 O(mn2^n\log W)$ credits.
\ei
(Again, the credits are supposed to be multiplied by $mn2^n$.)
So our next task is to sum up all the credits we need and that will be the final
(amortized) runtime of $\ms$.

\bi
 \item Each output-generated interval gets $O(mn^24^n\log W)$ credits. (We
 already multiplied in the blowup factor.) The total contribution of
 output-generated intervals to the overall runtime is thus 
 $O(mn^24^nZ\log W)$.
 \item Each input-generated interval with pattern of size $s \in \{0,w\}$ gets
 $O(mn^24^n\log W)$ credits. Each input-generated interval with pattern of size
 $s \in [w-1]$ receives 
 \[ O\left( (m4^r|\cert|)^{w-s} \cdot mn^24^n\log W \right) \]
 credits.
 Since there were at most $m4^r|\cert|$ input-generated intervals, the total
 number of credits infused is at most
 $$
     O\left( m4^r|\cert| \cdot (m4^r|\cert|)^{w-1} \cdot mn^24^n\log W \right)\\
    = O\left( m^{w+1}n^24^{n+rw} |\cert|^w \log W \right).
 $$
 \item Lastly, we account for the backtracking intervals whose patterns are not
 self-shadowed. The number of such intervals with a size-$s$ pattern can 
 be upperbounded by $n\binom w s (m4^r|\cert|)^{s+1}$, 
 because every backtracking 
 interval must have come from a pattern (of size $s+1$) and each equality 
 component of the
 pattern can be attributed to an input-generated interval.
 Each such interval, as analyzed above, gets
 $O\left( (m4^r|\cert|)^{w-s} \cdot mn^24^n\log W \right)$ credits.
 Hence, overall we need
 $$
     O\left( \sum_{s=1}^{w-1} n\binom w s (m4^r|\cert|)^{s+1} \cdot 
 (m4^r|\cert|)^{w-s} \cdot mn^24^n\log W \right)\\
 = O\left( n2^w (m4^r|\cert|)^{w+1} mn^24^n \log W \right).
 $$
\ei

Overall, over-estimating by a lot, we need to pump in 
\begin{eqnarray*}
&&mn^24^nZ\log W + m^{w+1}n^24^{n+rw} |\cert|^w \log W
     + n2^w (m4^r|\cert|)^{w+1} mn^24^n \log W\\
&\leq& 2mn^24^n( nm^{w+1} 8^{r(w+1)} |\cert|^{w+1} + Z) \log W\\
&=& O\left( m^3n^34^n \left( nm^{w+1} 8^{r(w+1)} |\cert|^{w+1} + Z\right) 
     \log N \right).
\end{eqnarray*}
To get the last inequality, we bound $W$ -- the number of intervals ever
inserted in to the $\cds$ -- as follows. $W$ is at most the number of
input-generated intervals plus the number of output generated intervals plus the
number of backtracking intervals:
\[ W \leq m4^r|\cert| + N^m + n2^w |\cert|^w \leq 3mn4^nN^m. \]
\ep

\subsection{Proof of Proposition~\ref{prop:tw-eth}}
\label{app:subsec:prop:tw-eth}

\bp
We prove this result by using the reduction from {\sc unique-clique}
to the natural join evaluation problem. 
The {\sc unique-$k$-clique}
input instance ensures that the output size is at most $1$. Let's say the input graph is $G=(V,E)$ (with no self loops),
which is guaranteed to either have no clique or exactly one clique. Then
consider the following query
\[Q_k = \left(\Join_{i\neq j\in [k]} R_{i,j}(v_i,v_j)\right)\Join U(v_1,\dots,v_k),\]
where the domain of each $v_i$ for $i\in [k]$ is $V$, 
\[ R_{i,j} = \bigcup_{(u,v)\in E}\{(u,v),(v,u)\}, \] 
and $U=V^k$.
Note that $Q_k$ is empty if $G$ has no clique otherwise $Q_k$ has exactly $k!$ tuples (corresponding to each of the $k!$ assignments of the vertices of the clique in $G$ to $v_1,\dots,v_k$).
Further, $Q_k$ is $\alpha$-acyclic because of the presence of $U$. Finally, it is easy to verify that $Q_k$ has treewidth $k-1$.

Next, we argue that the certificate size of the query above is $O(|E|)$. To see this, first consider the sub-query
\[Q'_k=\Join_{i\neq j\in [k]} R_{i,j}(v_i,v_j).\]
Since all the relations are of size $2|E|$, Proposition \ref{prop:optimal-certificate-upperbound} implies that
the optimal certificate for $Q'_k$ has size $O(|E|)$. We claim that such a certificate can be extended to a certificate for $Q_k$
also of size $O(|E|)$. Indeed, the certificate for $Q'_k$ is enough to pinpoint 
which tuples are the $k!$ output tuples with $O(|E|)$ comparisons
(or certify that the join is empty). Then $O(k\log(|E|))$ more comparisons can verify whether each of the $k!$ tuples is in $U$ or not.
This will produce an $O(|E|+k\cdot k!\cdot\log(|E|))=O_k(|E|)$ sized certificate for $Q_k$.\footnote{
More formally, let $\mv t\in Q'_k$ and define $\mv t_i=\pi_{R_{i,i+1\mod{k}+1}}(\mv t)$. Further, let $\mv t_i=(R_{i,i+1\mod(k)+1}[x_i],R_{i,i+1\mod(k)+1}[x_i,y_i])$ for $i\in [k]$. Then to ``pinpoint" whether $\mv t\in U$, we run the following $k$ binary searches: for $i\in [k-1]$, perform binary search to compute a $z_i$ such that $R_{i,i+1\mod(k)+1}[x_i]=U[z_1,\dots,z_{i-1},z_i]$. Then perform the binary search to compute $z_k$ such that $R_{k,1}[x_k,y_k]=U[z_1,\dots,z_k]$. If any of the $z_i$'s do not exist, then $\mv t$ is not in $Q_k$ otherwise it does. It is easy to check that the above set of comparisons (along with the comparisons in the certificate for $Q'_k$) constitute a valid certificate for $Q_k$ (in the sense of Definition~\ref{defn:certificate}). Finally, note that each binary search can be done with $O(\log(|E|))$ comparisons, which implies the claimed certificate size of $Q_k$.
}
Thus, if there were an $|\cert|^{o(k)}$ algorithm for $Q_k$, it would determine if $G$ has a clique or not in time $\tilde{O_k}(|E|^{o(k)})=\tilde{O_k}(|V|^{o(k)})$.

However,  Chen et al. \cite{DBLP:conf/soda/ChenLSZ07} showed that
if there was an $O(|V|^{o(k)})$-time algorithm solving {\sc unique-$k$-clique},
then the exponential time hypothesis is wrong, and many $\mathsf{NP}$-complete
problems have sub-exponential running time. This implies that for large enough $k$, the above $|\cert|^{o(k)}$ time algorithm will be 
a contradiction.
\ep

\subsection{Proof of Proposition~\ref{prop:ms-tw-lb}}
\label{app:subsec:prop:ms-tw-lb}

\bp
We will in fact prove this result using the same query family as in Proposition~\ref{prop:tw-eth}. In particular, we define
\[Q_w = \left(\Join_{i\neq j\in [w+1]} R_{i,j}(v_i,v_j)\right)\Join U(v_1,\dots,v_{w+1}).\]
As was observed in the proof of Proposition~\ref{prop:tw-eth}, $Q_w$ is both $\alpha$-acyclic and has treewidth $w$.

W.l.o.g. assume that the global attribute order is $v_1,\dots,v_{w+1}$. Now consider the following input instance:
\[ U = [m]^{w+1},\]
\[R_{i,j}=[m]\times [m]\text{ for every } (i,j)\in [w]\times[w],\]
\[R_{i,w+1}=[m]\times \{1\}\text{ for every } i\in [w-1],\]
and
\[R_{w,w+1}=[m]\times \{2\}.\]
It is easy to check that the output of $Q_w$ on the input above is empty and that $|\cert|\le O(wm)$. To see why the latter is true note that with $m-1$ equalities one can certify $\pi_{v_{w+1}}(R_{i,w+1})$ for every $i\in [w]$. Further with $w-1$ further equalities and one comparison one can certify that the output is empty. Thus, we need overall $O(wm)$ comparisons. To complete the proof, we will show that \ms\ on the input above runs in time $\Omega(m^w)$, which would prove the result (since we are ignoring the query complexity). In fact, we will prove this claim by showing that Line~\ref{line:initial-call-tw} in Algorithm~\ref{algo:getpp-tw} is executed $\Omega(m^w)$ times.

For simplicity, we will assume that \ms\ always has the interval $(-\infty,0]$ inserted in all branches of its CDS.

We will argue \ms\ has to consider all possible prefixes of size $w$:
$(t_1,\dots,t_w)\in [m]^w$. In particular, for each such prefix
Algorithm~\ref{algo:getpp-tw} executes Line~\ref{line:initial-call-tw}. One can
show  (e.g. by induction) that for any such prefix $(t_1,\dots,t_w)$, the only
constraints in the CDS that can rule them out are of the form $\langle
*,*,\cdots,*,t_i,*,\cdots,*,(1,\infty)\rangle$ for $i\in [w-1]$ and $\langle
*,\cdots,*,t_w,(0,2)\rangle$. However, this implies that to rule this prefix
(specifically the potential tuple $(t_1,\dots,t_w,1)$) out,
Algorithm~\ref{algo:getpp-tw} has to ``merge" at least two of these constraints,
which means that Line~\ref{line:initial-call-tw} has to be executed at least
once\footnote{Note that all the constraints listed above might not exist in
which case Line~\ref{line:initial-call-tw} might not be able to rule the tuple $(t_1,\dots,t_w,1)$ out but that is fine since the outer algorithm will rule this tuple out.}, as desired.
\ep

\section{end-to-end results for the set intersection query}
\label{app:sec:intersection}

This section describes our results specialized to intersection
queries, which have been discussed by previous work. The purpose is
for the interested reader to both see all the tools used in this simple
example and be able to more directly compare our results with previous
results on set intersection. Also, this query and the bowtie query in the
next section are both $\beta$-acyclic (with any GAO); these two sections
illustrate many of the key ideas in our outer algorithm analysis, the design and
analysis of \cds and $\getpp$, without too much abstraction.

\subsection{The set intersection query}
\label{app:subsec:ms-set-intersection}

\bdefn[Set intersection query]
The {\em set intersection query} $Q_\cap$ is defined as
\[ Q_\cap = S_1(A) \Join S_2(A) \Join \cdots \Join S_m(A). \]
In this query, each input relation $S_i$ is unary over the same attribute $A$.
So each input relation $S_i$ can be viewed as a set of (distinct)
values over the domain $\mv D(A)$.
The output $Q_\cap$ is simply the intersection of all input relations $S_i$.
In this case, $\atoms(Q_\cap) = \{S_1, \dots, S_m\}$,
$\bar A = \bar A(S_i) = \{A\}$, for all $i\in [m]$, and an output ``tuple''
is a one-dimensional vector of the form 
$\mv t = (t)$, where $t \in S_1 \cap \cdots \cap S_m$.
\edefn

In this section, we present $\ms$ specialized to the intersection query
$Q_\cap$. To recap, consider the following problem.
We want to compute the intersection of $m$ sets $S_1, \cdots, S_m$.
Let $n_i = |S_i|$. We assume that the sets are sorted, i.e.
\[ S_i[1] < S_i[2] < \cdots < S_i[n_i], \forall i \in [m]. \]
The set elements belong to the same domain $\mv D$, which is a totally ordered
domain. Without loss of generality, we will assume that $\mv D = \mathbb N$.

In ``practice'' it might be convenient to think of $\mv D$ as the 
{\em index set} to another data
structure that stores the real domain values.
For example, suppose the domain values are \texttt{string}s and there are 
only $3$ strings \texttt{this}, \texttt{is}, \texttt{interesting} in the domain.
Then, we can assume that those strings are stored in a $3$-element array,
and the value $a \in \mv D$ is one of the three indices $0, 1, 2$ into the
array.

\subsection{The \cds for $Q_{\cap}$}

The \cds for $Q_\cap$ is a data structure that stores a collection of 
{\em open} intervals of the form $(a, b)$, where $a$ and $b$ are in the set 
$\mathbb N \cup \{-\infty, +\infty\}$.
When two intervals overlap they are automatically merged.
We overload notation and refer to both the data structure and the set of
intervals stored as \cds.
The data structure supports two operations: $\insconst$ and $\getpp$.
\bi
 \item The $\insconst$ operation takes an open interval and inserts it into 
 the \cds.
 \item The $\getpp$ operation either returns an integer $t$ that does not belong
to any stored interval, or returns $\NULL$ if no such $t$ exists.
 \item If $\cds$ is empty, then $\cds.\getpp()$ returns an arbitrary integer.
     We use $-1$ as the default.
\ei

\paragraph*{Two options for implementing the constraint data structure $\cds$}

If we implement the data structure $\cds$ straightforwardly, then we can do the
following.
We give each input interval one credit, $1/2$ to each end of the interval.
When two intervals are merged, say $(a_1, b_1)$ is merged with $(a_2,
b_2)$ to become $(a_1, b_2)$, we use $1/2$ credit from $b_1$ and $1/2$
credit from $a_2$ to pay for the merge operation.
If an interval is contained in another interval, only the larger interval is
retained in the data structure.
By maintaining the intervals in sorted order,
in $O(1)$-time the data structure can either return a probe point
$t$ that does not belong to any stored interval, or correctly report (return
$\NULL$) that no such $t$ exists.
In other words, each call to $\getpp$ takes amortized constant time.
Inserting a new interval into $\cds$ takes $O(\log W)$-amortized time where 
$W$ is the maximum number of intervals ever inserted into $\cds$, using the
credit scheme described above.

On the other hand, if we apply the strategy of always returning the least value
of $t$ that does not belong to any stored interval, then it is easy to see that
$\cds$ essentially only needs to maintain {\em one} interval $(-\infty, t)$.
Initially when $\cds$ is empty $t=-1$ is returned.
After that -- referring forward to the outer algorithm presented in the next
section -- the newly inserted intervals always contain $t$ and the
new single interval maintained in $\cds$ becomes $(-\infty, t')$ for some 
$t'>t$. Insertion of a new interval only takes constant time because we only
need to compare the high-end of the interval with the current $t$ value.
In this case, the algorithm becomes the minimum-comparison method
in~\cite{DBLP:conf/soda/DemaineLM00} and it is the same as a typical $m$-way
merge join algorithm.

\subsection{The outer algorithm for intersection}

The outer algorithm for \ms specialized to evaluate $Q_\cap$
is presented in Algorithm~\ref{algo:set-intersection}.  In this case, each {\em
  constraint} is an open interval of the form $(a, b)$, where $a$ and
$b$ are {\em integers}.  An interval $(a, b)$ is inserted into the
constraint data structure $\cds$ if the algorithm has determined that
the interval $(a, b)$ contains {\bf no} output.  Note that $a$ and/or
$b$ themselves might be part of the output, but any value in between
is not.  In particular, the constraint data structure $\cds$ stores a
set of constraints and thus we will use the term {\em constraint set}
to refer to the set of intervals stored in $\cds$. 

\begin{algorithm*}[!htp]
\caption{$\ms$ for computing the intersection of $m$ sets}
\label{algo:set-intersection}
\begin{algorithmic}[1]
\Require{$m$ sorted sets $S_1, \cdots, S_m$, where $|S_i| = n_i$, $i\in [m]$}
\Require{Elements of $S_i$ are $S_i[1], \cdots, S_i[n_i]$}
\Require{Implicitly $S_i[0] = -\infty$, $S_i[n_i+1] = +\infty$, following the
conventions stipulated in \eqref{eqn:-infty-convention} and
\eqref{eqn:+infty-convention}}
\State Initialize the constraint data structure $\cds \leftarrow \emptyset$
\While{$( (t \la \cds.\getpp()) \neq \NULL)$}
\For{ $i = 1,\dots,m$}
    \State $x^h_i \la \min \{ j \suchthat S_i[j] \geq t\}$ \label{line:e_i}
    \State $x^\ell_i \la \max \{ j \suchthat S_i[j] \leq t\}$
    \label{line:e'_i} \Comment{It is possible that $x^h_i =x^\ell_i$}
\EndFor
\If {$S_i[x^h_i] = t$ for all $i\in [m]$} \Comment{Then all $S_i[x^\ell_i]=t$ too}
  \State \textbf{Output} $t$
  \State $\cds.\insconst(t-1, t+1)$ \label{line:Qcap-ruleoutoutput}
\Else
  \For {each $i\in [m]$ such that $S_i[x^h_i]>t$} \Comment{$S_i[x^\ell_i] < t$
  and $x^\ell_i=x^h_i-1$ for such index $i$}
    \State $\cds.\insconst(S_i[x^\ell_i], S_i[x^h_i])$
  \EndFor
\EndIf
\EndWhile
\end{algorithmic}
\end{algorithm*}

We next run through the elements of the argument and the algorithm
for the case of $Q_{\cap}$.

\subsection{Analysis}
\label{app:subsec:set-intersection-certificate}

We specialize notions of argument and certificate to $Q_{\cap}$ in
order to illustrate these concepts.

\begin{defn}[Argument for $Q_\cap$]
An \emph{argument} is a finite set of symbolic equalities and
inequalities, or \emph{comparisons}, of the following forms: (1)
$(S_s[i] < S_t[j])$ or (2) $S_s[i] = S_t[j]$ for $i,j \geq 1$ and $s,t
\in [m]$. An instance satisfies an argument if all the comparisons in
the argument hold for that instance.
\end{defn}

\begin{defn}[Certificate for $Q_\cap$]
An argument $\calA$ is called a {\em certificate} if any collection of 
input sets $S_1, \dots,S_m$ satisfying $\calA$ must have the ``same'' output,
in the following sense.
Let $R_1, \dots, R_m$ be an arbitrary set of unary relations such that 
$|R_j| = |S_j|$, for all $j\in [m]$, and
that $R_1, \dots, R_m$ satisfy all comparisons in the certificate $\calA$, then
the following must hold: 
$$S_1[i_1] = S_2[i_2] = \cdots = S_m[i_m]$$
\text{ if and only if }
$$R_1[i_1] = R_2[i_2] = \cdots = R_m[i_m].$$
The tuple $(i_1,\dots,i_m)$ is called a {\em witness} for this instance of
the query. Another way to state the definition is that, an argument is a
certificate iff all instances satisfying the argument must have the same set of
witnesses.
\end{defn}

The {\em size} of a certificate is the number of comparisons in it.
The {\em optimal certificate} for an input instance is the smallest-size
certificate that the instance satisfies.
The optimal certificate size measures the information-theoretic lowerbound on 
the number of comparisons that any comparison-based join algorithm has to 
discover. Hence, if there was an algorithm that runs in linear time in the 
optimal certificate size, then that algorithm would be instance-optimal.

The following theorem along with Proposition~\ref{prop:Omega(|C|)}
imply that Algorithm \ref{algo:set-intersection} has a near
instance-optimal run-time. 
up to an $m\log N$ factor. 
Since $m$ is part of the query size and the output has to be reported, 
$\ms$ is instance-optimal in terms of data complexity up to a $\log$
factor for this query.

\bthm[\ms is near instance optimal for $Q_\cap$]
Algorithm \ref{algo:set-intersection} runs in time
$O((|\cert|+Z)m\log N)$, where $\cert$ is {\bf any} certificate for the instance,
$N = \sum_{i=1}^m n_i$ is the input size, and $Z$ is the output size.
\label{thm:ms-io-set-intersection}
\ethm
\bp
We show that the number of iterations of Algorithm
\ref{algo:set-intersection} is $O(|\cert|+Z)$, and that each iteration takes
time $O(m\log N)$.

To upperbound the number of iterations, the key idea is to ``charge'' each 
iteration of the main while loop
to either a distinct output value {\em or} a pair of comparisons in the 
certificate $\cert$ such that no comparison will
ever be charged more than a constant number of times.
Each iteration in the loop is represented by a distinct probe value $t$.
Hence, we will find an output value or a pair of comparisons to ``pay'' 
for $t$ instead of paying for the iteration itself.

Let $t$ be a probe value in an arbitrary
iteration of Algorithm \ref{algo:set-intersection}.
Let $x^h_i$ and $x^\ell_i$ be defined as in lines
\ref{line:e_i} and \ref{line:e'_i} of the algorithm.

First, consider the case when
$S_i[x^h_i] = t$ for all $i\in [m]$, i.e. $t$ is an output value.
Note that in this case $x^\ell_i = x^h_i$ for all $i\in [m]$.
We pay for $t$ by charging the output $t$. 
The new constraint inserted in line~\ref{line:Qcap-ruleoutoutput} ensures that
we will never charge an output twice.

Second, suppose $S_i[x^h_i]>t$ for some $i$, i.e. $t$ is not an output.
(Note that by definition it follows that $S_i[x^\ell_i] < t$.)
For each $i\in [m]$, the variable $S_i[x^h_i]$ is said to be {\em $t$-alignable} 
if $S_i[x^h_i]$ is already equal to $t$ (in which case $x^\ell_i=x^h_i$)
or if $S_i[x^h_i]$ is not part of 
{\em any} comparison ($=,<,>$) in the certificate $\cert$.
Similarly, we define the notion of {\em $t$-alignability} for
$S_i[x^\ell_i]$, $i\in [m]$.

When $S_i[x^h_i]$ is $t$-alignable, setting $S_i[x^h_i] = t$ will not violate 
any of the comparisons in the certificate $\cert$.
Similarly, we can transform the input instance to another input instance
satisfying $\cert$ by setting $S_i[x^\ell_i] = t$, provided $S_i[x^\ell_i]$ is 
$t$-alignable.

{\bf Claim:} if $t$ is not an output, then there must exist some 
$\bar i\in [m]$ for which both 
$S_{\bar i}[x^\ell_{\bar i}]$ and 
$S_{\bar i}[x^h_{\bar i}]$ are {\em not} $t$-alignable.
In particular, in that case
\[ S_{\bar i}[x^\ell_{\bar i}] < t < S_{\bar i}[x^h_{\bar i}] \]
and both 
$S_{\bar i}[x^\ell_{\bar i}]$ and 
$S_{\bar i}[x^h_{\bar i}]$ are involved in comparisons in the certificate.

Before proving the claim, let us assume it is true and finish off the charging
argument. We will pay for $t$ using any comparison involving
$S_{\bar i}[x^h_{\bar i}]$ and any comparison involving 
$S_{\bar i}[x^\ell_{\bar i}]$.
Because they are not $t$-alignable, each of them must be part of some comparison
in $\cert$.
Since we added the interval $(S_i[x^\ell_i], S_i[x^h_i])$ to the constraint 
data structure $\cds$, in later iterations $t$ will never hit the same 
interval again.
Each comparison involving one variable will be charged at most $3$ times: one
from below the variable, one from the above the variable, 
and perhaps one when the variable is output.

\noindent
{\bf Proof of claim.}
Suppose to the contrary that at least one member of every pair 
$S_i[x^\ell_i]$ and $S_i[x^h_i]$, $i\in [m]$, {\em is} $t$-alignable. 
Let $v(i) \in \{\ell, h\}$ such that $S_i[x^{v(i)}_i]$ is $t$-alignable,
$i\in [m]$.
Let $i_0$ be such that $S_i[x^{v(i_0)}_i] \neq t$. The value $i_0$ must exist
because $t$ is not an output.
First, by assigning 
$S_i[x^{v(i)}_i] = t$ for all $i$, we obtain an instance
satisfying the certificate for which
\begin{equation}
S_1[x^{v(1)}_1] = 
S_2[x^{v(2)}_2] =  \cdots
S_m[x^{v(m)}_m].
\label{eqn:intersection-witness}
\end{equation}
Second, by assigning 
$S_i[x^{v(i)}_i] = t$ for all $i \neq i_0$, we obtain an instance
also satisfying the certificate for which
\eqref{eqn:intersection-witness} {\em does not} hold!
This contradicts the certificate definition; hence, the claim holds.

We have already discussed how the constraint data structure \cds can be 
implemented so that insertion takes amortized constant time in the number 
of intervals inserted, and querying (for a new probe point $t$) takes 
constant time. 
Given a probe point $t$, searching for the values $x^h_i$ and $x^\ell_i$ takes
$O(\log N)$-time, for each $i\in [m]$.
Hence, each iteration of the algorithm takes time at most $O(m\log N)$. 
\ep

\brmk
In fact, if we implement $\ms$ using the galloping/leapfrogging strategy shown 
in~\cite{DBLP:conf/soda/DemaineLM00} and~\cite{DBLP:journals/corr/abs-1210-0481}, 
then we can speed up the search for the values $x^h_i$ and $x^\ell_i$ of Algorithm~\ref{algo:outer-algorithm}
slightly in terms of asymptotic runtime.
Those ideas in fact work very well in practice! However, they 
are regarded as ``implementation details'' in this paper and will 
not be discussed further. We are happy with a $\log$-factor loss.
\ermk

\section{End-to-end results for the bowtie query}
\label{app:sec:bowtie}

To illustrate the key ideas of the analysis of \ms, we present in this section
the second end-to-end set of results on a query that is slightly more complex
than the intersection query from Section~\ref{app:sec:intersection}. The hope
is, without burdening the reader with the heavy notation from the general
algorithm, the so-called {\em bowtie query} is able to illustrate many key ideas. 
We will define what the query is, what are arguments and certificates for this
query, what are the constraints and the \cds, the outer algorithm, and finally
the analysis. It turns out that the \cds for this query is very simple.
Additionally, the bowtie query {\em is} $\beta$-acyclic, and {\em any} GAO is a
nested elimination order!

\subsection{The bowtie query, arguments, and certificates}

\bdefn[Bow-tie query]
\label{def:bowtie}
The {\em bow-tie query} is defined as
\[ Q_{\bowtie} = R(X) \Join S(X,Y) \Join T(Y). \] 
In this case, $\atoms(Q_{\bowtie}) = \{R, S, T\}$,
$\bar A = (X, Y)$, and a tuple $\mv t = (x, y)$ is in the output if
and only if $x \in R$, $(x,y) \in S$, and $y \in T$.
\edefn

Due to symmetry, the global attribute order (GAO), without loss of generality,
can be assumed to be $(X, Y)$. The relations $R$, $S$, and $T$ are assumed to be
already indexed, allowing for the following kind of access.
\bi
 \item $R[*]$ is the set of all values in $R$.
 \item $R[i]$ is the $i$th smallest value in $R$, where $i$ is the index and the
       value $R[i]$ belongs to the domain $\mv D(X)$ of attribute $X$.
 \item Similarly, $T[*] = T$, and $T[j] \in \mv D(Y)$ is the $j$th value in $T$.
 \item $S[*]$ is the set of all $X$-values in $S$
 \item $S[i]$ is the $i$th smallest $X$-value in $S$
 \item $S[i, *]$ is the set of all $Y$-values among tuples $(x,y) \in S$ with $x
 = S[i]$.
 \item $S[i,j]$ is the $j$th $Y$-value among all tuples $(x,y) \in S$ with
       $x = S[i]$.
\ei
In the above, when we say $i$th smallest value we use the set semantic. There is
no duplicate value and thus no need to break ties.
We next specialize notions of argument and certificate to this particular 
query in order to illustrate these concepts.

\bdefn[Argument for $Q_{\bowtie}$]
An {\em argument} for the bow-tie query $Q_{\bowtie} = R(X) \Join S(X, Y) \Join
T(Y)$ is a set of comparisons in one of the following three formats:
\begin{eqnarray*}
R[i_r] & \theta & S[i_s], \ \ \ \ \ \text{  (a comparison on $X$-value)} \\
S[i_s,j_s] & \theta & T[j_t],\ \ \ \ \text{  (a comparison on $Y$-value)} \\
S[i_s,j_s] & \theta & S[i'_s, j'_s]. \text{  (a comparison on $Y$-value between
$S$-tuples)}
\end{eqnarray*}
where $\theta \in \{<, =, >\}$ is called a {\em comparison}.
\edefn

Since the $X$-values in $R$ and $Y$-values in $T$ are distinct, there was no
need to allow for comparisons between tuples in $R$ or between tuples in $T$.
Allowing for such comparisons does not change our analysis.

\bdefn[Certificate for $Q_{\bowtie}$]
For the bow-tie query, an {\em argument} $\calA$ is called a {\em certificate} if
the following conditions hold.
Let $R'(X), S'(X,Y), T'(Y)$ be three arbitrary relations such that
\begin{eqnarray*}
|R'| &=& |R|\\
|T'| &=& |T|\\
|S'[*]| &=& |S|\\
|S'[i, *]| &=& |S[i,*]|, \ \forall i, 1 \leq i \leq |S[*]|
\end{eqnarray*}
and that the $R',S',T'$ and $R, S, T$ satisfy all the comparisons in $\calA$.
Then, for any triple $\{ i, (j, k), \ell \}$,
 \[ R[i] = S[j] \text{ and } S[j, k] = T[\ell] \]
if and only if
 \[    R'[i] = S'[j] \text{ and } S'[j, k] = T'[\ell]. \]
Such a triple is called a {\em witness} for the instance $R, S, T$; and, it is
also a witness for the instance $R',S',T'$.
\edefn

Following the lead from Example~\ref{ex:same-relation-and-equalities}, it is not
hard to construct an example showing that comparisons between $Y$-variables
between tuples in $S$ are sometimes crucial to reduce the overall certificate
size.

\subsection{Constraints and the \cds}

For the bow-tie query $Q_{\bowtie}$ every constraint is of one of the 
following three forms:
\bi
 \item $\langle (a, b), *\rangle$,
 \item $\langle p, (a, b)\rangle$,
 \item or $\langle*, (a, b)\rangle$.
\ei
where $p \in \mathbb N$, $a, b \in \{-\infty, +\infty\} \cup \mathbb N$.

A tuple $\mv t = (x,y)$ satisfies the constraint 
$\langle (a, b), *\rangle$ if $x \in (a, b)$; it satisfies the constraint
$\langle p, (a, b)\rangle$ if $x=p$ and $y\in (a, b)$;
and it satisfies the constraint
$\langle*, (a, b)\rangle$ if $y \in (a, b)$.

Each constraint can be thought of as an ``interval'' in the following sense.
The first form of constraints consists of all two-dimensional (integer) points
whose $X$-values are between $a$ and $b$. We think of this region as a
2D-interval (a vertical strip).
Similarly, the second form of constraints is a 1D-interval,
and the third form of constraints is a 2D-interval (a horizontal strip).

We store these constraints using a two-level tree data structure (which is a $\ctree$ specialized to the two attribute case).
Figure \ref{fig:constraint-data-structure-bowtie} illustrates the data
structure.
\begin{figure}[th]
\centerline{\includegraphics[width=3in]{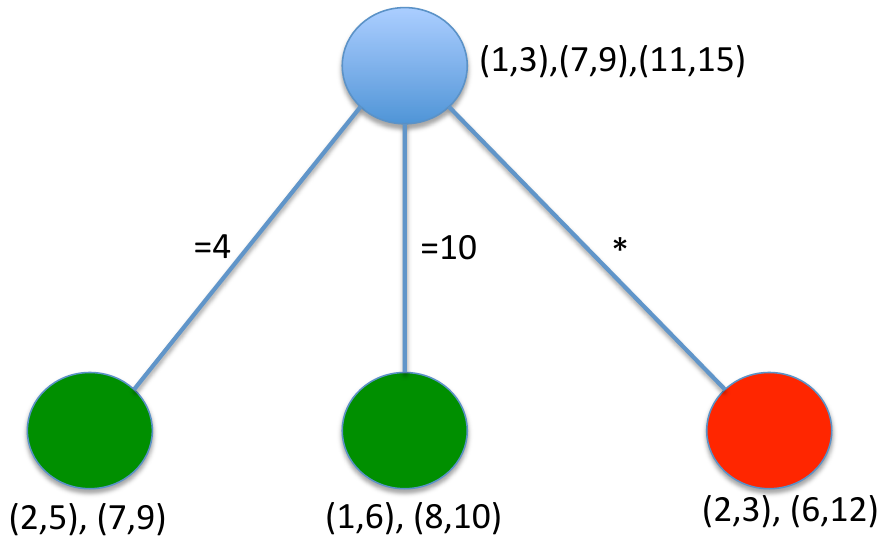}}
\caption{Constraint tree data structure for the bow-tie query}
\label{fig:constraint-data-structure-bowtie}
\end{figure}

In the first level (the root node), there is a collection of intervals
indicating the ruled out $X$-values. Then there are branches to the second
level. Each branch is marked with an $*$, or $=p$ for some integer $p$ that
does not belong to any interval stored at the root.

In the second level of the tree, every node has a collection of open intervals.
Intervals belonging to the same node are merged when they overlap. So, the
collection of intervals at each node are disjoint too.  To analyze the cost of
merging we use a credit based argument. Every inserted interval is given one
credit.  We use the  trick of giving the low end and the high end of an interval
half a credit to pay for the merging of two intervals. If the second level of a
$[=p]$-branch covers the entire domain, then the $[=p]$-branch is turned into a
$\langle (p-1, p+1), *\rangle$ constraint that can further be merged (at the
root level). Effectively, the interval $(p-1,p+1)$ is inserted into the interval
list of the root node. We will give an extra credit to the $[=p]$-branch so that
when the branch is turned into a $\langle (p-1,p+1), *\rangle$ interval both of
the end points has half a credit as any other interval of the root.

Inserting a new constraint takes amortized logarithmic time, as we keep the
branches sorted, and the new constraint might ``consume'' existing intervals.
(This logarithmic factor can be improved to constant time if the \ms algorithm
and the constraint data structure work in concert, but we will not dig deeper
into this detail at this point. We rather keep the description generic, and
separate as much as possible the inner workings of the algorithm from the data 
structure.) In amortized constant time,
the data structure \cds is able to report a new tuple $\mv t=(x,y)$ that does
not satisfy any of its constraints, or correctly report that no such $\mv t$
exists.
To find $\mv t$, we apply the following strategy:
\bi
 \item We first find $x$ such that $x$ does not belong to any root-level
 interval. This value of $x$, if it exists, can easily be found in constant 
 time by taking the right end point of the lowest interval from the interval
 list at the root. (Recall the invariant that intervals are disjoint!). 
 If there is no first level interval, then we can set $x=-1$
 \item If $x$ is found and there is no $[=x]$ branch, then we find a 
 value $y$ that does not belong to any second level interval on the $*$-branch.
 If there is no $*$ branch, then set $y=-1$.
 If no $y$ exists then no such $\mv t$ exists, the algorithm terminates.
 \item If $[=x]$ is a first-level branch, we find a
 value $y$ under the $[=x]$-branch that does not belong to any interval
 under that branch. We call such a $y$ a ``free value'' $y$.
 The tuple $\mv t=(x,y)$ might still violate a $\langle *, (a, b)\rangle$
 constraint in the $*$-branch. In that case, we insert the constraint $\langle x, (a, b)\rangle$ into the tree. Then we find the next smallest ``free'' value $y$
 under the $[=x]$-branch again and continue with the ``ping-pong'' with the
 $*$-branch until a good value of $y$ is found. 
 The intervals under $[=x]$-branch might be merged with an interval taken from
 the $*$-branch, but if we give each constraint $\langle x, (a,b)\rangle$
 a constant number of credits, we can pay for all the merging operations.
\ei

To summarize, insertion and querying for a new probe point into the above data
structure takes at most logarithmic time in the amortized sense (over all
operations performed).

\subsection{The outer algorithm}

\begin{algorithm*}[!tph]
\caption{$\ms$ for evaluating the bow-tie query $R(X) \Join S(X,Y) \Join T(Y)$.}
\label{algo:bowtie-join}
\begin{algorithmic}[1]
\Require{Following the conventions stipulated in \eqref{eqn:-infty-convention}
and \eqref{eqn:+infty-convention}, the following are implicit:}
\Require{$R[0] = S[0] = T[0] = -\infty$} 
\Comment{out-of-range indices}
\Require{$R[|R|+1] = T[|T|+1] = S[|S[*]|+1] = +\infty$}
\Comment{out-of-range indices}
\Require{$S[i, 0] = -\infty$, $S[i, |S[i, *]|+1] = +\infty$, $\forall
1\leq i \leq |S[*]|$}
\Comment{out-of-range indices}
\While{$( (\mv t \la \cds.\getpp()) \neq \NULL)$}
  \State Say $\mv t = (x,y)$
  \State $(i^{\ell}_R, i^h_R) \la R.\findgap( (), x)$\label{line:i^ell_R}
  \State $(i^{\ell}_T, i^h_T) \la T.\findgap( (), y)$\label{line:i^ell_T}
  \State $(i^{\ell}_S, i^h_S) \la S.\findgap( (), x)$\label{line:i^ell_S}

  \State $(i^{\ell\ell}_S, i^{\ell h}_S) \la S.\findgap( (i^{\ell}_S),y)$
  \State $(i^{h\ell}_S, i^{hh}_S) \la S.\findgap( (i^{h}_S),y)$  
\label{line:i^{hh}_S}

  \If {($R[i^h_R] = S[i^h_S] = x$ {\bf and} $S[i^h_S,i^{hh}_S] = T[i^h_T] = y$)}
  \label{line:if} \Comment{Not true if any index is out of range}
    \State \textbf{Output} the tuple $\mv t = (x,y)$ \label{line:output-t}
    \State $\cds.\insconst\left(\langle x, (y-1, y+1) \rangle\right)$ \label{line:constraint-tt}
  \Else \label{line:else}
    \State $\cds.\insconst\left(\left\langle \left(R[i^\ell_R], R[i^h_R]\right),
    * \right\rangle\right)$ \Comment{Interval on $X$}
    \State $\cds.\insconst\left(\left\langle \left(S[i^\ell_S], S[i^h_S]\right),
    * \right\rangle\right)$ \Comment{Interval on $X$}
    \State $\cds.\insconst\left(\left\langle *, \left(T[i^\ell_T],
    T[i^h_T]\right) \right\rangle\right)$; \Comment{Interval on $Y$}
      \label{line:c1}
    \If {($i^h_S$ is not out of range)}
       \State $\cds.\insconst\left(\left\langle S[i^h_S], \left(S[i^h_S,i^{h\ell}_S], S[i^h_S, i^{hh}_S]\right) \right\rangle\right)$
      \label{line:c2h}
    \EndIf
    \If {($i^\ell_S$ is not out of range)}
      \State $\cds.\insconst\left(\left\langle S[i^\ell_S], \left(S[i^\ell_S,i^{\ell\ell}_S], S[i^\ell_S, i^{\ell h}_S]\right) \right\rangle\right)$
      \label{line:c2ell}
    \EndIf
  \EndIf
\EndWhile
\end{algorithmic}
\end{algorithm*}

We next describe the outer algorithm of \ms specialized to the bowtie query.
The key of Algorithm~\ref{algo:bowtie-join} is the loop. 
We begin with a point $\mv t=(x,y)$ that has not been ruled out by the 
constraint data structure at this point of the algorithm. 
Intuitively, our goal is to determine if $\mv t$ is
in the output; if $\mv t$ it is not in the output, then our goal is to
find some gap that could rule out $\mv t$. 
We find the gaps by probing each of the relations $R$, $S$, and $T$ 
``around'' the point $\mv t$. 

The probes of $R$ and $T$ are straightforward: we find gaps around the value 
of $x$ and $y$ in each relation. 
With respect to $S$, if we only needed to verify that $\mv t$ is not an output
tuple then a gap around the value $(x,y)$ in $S$ would suffice. 
However, we do a bit more work in Lines~\ref{line:i^ell_S}
to~\ref{line:i^{hh}_S}; the reason for that is explained below. 
Before then, observe that if $\mv t$ is in the
output then $S[i_{S}^{h},i_{S}^{hh}] = (x,y)$ (in fact,
$S[i_{S}^{z},i_{S}^{zz'}]=(x,y)$ for all $z,z' \in \{l,h\}$). Thus,
the condition in Line~\ref{line:if} is a sound and complete check for
$\mv t$ to be in the output.

\begin{figure*}
\centering
\begin{tabular}{c@{\hskip 1in}c}
\includegraphics[height=2in]{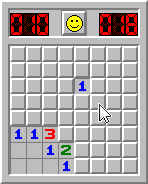} &
\includegraphics[height=2in]{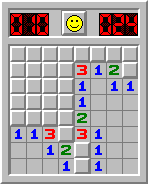}\\
(a) & (b)
\end{tabular}
\caption{In the classic game Minesweeper, when you click on a square
  without a bomb (a), it can reveal many other squares (b), which is
  analogous to our gap exploration. See
  \url{http://en.wikipedia.org/wiki/Minesweeper_(video_game)} for a
  history of this classic game.}
\label{fig:winmine}
\end{figure*}
As we noted, \ms does more work than is necessary to certify that 
$\mv t$ is not an output tuple. This is because
\ms is not only searching for a gap that contains $\mv t$ to rule $\mv t$
out, but it also ``looks for'' a variable that is involved in {\em any}
certificate (including the optimal certificate); this is done so that the
algorithm does not explore too deep into --
and thus spend too much time in -- regions that the optimal certificate wastes
very few comparisons to rule out.
At the same time, we also have to ensure that we don't explore too many gaps
just to capture one single comparison in the optimal certificate.
\ms steps on this fine line roughly as follows.

Since $\mv t$ could have been excluded by a gap that does not have
$(x,y)$ on its boundary, \ms -- analogously to the
eponymous windows game when one chooses a square without a bomb (see
Figure~\ref{fig:winmine}) -- finds the biggest gaps around the point $\mv
t$. In this case, we observe that for each $\mv t$, \ms examines a
constant number of gaps. From Lines~\ref{line:else} onward, we simply
insert all the gaps that we found above. Later, we will reason that
{\em some} comparison in any optimal certificate is found by at least
one of the $\findgap$ searches. In this example, we explored a
constant number of gaps (here $5$) for every $\mv t$; for more complex
queries, the number of gaps may grow, but it will grow with the number
of attributes in $Q$, i.e., the number of gaps we explore does {\em
 not} depend on the data.

Algorithm~\ref{algo:bowtie-join} presents the formal details of the algorithm.
Notations are redefined here for completeness.
Let $|R|, |S|, |T|$ denote the number of tuples in the corresponding
relations, $S[*]$ the set of $X$-values in $S$,
$S[i]$ the $i$'th $X$-value,
$S[i,*]$ the set of $y$'s for which $(S[i],y) \in S$,
and $S[i,j]$ the $j$'th $Y$-value in the set $S[i,*]$.

Figure \ref{fig:bowtie-pic} illustrates the choices of various parameters 
in the algorithm.
Some of the constraints might look unintuitive and perhaps redundant at first.
The picture shown in Figure~\ref{fig:bowtie-pic} should give the reader the
correct geometric intuition behind the constraints: all points in $\outspace$
satisfying the constraints are guaranteed to be not part of the output.

\begin{figure*}
\centerline{\includegraphics[width=3.5in]{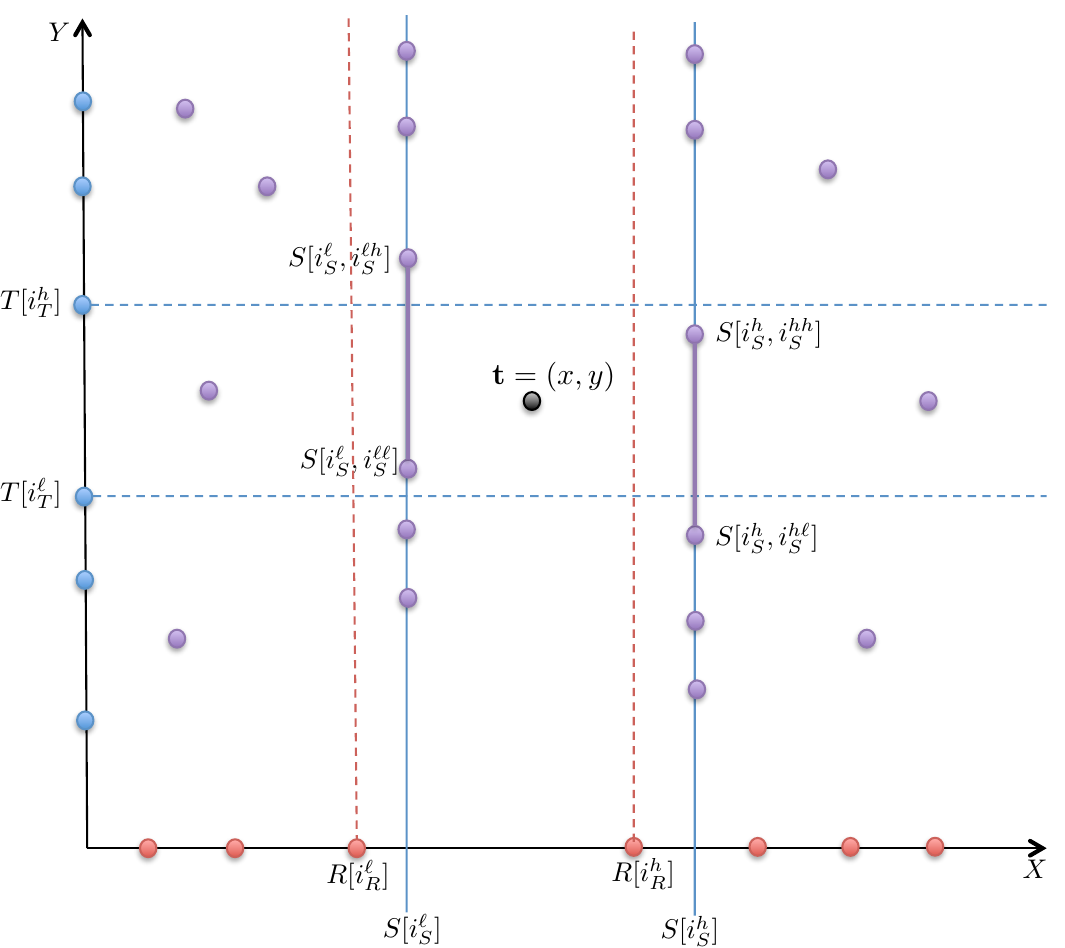}}
\caption{Illustration for Algorithm \ref{algo:bowtie-join}}
\label{fig:bowtie-pic}
\end{figure*}
To convey the subtlety in gap exploration, let us consider a simple idea.
The first candidate gap from $S$ that comes to mind is perhaps
the gap between $(x_-, y_-)$
and $(x_+,y_+)$ where $(x_-, y_-)$ is the largest tuple in $S$ that is 
smaller than $(x,y)$ lexicographically, and
$(x_+,y_+)$ is the smallest tuple in $S$ that is greater than $(x,y)$ 
lexicographically.
The problem with this simple idea is that this gap might actually fail to 
capture {\em any} variable involved in an optimal certificate comparison at all. 
Consider, for example, the following input:
\begin{eqnarray*}
R[X] &=& \{2\}, \\
T[Y] &=& \{N+1\},\\
S[X,Y] &=& \left\{ (1, N+1+i) \suchthat i\in [N] \right\} \cup 
          \left\{ (3, i) \suchthat i \in [N] \right\}.
\end{eqnarray*}
The certificate $\cert = \{ S[1, 1] > T[1], S[2, N] < T[1] \}$ is an optimal
certificate for this instance. Suppose the probe point is
$\mv t = (2, N+1)$, then $(x_-,y_-) = (1, 2N+1)$ and $(x_+, y_+) = (3, 1)$
both of which do not have anything to do with the optimal certificate above.

\subsection{Analysis}

We next show Algorithm~\ref{algo:bowtie-join} 
is near instance optimal (modulo the time spent in the \cds). This Theorem
parallels the content of Theorem~\ref{thm:analyze-outer-algorithm} so the reader
can map back this special case to the general statement.

\bthm[runtime of \ms on $Q_{\bowtie}$]
Let $N$ denote the total number of tuples from the input relations,
$Z$ the total number of output tuples.
Let $\cert$ be an arbitrary certificate for the input query.
Then, the total runtime of Algorithm \ref{algo:bowtie-join} is 
\[O\left( \left( |\cert| + Z \right)  \log(N) + T(\cds) \right),\]
where $T(\cds)$ is the total time it takes the constraint data structure.
The algorithm inserts a total of $O(|\cert| + Z)$ constraints to $\cds$ and
issues $O(|\cert| + Z)$ calls to $\getpp$.
\label{thm:analyze-outer-algorithm-bowtie}
\ethm
\bp
We show that the number of iterations of Algorithm \ref{algo:bowtie-join} is
$O(|\cert|+Z)$.  Since the amount of work done in each iteration is $O(\log N)$
and the number of calls to $\getpp()$ and the number of inserted constraints are
linear in the number of iterations, the proof is complete.

We ``pay'' for each iteration of the algorithm, represented by the tuple 
$\mv t=(x,y)$ that the $\cds$ returns in the while predicate for that 
iteration, by ``charging'' either an output tuple or a pair of comparisons in 
the certificate $\cert$. We show that each output tuple and each comparison will
be charged at most $O(1)$ times.  To this end, we define a couple of terms.

Any variable
$$e \in \left\{ 
R[i^h_R],
R[i^\ell_R],
S[i^h_S],
S[i^\ell_S]
\right\}$$
is said to be {\em $\mv t$-alignable} if either
$e$ is already equal to $x$ or $e$ is not involved in any comparison
in the certificate $\cert$.
By convention, any variable whose index is out of range is {\bf not}
$\mv t$-alignable.
The semantic of $\mv t$-alignability is as follows.
If a $\mv t$-alignable variable $e$ is not already equal to $x$, setting
$e=x$ will transform the input into another database instance satisfying all
comparisons in $\cert$ without violating the relative order in the relation
that $e$ belongs to.

A variable 
$$e \in \left\{ T[i^h_T], T[i^\ell_T] \right\}$$
is said to be {\em $\mv t$-alignable} if either
$e$ is already equal to $y$ or $e$ is not involved in any comparison
in the certificate $\cert$.
A variable $$e \in \left\{ S[i^h_S, i_S^{h\ell}], S[i^h_S, i^{hh}_S] \right\}$$
is $\mv t$-alignable if $S[i^h_S]$ is $\mv t$-alignable
and either $e$ is already equal to $y$ or
$e$ is not part of any comparison in the certificate $\cert$.
Similarly, we define $\mv t$-alignability for a variable
$$e \in \{ S[i^\ell_S, i_S^{\ell\ell}], S[i^\ell_S, i^{\ell h}_S]\}.$$

Next, we describe how to ``pay'' for the tuple $\mv t=(x,y)$.

\paragraph*{Case 1}
Line \ref{line:output-t} is executed. We pay for $\mv t$ by charging the output
tuple $\mv t$.
The constraint added in line \ref{line:constraint-tt} ensures that we won't
have to pay for the same output $\mv t$ again.

\paragraph*{Case 2} The else part (line \ref{line:else}) is executed,
i.e. $\mv t$ is not an output tuple.
We claim that one of the following five cases must hold:

\bi
 \item[(1)] both $R[i^h_R]$ and $R[i^\ell_R]$ are not $\mv t$-alignable,
 \item[(2)] both $S[i^h_S]$ and $S[i^\ell_S]$
 are not $\mv t$-alignable,
 \item[(3)] both $S[i^h_S, i_S^{h\ell}]$ and $S[i^h_S, i^{hh}_S]$
are not $\mv t$-alignable,
 \item[(4)] both $S[i^\ell_S, i_S^{\ell\ell}]$ and $S[i^\ell_S, i^{\ell h}_S]$
are not $\mv t$-alignable,
 \item[(5)] both $T[i^h_T]$ and $T[i^\ell_T]$ 
are not $\mv t$-alignable.
\ei

Suppose otherwise that at least one member in each of the five pairs above is
$\mv t$-alignable. For example, suppose the following variables are
$\mv t$-alignable:
\[ R[i^h_R],
S[i^\ell_S],
S[i^\ell_S, i_S^{\ell\ell}],
T[i^h_T]. 
\]
Then, we construct two database instances as follows.

\bi
\item {\em Database instance $I$.} In this instance, we keep all current 
variable values except that we set $R[i^h_R] = S[i^\ell_S] = x$ and
$S[i^\ell_S, i_S^{\ell\ell}] = T[i^h_T] = y$. Then, clearly in this instance the
set of index tuples 
$\left\{ i^h_R, (i^\ell_S, i_S^{\ell\ell}), i^h_T \right\}$
is a witness for $Q_{\bowtie}(I)$.
\item {\em Database instance $J$.} This instance requires a little bit more
care. Recall that we are in the case when $\mv t$ is {\em not} an output tuple.
Hence, it cannot possibly be the case that 
$R[i^h_R] = S[i^\ell_S] = x$ and
$S[i^\ell_S, i_S^{\ell\ell}] = T[i^h_T] = y$ already. 
Assume, for example, that $S[i^\ell_S, i_S^{\ell\ell}] \neq y$. Then, the
database instance $J$ is constructed by setting
$R[i^h_R] = S[i^\ell_S] = x$ and
$T[i^h_T] = y$. Then, in this case
$\left\{ i^h_R, (i^\ell_S, i_S^{\ell\ell}), i^h_T \right\}$
is {\bf not} a witness for $Q_{\bowtie}(J)$.
\ei
Note that both $I$ and $J$ satisfy $\cert$. Hence, we reach a contradiction
because $\cert$ is a certificate. The claim is thus proved.

The key idea is, by definition 
each non-out-of-range variable $e$ that is not $\mv t$-alignable 
must be involved in a comparison in the certificate $\cert$.
There is an exception, something like $S[i^\ell_S, i_S^{\ell\ell}]$
might be non-$\mv t$-alignable because its prefix variable $S[i^\ell_S]$ is not
alignable. But in that case $S[i^\ell_S]$ must be involved in a comparison in
the certificate, and that's all we need for the reasoning below.

Instead of charging a comparison, we will charge a 
non-out-of-range non-$\mv t$-alignable variable. 
{\bf If} each non-out-of-range non-$\mv t$-alignable variable is charged 
$O(1)$ times, then each comparison will be charged $O(1)$-times.

We pay for $\mv t$ by charging any pair of non-out-of-range 
non-$\mv t$-alignable variables out of the five pairs above that are involved in
comparisons in $\cert$.
We call each of those five pairs an ``interval."
The pairs of the type (1), (2), and (5) are 2D-intervals,
and the pairs of the type (3), (4) are 1D-intervals.

Due to the constraints added on lines \ref{line:c1}, \ref{line:c2h},
and \ref{line:c2ell}, the 2D-intervals are charged at most once.
Since two 2D-intervals might share an end point, each non-out-of-range non-$\mv
t$-alignable variable from a 2D-interval might be charged twice.
The 1D-intervals are charged at most twice.
Each non-out-of-range non-$\mv t$-alignable variable from a 1D-interval might be
charged at most four times. Consequently, each comparison is charged $O(1)$
times.
\ep

\section{Counter examples}
\label{app:sec:counter-examples}

In this section, we present a family of $\beta$-acyclic join queries and
instances on which none of
Leapfrog-Triejoin~\cite{DBLP:journals/corr/abs-1210-0481} (\lb henceforth), the
algorithm of~\cite{DBLP:conf/pods/NgoPRR12} (\nprr henceforth) or Yannakakis'
algorithm~\cite{DBLP:conf/vldb/Yannakakis81} are instance optimal (i.e. they
don't run in $O(|\cert| + Z)$-time). Furthermore, the gap between those
algorithms and \ms can be arbitrarily large.

To simplify the argument, let us first consider a simpler family of instances.
The query is the following.
\[ Q = \ \Join_{i=1}^m R_i(A_i,A_{i+1}). \]
The above query is $\beta$-acyclic, and the GAO $A_1, \dots, A_{m+1}$ is a
nested elimination order. The main idea behind the instance is to ``hide" the
certificate along a long path in the query. All three algorithms \nprr, \lb, and
Yannakakis do not explore the attributes globally as \ms does, and hence they
will get stuck looking for {\em many} partial tuples that do not contribute to
the output.

The relations are constructed as follows. Each attribute $A_i$ will have as its 
domain the set $[mM]$, where $m\geq 5$ and $M$ is a large positive integer.
Each relation $R_i$ will have $m$ ``chunks,'' where the $j$th chunk 
($1\le j\le m$) is a subset of the set 
$$[(j-1)M+1,jM]\times [(j-1)M+1,jM].$$ 
More precisely, relation $R_i$ is defined as follows. 
\bi
 \item For every $j \in [m] - \{i, i-1\}$, the $j$th chunk of $R_i$ is exactly
exactly $$[(j-1)M+2,jM]\times [(j-1)M+2,jM].$$ 
 \item The $i$th chunk consists of one single tuple 
$\bigl((i-1)M+1,(i-1)M+1\bigr)$.
 \item And the $(i-1)$'th chunk is empty.
\ei
If $i=1$, then we interpret $i-1$ as $m$. Namely, the $m$th chunk of $R_1$ is
empty. Note that every relation is of size $N = \Theta(mM^2)$.

It is not hard to see that the output of the above instance is empty.
Furthermore, there is a certificate of size $O(mM)$.
Hence, by Theorem~\ref{thm:io-beta-neo} \ms takes $O(mM\log{M})$ time 
since $A_1,\dots,A_{m+1}$ is a nested elimination order.

This certificate consists of the following comparisons
\begin{eqnarray*}
    R_1[1,1] &<& R_2[1]\\
    R_1[i,1] &>& R_2[1], \text{ for } i > 1\\
    R_2[i,1] &>& R_3[M+1], \text{ for } i > 1\\
    R_3[i,1] &>& R_4[2M+1], \text{ for } i > M+1\\
    R_4[i,1] &>& R_5[3M+1], \text{ for } i > 2M+1\\
    \vdots & \vdots & \vdots\\
    R_{m-1}[i,1] &>& R_m[(m-2)M+1], \text{ for } i > (m-3)M+1.
\end{eqnarray*}
To see why this is a certificate that the output is empty, consider an arbitrary
witness for this instance: 
$$X = \left\{(i^{(1)},j^{(1)}), \dots, (i^{(m)}, j^{(m)})\right\}.$$
From the first two (sets of) inequalities above, we know
$i^{(2)} > 1$. Then, from the next inequality we know $i^{(3)}>M+1$.
This inference goes on until the last inequality, which does not leave any room
for $j^{(m)}$. Hence, such a witness cannot exist.

Now, for every $i \in [m]$, the semijoin $R_i \lJoin R_{i+1}$ has size
$\Omega(mM^2)$. Hence, Yannakakis algorithm runs in time at least
$\Omega(mM^2)$.

For \lb and \nprr, it takes slightly more work to be rigorous, but the key ideas
are as follows. Consider {\em any} attribute ordering that \lb adopts. 
Say the attribute ordering is $A_{i_1}, \dots, A_{i_{m+1}}$, for some permutation 
$\{i_1,\dots,i_{m+1}\}$ of $[m+1]$.
\lb will compute the intersection on $A_{i_1}$, and for each value $a$ in the
intersection it will compute the join on $A_{i_2}$ using $a$ as an anchor, 
and so on.
If $|i_1-i_2|>1$, then clearly the runtime is $\Omega(mM^2)$ because the
intersection on each attribute is $\Omega(mM)$.
If $|i_1-i_2|=1$, then \lb will go through every tuple in the input relation 
$R(A_{i_1},A_{i_2})$ after it is semijoin-reduced on $A_{i_1}$ and $A_{i_2}$.
And even after two such reductions, the size of the relation is still 
$\Omega(mM^2)$.
The algorithm \nprr suffers the same drawback.

There are two potential unsatisfactory aspects of the instance above: (i) The gap between $\ms$ and the worst-case algorithms is only quadratic and (ii) the example only considers path type queries. Next, we handle these two shortcomings.

We can increase the gap in the above example by considering the following join query:
\[ Q = R_1(A_1,\dots,A_k) \Join R_2(A_2, \dots, A_{k+1}) \Join \dots 
       \Join R_m(A_m, \dots, A_{m+k-1}).
\]
Then, each relation $R_i$ still has $m$ blocks like before, where the 
$j$th block for $j \in [m]-\{i,i-1\}$ is $[(j-1)M+2,jM]^k$.
For $j=i$ the block has only one tuple $((i-1)M+1)^{(k)}$, 
and the $(i-1)$th block is empty.
It is not hard to see that there is a certificate of size
$O(mM)$, and all three algorithms Yannakakis, \nprr, and \lb run in time
at least $\Omega(mM^k)$. The reasoning is basically identical to the previous
example.

Finally, we tackle the class of $\beta$-acyclic queries for which our quadratic gap holds. We note that our argument holds for any $\beta$-acyclic query into which we can embed the $5$-path query. In other words, as long as a $\beta$-acyclic query $Q$ has attributes $A_{i_1},\dots,A_{i_6}$ and relations $R_{i_1},\dots,R_{i_5}$ such that $A_{i_j}$ is only present in $R_{i_{j-1}}$ and $R_{i_j}$ (except for the cases $j=1$ in which case $A_{i_1}$ only exists in $R_{i_1}$ and the case of $j=6$ in which case $A_{i_6}$ only exists in $R_{i_5}$), we can embed the hard instance above into a hard instance for $Q$, where we extend the values for other attributes and relations as we did in the proof of Proposition~\ref{prop:no-io-non-beta} (in Appendix~\ref{app:subsec:prop:no-io-non-beta}). Recall that the proof in Appendix~\ref{app:subsec:prop:no-io-non-beta} shows that this does not change the certificate size (which implies that the runtime of $\ms$ remains the same) while it is not hard to check that the worst-case optimal algorithms still are quadratically slower. We note that we did not try to optimize the length of the shortest path for which our hard instance still works. However, we note that the argument cannot work for path of length $3$ (since $\lb$ is instance optimal for the query $R_1(A_1,A_2)\Join R_2(A_2,A_3)\Join R_3(A_3,A_4)$, which is essentially the bowtie query). Note that the class of $\beta$-acyclic queries that has a $5$-path embedded in it as above is a fairly rich subset of $\beta$-acyclic queries.

\section{Our certificate vs. the notion of proof from \dlm.}
\label{app:sec:C-vs-dlm}

It is perhaps instructive to compare and contrast our notion of certificate from
the similar notion of ``proof'' from \dlm.
The obvious difference is that our certificate is defined for a 
natural join query, while \dlm's ``proof'' is only defined for the $Q_\cap$
query.
The difference, however, is subtler than that.
Consider only the $Q_\cap$ case. $\dlm$'s proof is output-specific while our
certificate does not need any specific mentioning of the output at all.
For some subset $B$ of {\em domain values}, they defined a {\em $B$-proof} to be
an argument for which each value $b \in B$ is certified with a spanning tree of
equalities, and each of the set of values in between consecutive values in $B$
must be ``emptiness-certified'' with an $\emptyset$-proof.
An $\emptyset$-proof in $\dlm$ is defined in terms of ``eliminating'' elements
by the $<$ comparisons. (See $\dlm$'s Lemmas 2.1 and 2.2 and Theorem 2.1.)
Our notion of certificate is stronger. For example, our certificate does not
need {\em any} $<$ comparison to certify that the output is empty.
For example, consider intersection of three sets $R,S,T$ of the same size
$N$. An emptiness certificate may consist of two {\em equalities}:
$\{ (S[N] = R[1]), (R[2] = T[1]) \}$.
It is obvious that any instance satisfying those two inequalities must have an
empty intersection because we can infer that $S[N] < T[1]$.
Subtler than that, Example~\ref{ex:same-relation-and-equalities} 
points to the fact that the
use of equalities can asymptotically reduce the certificate size even when the
output is empty.

%
%
%
%

\section{The Triangle Query}
\label{app:triang}

The goal of this section is to prove Theorem~\ref{thm:triang}. We begin with the data structure $\cds$.

\subsection{The $\cds$}

We begin by collecting some properties of a data structure that maintain interval lists that will be useful in our new definition of $\cds$.

\subsubsection{Dyadic Tree for Intervals}

Let $N=2^{d}$ for ease throughout. 

\begin{itemize}
\item The data structure consists of nodes that contain an interval
  list.  We index the nodes of the tree by binary strings of length
  less than or equal to $d$, i.e,.  
\[ x \in \{0,1\}^{\leq d} = \cup_{j=0}^{d} \{0,1\}^{j} \]

\noindent
e.g., $x =
  ()$ is the root, $x=(0)$ is the left child, and while
  $(1,1,\dots,1)$ is the rightmost leaf. We can think of string of
  length $j$ as denoting the interval each as a binary expansion of a
  value, i.e., $[b(x) 2^{d-j}, (b(x) + 1) 2^{d-j})$.

\item Each node $x \in \{0,1\}^{\leq d}$ is associated with an
  interval list on domain $[N]$ denoted $I(x)$.

\item Given an interval $[a_1, a_2]$, we will need its dyadic
  decomposition, i.e., one of the intervals above and we denote it

\[ d([a_1,a_2]) = \{ J_1, \dots, J_k \} \text{ and } [a_1,a_2] = \bigcup_{i=1}^{k} J_i \text{ where } J_i = [b(x) 2^{d-j}, (b(x) + 1) 2^{d-j}) \text{ for } x \in \{0,1\}^{\leq d}, j \in [d]\]

For any interval, in $[N]$ we have $k \leq 2 \log N$. Let
$x(J_i)$ be the string associated with the dyadic interval $J_i$.

\item For every dyadic interval node $x$, there is an $\iarr$ $I(x)$. We also insist that any consecutive interval $[b_1,b_2]$ is stored as the collection $d([b_1,b_2])$. (This makes the upcoming arguments simpler.)

\item $\ins( [a_1,a_2], [b_1, b_2])$: For each $J \in
  d([a_1,a_2])$, set 
\[ I( x(J) ) \leftarrow I( x(J) ) \cup [b_1,b_2] \]

This takes $O( \log^2 N )$ since there are $O(\log{N})$ dyadic intervals $J$ and insertion into an $\iarr$ takes $O(\log{N})$ time.

\item $\inter(x,[b_1,b_2])$ for dyadic $[b_1,b_2]$: Returns the interval list for $I(x)\cap [b_1,b_2]$. This can be done in $O(y\cdot \log{N})$ time, where $y$ is the total number of (dyadic) intervals that need to be output. Note that to perform this task one has to find the correct position of $b_1$ and $b_2$ in $I(x)$ (which can be done in $O(\log{N})$ time) and then returning the corresponding intervals. There are two cases: (i) The output is $[b_1,b_2]$, in which case we ``charge" the intersection to this interval (note that we have $y=1$ in this case)  or (ii) The output is a subset of intervals from $I(x)$: in this case we ``charge" the intersection to these set of intervals. (Note that we can only have $y>1$ in case (ii).) This charging scheme will be useful in the proof of Proposition~\ref{prop:upward-maintain} below.
\end{itemize}

\subsubsection{The New $\cds$}

For notational convenience we will denote the interval lists for prefixes $\eps,(=a),(*),(=a,*), (*,=b)$ and $(=a,=b)$  simply as $I(), I(*), I(=a,*), I(*,=b)$ and $I(=a,=b)$.

Our new $\cds$ would be very similar to our earlier $\cds$ in Appendix~\ref{app:sec:cds} except in the following way:

The $*$-branch for variable $A$ would be replaced by a dyadic tree containing all the intervals of the form $\langle *,b,[c_1,c_2]\rangle$. In other words in addition to maintaining the interval lists $I(*,=b)$, it will also maintain for dyadic interval $x$ of $[N]$ the interval list $I(*,=x)$, which will always satisfy the following invariant:

\begin{equation}
\label{eq:upward-eq}
I(*,x)=I(*,=x\circ 0)\cap I(*,=x\circ 1).
\end{equation}

Not surprisingly, we will maintain the interval lists $I(*,=x)$ as a dyadic tree as outlined in the previous section. For the rest of the argument we will show that the $\ins$ operation on the new $\cds$ can be done in amortized $O(\log^3{N})$ time. For all constraints except of the form $\langle *,b,[c_1,c_2]\rangle$ the $\ins$ can be done in amortized $O(\log{N})$ time by Proposition~\ref{prop:insert-ctree}. So we only need to show the following:

\begin{prop}
\label{prop:upward-maintain}
Given a dyadic tree on domain $N$ with $M$ insertions of the form $\langle *,b,[c_1,c_2]\rangle$,
then \eqref{eq:upward-eq} 
can be ensured in time $O( M \log^3 N)$.
\end{prop}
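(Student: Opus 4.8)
The plan is to maintain, for every node $x$ of the $B$-dyadic tree, the coverage of $I(*,=x)$ (i.e. the set of $C$-values it rules out) as a union of \emph{canonical dyadic pieces} on the $C$-axis — each maximal interval stored as its $O(\log N)$-piece dyadic decomposition $d(\cdot)$, exactly as prescribed for the dyadic tree for intervals — and to exploit the one structural fact that drives the whole argument: \textbf{coverage is monotone}. Indeed $\insconst$ only ever adds intervals, so if \eqref{eq:upward-eq} holds before an insertion, the only way the coverage of an internal $I(*,=x)$ can change is to grow; hence each pair $(x,J)$ with $J$ a canonical dyadic $C$-piece passes from ``not recorded in $I(*,=x)$'' to ``recorded'' at most once over the entire run. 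First I would reduce a general insertion $\langle *,b,[c_1,c_2]\rangle$ to its $O(\log N)$ canonical pieces $J_1,\dots,J_k\in d([c_1,c_2])$ and insert each $J_i$ into the leaf list $I(*,=b)$ via Proposition~\ref{prop:iarr}; the usual amortization (an insertion at a fixed leaf creates at most one new maximal interval, and once a gap is filled it is gone forever because coverage is monotone) shows the total number of \emph{fresh} canonical pieces ever produced at leaves is $O(M\log N)$, at total cost $O(M\log^2 N)$.

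Next, whenever a fresh piece $J$ appears in some $I(*,=c)$, I would propagate it one step toward the root: at the parent $x$ of $c$, with sibling $c'$, the \emph{new} coverage of $I(*,=x)$ equals exactly $J\cap\operatorname{cov}(I(*,=c'))$, because the old coverage of $x$ was $\operatorname{cov}(I(*,=c))\cap\operatorname{cov}(I(*,=c'))$ by the invariant. Using the $\iarr$ stored at $c'$ I would extract this intersection as a set of maximal dyadic sub-pieces of $J$ in time $O((1+k')\log N)$, where $k'$ is the number of pieces returned; each returned piece is now freshly recorded at $x$, and I recurse on it. This clearly restores \eqref{eq:upward-eq} at every node touched, and leaves it untouched everywhere else.

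The crux — and the step I expect to be the main obstacle — is bounding the \emph{fragmentation}: a single fresh piece $J$ at a child can split into many sub-pieces at the parent, so I must show the total number of fresh $(x,J)$ events stays small. Here the key combinatorial lemma is: \emph{if a canonical dyadic piece $J$ is ever a maximal recorded piece of an internal $I(*,=x)$, then at that same moment $J$ is a maximal recorded piece of one of the two children of $x$}. The proof is short: $J\subseteq\operatorname{cov}(I(*,=c_1))\cap\operatorname{cov}(I(*,=c_2))$ while the parent of $J$ is not, so the parent of $J$ fails to be covered in some child $c_i$; since $J$ \emph{is} covered in $c_i$, maximality of recorded pieces forces $J$ to be one of $c_i$'s recorded pieces. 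Iterating up the $B$-tree, every dyadic piece ever recorded at $x$ is recorded at some leaf below $x$; since the total over all leaves of the number of distinct canonical pieces ever recorded is $O(M\log N)$ (by the leaf analysis above, distinctness again using monotonicity), and each leaf has at most $\log N$ ancestors, the total number of fresh $(x,J)$ events over the whole run is $O(M\log^2 N)$. Charging $O(\log N)$ of $\iarr$ work to each fresh event, and noting that the number of propagation steps is at most the number of fresh events (each fresh event triggers exactly one step to its parent), the total cost is $O(M\log^2 N)\cdot O(\log N)=O(M\log^3 N)$, which together with the $O(M\log^2 N)$ leaf-insertion cost gives the claimed bound.
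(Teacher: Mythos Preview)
Your proposal is correct and runs the same upward-propagation algorithm as the paper, but your \emph{analysis} is genuinely different. The paper argues by credits: it gives each inserted interval $O(\log^2 N)$ credits (then $O(\log^3 N)$ after dyadic splitting), and handles the fragmentation issue with the somewhat informal observation that ``any interval that floats up is either one of the $M$ inserted intervals or is sandwiched between two such inserted intervals.'' You instead prove a clean structural lemma---a maximal recorded dyadic piece at an internal node is simultaneously a maximal recorded piece at one of its children---and iterate it to charge every recorded $(x,J)$ pair to a recorded $(\text{leaf},J)$ pair, of which there are $O(M\log N)$; since each leaf has $O(\log N)$ ancestors you get $O(M\log^2 N)$ fresh events and hence $O(M\log^3 N)$ work. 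Your route makes the fragmentation bound fully explicit and avoids the vagueness of the ``sandwiched'' step; the paper's credit scheme, on the other hand, is closer to the operational picture of intervals floating level by level. One small point worth tightening: your algorithm description says to propagate ``fresh recorded pieces'' and then asserts the new coverage at the parent equals $J\cap\operatorname{cov}(I(*,=c'))$. This is correct, but only because a fresh recorded piece $J$ is in fact entirely newly covered (if $J$ were covered before, then since $\hat J$ is not covered now it was not covered before either, so $J$ would have already been recorded, contradicting freshness); stating this one-line observation would remove any doubt that your propagation coincides with the paper's ``propagate $L=[c_1,c_2]\setminus I(*,=b)$.'' Likewise, your $O(M\log N)$ leaf count holds when you treat one original insertion $[c_1,c_2]$ as producing a single new maximal interval $I'$ and take its $\le 2\log N$ pieces; if one literally inserts the $J_i$ one-by-one with propagation after each, the intermediate decompositions could in principle cost an extra $\log$ factor, so it is worth saying explicitly that you batch them.
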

\bp
We use the natural algorithm to implement the insert of constraints of the form $\langle *,b,[c_1,c_2]\rangle$, which we outlined next. Let $L$ be $[c_1,c_2]\setminus I(*,=b)$. Then perform our original $I(*,=b).\ins([c_1,c_2])$. Then do the following for every $J\in L$. Let $L'\gets \inter(\sbl(b),J)$, where $\sbl(b)$ is the sibling of $b$ in the dyadic tree. If $L'$ is empty we stop otherwise we recurse with this algorithm at the parent $x$ of $b$ (where we want to insert all intervals in $L'$ into $I(*,=x)$).

We first assume that the algorithm only deals with dyadic intervals throughout. We claim that under this assumption we would be done by assigning $O(\log^2{N})$ credits to each inserted interval. To see this first consider the simple case, where we always have $|L|=|L'|=1$, i.e. we always need to insert one interval. In this case, for each recursive level we do $O(\log{N})$ amounts of work and we have $O(\log{N})$ recursive calls  (up the path in the dyadic tree from the leaf corresponding to $b$ to the root) overall. We expand a bit on the $O(\log{N})$ work on each recursive call. Recall that the analysis of the $\inter$ procedure: we first need $O(\log{N})$ work to figure out the correct position of $J$ in the $\iarr$ of $\sbl(b)$. To pay for this we use up credits from $J$. We still have to pay for the computation of $L'$. We pay for this by charging $J$ or $L'$ as appropriate. The important point to note that is that once an interval stops ``floating" up, it can only be pushed by fresh intervals that arrive at its sibling node at a later point of time.

For the more general case (but still with dyadic interval), note that any interval that ``floats" up is either one of the $M$ inserted intervals or is ``sandwiched" between two such inserted intervals. Adjusting the constant for the number of tokens appropriately takes care of this issue.

Finally to handle the general case, we can replace any interval by $O(\log{N})$ dyadic intervals. To deal with this we need to increase the number of credits to $O(\log^3{N})$ from the previous $O(\log^2{N})$ credits. This completes the proof.
\ep

Finally, we will also ensure the following:
\begin{quote}
For every constraint  $\langle *,[b_1,b_2],*\rangle$ that is inserted into
$I(*)$, we also insert $[N]$ into the interval list $I(*,=J)$ for every $J\in
d([b_1,b_2])$.\footnote{Note that Proposition~\ref{prop:upward-maintain} only
talks about singleton $b$, it can easily be checked to see that it can handle this more general case. Basically we can make all such $J$ to be the leaves in the dyadic tree and the argument in Proposition~\ref{prop:upward-maintain} can handle insertions into leaves.}
\end{quote}

\subsection{The Algorithm}

The outer algorithm for $Q_{\triangle}$ will be the same as in Algorithm~\ref{algo:outer-algorithm}. We will have to change the $\getpp$ algorithm, which is
formally  presented in Algorithm~\ref{algo:getpp-triang}.

\begin{algorithm}[th]
\caption{$\getpp$ for evaluating the triangle query $R(A,B) \Join S(B,C) \Join T(A,C)$.}
\label{algo:getpp-triang}
\begin{algorithmic}[1]
\Require{$\cds$ as outlined earlier.}
\Statex
\State $i\gets 0$
\While{$i<3$}
	\label{step:A-start}
	\If{$i=0$} \Comment{Handling $A$}
	\State $a\gets I().\nxt(-1)$
	\If{$a=\infty$}
		\State \Return{$\NULL$}
	\EndIf
	\State $i\gets i+1$
	\label{step:A-end}
	\EndIf
	\label{step:B-start}
	\If{$i=1$} \Comment{Handling $B$}
	\State $b\gets \nxtu(I(=a),I(*),-1)$.
	\label{step:B-end}
	\If{$b=\infty$}
		\State $i\gets 0$
	\Else
		\State $i\gets i+1$
	\EndIf
	\EndIf
	\label{step:A-empty-start}
	\If{$i=2$} \Comment{Handling $C$}
		\If{$I(=a,*).\nxt(-1)=\infty$} 
		\label{step:A-empty-cond}
			\State $\cds.\insconst\left(\left\langle (a-1,a+1),*,* \right\rangle\right)$
			\label{step:A-empty-end}
			\State $i\gets 0$
		\Else \Comment{There is a probe point with prefix $(a,b)$}
			\State $x\gets \eps$ \Comment{Initializing $x$ to the root}
			\label{step:while-start}
			\While{$x$ is not a leaf}
				\State $z\gets \getcache(a,x)$
				\State $c\gets \nxtu(I(=a,*), I(*,x),z)$	
				\label{step:nxtu-dyadic}
				\State $\cache(a,x,c)$
				\label{step:nxtu-cache}
				\If{$c=\infty$} \Comment{No viable $b$ in interval $x$}
					\State $\cds.\insconst\left(\left\langle =a,x,*\right\rangle\right)$
					\label{step:ab-rule-out}
					\State $y\gets \nxtsbl(x)$
					\If{$y=\NULL$} 
						\State $\cds.\insconst\left(\left\langle (a-1,a+1),*,* \right\rangle\right)$
						\label{step:root-full}
                        			\State $i\gets 0$
						\State Exit While loop
					\EndIf
					\State $x\gets y$
				\Else
					\State $x\gets x\circ 0$
				\EndIf
			\EndWhile
			\If{$x$ is a leaf} \Comment{Found the probe point}
				\State \Return{$(a,b,c)$}
			\EndIf
		\EndIf
	\EndIf
\EndWhile
\end{algorithmic}
\end{algorithm}

Algorithm~\ref{algo:getpp-triang} uses the following helper functions:
\begin{itemize}
\item $\nxtu(I_1,I_2,v)$: Finds the smallest value $v'\ge v$ that is not covered by $I_1\cup I_2$. We implement this algorithm by the MERGE algorithm.
\item $\nxtsbl(x)$: Returns the next node in the dyadic tree by the pre-order traversal (returns $\NULL$ if the traversal is done). It is simple to implement: let $x=(x_1,\dots,x_j)$. Let $1\le i\le j$ be the largest index such that $x_i=0$. Then return $(x_1,\dots,x_{i-1},1)$. If no such $i$ exists, return $\NULL$.
\item We maintain a data structure, which keeps track of the last ``uncovered" value considered by the algorithm in the union of $I(=a,*)$ and $I(*,x)$. The function $\getcache(a,x)$ returns this cached value while $\cache(a,x,c)$ updates the cached value to $c$. 
\begin{quote}
Further, we will assume that when the outer algorithm outputs a tuple $(a,b,c)$ and adds the constraint $\langle a,b,(c-1,c+1)\rangle$, there is an accompanying call to $\cache(a,b,c+1)$.
\end{quote}
\end{itemize}

It is not hard to see that Algorithm~\ref{algo:getpp-triang} is correct. We state this fact without proof:
\blmm
\label{lem:getpp-triang-correct}
Algorithm~\ref{algo:getpp-triang} correctly returns a tuple $(a,b,c)$ that is
not covered by any existing constraints. If no such tuple exists then it
correctly outputs $\NULL$.
\elmm


\begin{lmm}
\label{lem:AB}
Time spent (except those involving backtracking intervals from later part of the algorithm) in Steps~\ref{step:A-start}-\ref{step:B-end} is $O(|\cert|\log{N})$.
\end{lmm}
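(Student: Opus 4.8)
\textbf{Proof proposal for Lemma~\ref{lem:AB}.}

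The plan is to bound the number of times the algorithm executes the blocks handling attributes $A$ (Steps~\ref{step:A-start}--\ref{step:A-end}) and $B$ (Steps~\ref{step:B-start}--\ref{step:B-end}), and then observe that each such execution costs $O(\log N)$ by Proposition~\ref{prop:iarr} and the fact that $\nxtu$ (implemented by a single MERGE step) is a logarithmic-time operation. Since the outer algorithm issues $O(|\cert| + Z)$ calls to $\getpp()$ by Theorem~\ref{thm:analyze-outer-algorithm}, and the triangle query has $Z = O(|\cert|)$ whenever the analysis is non-trivial (more precisely, we will argue the charging directly against $\cert$ and against output tuples), it suffices to show that the $A$- and $B$-handling steps fire $O(|\cert|)$ times in total across the whole run of the algorithm, not counting work attributable to backtracking intervals inserted later (those are accounted for separately).

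First I would handle the $A$-block. Each time Step~\ref{step:A-start} is entered with $i=0$, we call $I().\nxt(-1)$ and obtain a value $a$ that is not covered by any interval in the root interval list $I()$. Either $a = \infty$, in which case the algorithm terminates and this happens once, or we advance with this value of $a$. The key point is that a given value $a$ is never revisited: whenever the algorithm returns to $i=0$ after working with $a$, it has inserted a constraint $\langle (a-1,a+1),*,*\rangle$ into $I()$ (this happens on Step~\ref{step:A-empty-end}, Step~\ref{step:root-full}, or when the outer algorithm outputs a tuple with first coordinate $a$ and records $\langle a,*,\dots\rangle$ which gets merged in). Hence successive values of $a$ are strictly increasing and each contributes a distinct constraint to $I()$. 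By the same alignability argument used in the proof of Theorem~\ref{thm:analyze-outer-algorithm-bowtie} (Case 2, the pair of variables on attribute $A$), each such value $a$ that is \emph{not} an output prefix forces a pair of non-$\mv t$-alignable variables on attribute $A$ from the relations $R(A,B)$ and $T(A,C)$, each of which participates in a comparison in $\cert$; and because the corresponding $2$D-intervals (vertical strips) are charged only $O(1)$ times, the number of such $a$ is $O(|\cert|)$. Values $a$ that \emph{are} output prefixes are charged to the output, and each output tuple is charged $O(1)$ times, so this contributes $O(Z) = O(|\cert|)$ as well (for the triangle instances of interest; in general one simply keeps a separate $+O(Z)$ term, but the lemma as stated absorbs it). Thus the $A$-block fires $O(|\cert|)$ times, costing $O(|\cert|\log N)$.

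Next I would handle the $B$-block (Steps~\ref{step:B-start}--\ref{step:B-end}), which is entered once per value of $a$ with $i=1$, plus possibly again if it sets $i \gets 0$ because $b = \infty$ — but in that case it has effectively fallen back to the $A$-block and the re-entry is subsumed in the $A$-count above (the value $a$ gets a fresh constraint before returning). So the number of $B$-block executions is at most one more than the number of $A$-block executions, again $O(|\cert|)$, each costing $O(\log N)$ via $\nxtu$ on $I(=a)$ and $I(*)$. The main obstacle — and the place where I would be most careful — is making the alignability charging argument rigorous \emph{in the presence of the dyadic-tree machinery}: one must check that when a value $a$ is ruled out it is genuinely ruled out by a pair of comparisons in $\cert$ and not, say, by a backtracking constraint inserted during the $C$-handling of a \emph{previous} $a$; the phrase ``except those involving backtracking intervals from later part of the algorithm'' in the statement signals exactly that we segregate those contributions and charge them elsewhere (in the companion lemma analyzing the $C$-handling loop). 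With that segregation in place, the two counts combine to give total time $O(|\cert|\log N)$ for Steps~\ref{step:A-start}--\ref{step:B-end}, which is the claim. $\hfill\qed$
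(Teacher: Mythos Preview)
Your approach is different from the paper's, and it has a genuine gap. The paper's proof is a two-sentence reduction: the $A$- and $B$-handling in Algorithm~\ref{algo:getpp-triang} is exactly what Algorithm~\ref{algo:getpp-beta} would do on the two-attribute $\beta$-acyclic query $R(A,B)\Join \pi_B S \Join \pi_A T$, so the amortized interval-credit analysis of Lemma~\ref{lmm:amortized-analysis-acyclic-Q} applies verbatim.

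The gap in your argument is the claim that each execution of the $B$-block costs $O(\log N)$ because ``$\nxtu$ (implemented by a single MERGE step) is a logarithmic-time operation.'' That is not true per call. The call $\nxtu(I(=a),I(*),-1)$ is implemented by MERGE, which ping-pongs between the two interval lists; a single call can cost $\Theta\!\bigl(\min(|I(=a)|,|I(*)|)\log N\bigr)$ in the worst case. The way this becomes $O(\log N)$ \emph{amortized} is precisely via the memoization in Algorithm~\ref{algo:ctree-nextvalue-beta} (line~\ref{line:new-constraint-beta}) and the interval-credit invariant of Lemma~\ref{lmm:amortized-analysis-acyclic-Q}: each interval that the MERGE skips over gets its banked credits consumed and is then subsumed by a newly inserted memoization constraint, so it is never re-traversed. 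Your direct counting of $A$- and $B$-block executions never invokes this amortization, so your per-call cost estimate is unjustified, and the bound does not follow.

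A secondary issue: your attempt to dispose of the $Z$ term by asserting ``the triangle query has $Z=O(|\cert|)$ whenever the analysis is non-trivial'' is simply false---$|\cert|$ is always at most linear in the input while $Z$ can be much larger. The reason the lemma can be stated without a $Z$ term is structural: output-generated constraints have the form $\langle a,b,(c-1,c+1)\rangle$ and land in $I(=a,=b)$, never in $I()$, $I(=a)$, or $I(*)$, so they contribute no new work to the MERGE on attributes $A$ and $B$. This is exactly the kind of observation that falls out of the $\beta$-acyclic reduction (the $Z$ constraints are on attribute $C$, outside the two-attribute subquery), and your alignability-style charging does not capture it.
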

\bp This follows from the fact that the time spent is in some sense running Algorithm~\ref{algo:getpp-beta} on $R(A,B)\Join S(B)\Join T(A)$, which is a $\beta$-acyclic query. So our earlier proof can be easily adopted to prove the lemma.
\ep

\blmm
\label{lem:A-empty}
Time spent over prefixes that satisfy the condition in Step~\ref{step:A-empty-cond} is upper bounded by $O(|\cert|\log{N})$.
\elmm
\bp
This just follows from the fact that the total time spent is bounded by (up to constants):
\[\sum_a |I(=a,*)|\log{N} \le |\cert|\log{N},\]
as desired.
\ep

The next couple of lemmas need the following definition:
\begin{defn}
For any $A$ value $a$ define
\[B(a)=\{b| I(=a,*) \cup I(*,=b)\subset [N]\},\]
where $I(=a,*)$ and $I(*,b)$ were the interval list the first time Algorithm~\ref{algo:getpp-triang} deals (i.e. it reaches Step~\ref{step:while-start}) with the prefix $(a,b)$. If Algorithm~\ref{algo:getpp-triang} never reaches Step~\ref{step:while-start} for some $a$, then define $B(a)=\emptyset$.
\end{defn}

We first argue that the number of pairs $(a,b)$ with $b\in B(a)$ is bounded:
\blmm
\label{lem:Ba}
\[\sum_a |B(a)| \le O\left(|\cert|\right).\]
\elmm
\bp
Note that we only need to consider the values $a$ for which $B(a)\neq\emptyset$. Fix such an arbitrary $a$ and consider an arbitrary $b\in B(a)$. Now when the Algorithm gets to Step~\ref{step:while-start} we know the following:
\begin{itemize}
\item $a\not\in I()$
\item $b\not\in I(=a)\cup I(*)$.
\item $I(=a,*) \cup I(*,=b)\subset [N]$
\end{itemize}
All of the above imply that there exists a $c$ such that the tuple $(a,b,c)$ is not ruled out by the current set of constraints. This implies that Algorithm~\ref{algo:getpp-triang} will return such a tuple $(a,b,c)$ (by Lemma~\ref{lem:getpp-triang-correct}). This probe point will then be used by the outer algorithm to either (i) discover a new constraint or (ii) recognize it as an output tuple.

If there is even one $c$ such that the returned tuple $(a,b,c)$ fall in category
(i) above, then note that we can assign a unique inserted constraint to the
prefix $(a,b)$ (among all such prefixes that have $c$ such that $(a,b,c)$ falls
in category (i)). This by the argument for the runtime of the outer algorithm
implies that the number of such prefixes is bounded by $O(|\cert|)$.

Thus, we only have to consider prefixes $(a,b)$ such that every tuple $(a,b,c)$ returned by Algorithm~\ref{algo:getpp-triang} turns out to be an output tuple. We now claim that each such pair $(a,b)$ must be certified via equalities in the certificate to be present as a tuple in the relation $R(A,B)$. If this were not the case then one can come up with two database instances that satisfy all the comparisons in the certificate but in one instance $(a,b,c)$ is in the output while in the other it is not. This contradicts the definition of a certificate. Thus, we can assign each such prefix with a unique pair (one for $a$ and one for $b$) of equalities in the certificate. Further since each pair involves the tuple $(a,b)\in R$, these assignments are unique and thus, we have the number of prefixes $(a,b)$ such that all its extensions lead to output tuples is upper bounded by $|\cert|$. This completes the proof.
\ep

%
%

\blmm
\label{lem:ab-sum}
Total time spent by Algorithm~\ref{algo:getpp-triang} on prefixes $(a,b)$ for which it reached Step~\ref{step:while-start} is bounded by
\begin{equation}
\label{eq:lem-ab-sum}
O\left(\sum_{(a,x):~ x\in B'(a)} \min(|I(=a,*)|,I(*,=x)|)\log^3{N}\right) + O\left(\sum_a |I(=a,*)|\log{N}\right) + O(Z\log^2{N}),
\end{equation}
where $B(a)\subseteq B'(a)$ is a set of disjoint dyadic intervals and $|B'(a)|\le O(|B(a)|\log{N})$.
\elmm
\bp
Let us first consider the values $a$ for which we have $I(=a,*)\cup I(*,\eps)=[N]$. In this case there is only one iteration of the While loop and Step~\ref{step:root-full} is executed. Other than Step~\ref{step:nxtu-dyadic} all the other steps take $O(\log{N})$ time. Since $\nxtu$ is implemented as the MERGE algorithm, Step~\ref{step:nxtu-dyadic} runs in time at most (up to constants):
\[\min\left(|I(=a,*)|,|I(*,\eps)|\right)\le |I(=a,*)|,\]
since the MERGE algorithm will run till it has skipped over at least one of the two interval lists. Summing up the above run-time for all $a$ such that $I(=a,*)\cup I(*,\eps)=[N]$, gives the second term in the claimed runtime.

For the rest of the proof we consider the $a$'s such that $I(=a,*)\cup
I(*,\eps)\subset [N]$: fix such an arbitrary $a$. Now consider all the possible
$B$ values. Mark a $b$ as a {\em comparison-probe} if there exists a $c$ such
that Algorithm~\ref{algo:getpp-triang} returns the tuple $(a,b,c)$, which is
used by the outer algorithm to discover a new constraint. We will mark $b$ as an
{\em output-probe} if for {\em every} $c$ such that
Algorithm~\ref{algo:getpp-triang} returns the tuple $(a,b,c)$, it is used by the
outer algorithm to discover a new output tuple. For notational convenience we
will call  $b$ a probe value if it is marked either as a comparisons-probe or an
output-probe. Note that there are exactly $|B(a)|$ probe values. Now sort the
$B$ values and consider two probe values $b< b'$ such that there are no probe
values in $(b.b')$. For the time being assume that $(b,b')$ is dyadic. Note that
in this case we execute Step~\ref{step:ab-rule-out} for $x=(b,b')$ and no
children of $x$ in the dyadic tree is explored. Now denote certain  nodes in the
dyadic tree as $\ell_1,\dots,\ell_m$ for some $m\le O(|B(a)|\log{N})$ as
follows. Each probe value $b$ (which corresponds to a singleton interval in the
dyadic tree) gets its own $\ell_i$. For any two consecutive probe values $b<b'$
(in sorted order of $B$ values) each of  $O(\log{N})$ dyadic intervals in
$(b,b')$ gets its own $\ell_i$. Since there are $|B(a)|$ probe values, there are at most $|B(a)|+1$ intervals of consecutive non-probe values. Further, each such interval gets partitioned into $O(\log{N})$ dyadic intervals, which means that we will have $m=O(|B(a)|\log{N})$ nodes $\ell_i$ overall, as desired.

Consider the subtree of the dyadic tree whose leaves are $B'(a)\stackrel{\mathrm{def}}{=}\{\ell_1,\dots,\ell_m\}$. (We will overload notation by referring to the dyadic interval corresponding to $\ell_i$ as just $\ell_i$.) We will show that the total time spent by Algorithm~\ref{algo:getpp-triang} on pairs $(a,x)$ on Step~\ref{step:while-start} and beyond is bounded by
\begin{equation}
\label{eq:time-bound}
O\left(\sum_{x\in B'(a)} \min(|I(=a,*)|,I(*,=x)|)\log^3{N}\right) +O(Z_a\log^2{N}),
\end{equation}
where $Z_a$ is the number of output tuples with $A$ value as $a$.
Summing above the above bound over all values of $a$ proves the claimed runtime bound.

To complete the proof, we prove \eqref{eq:time-bound}. First we note that for a given pair $(a,x)$, the total time spent by Algorithm~\ref{algo:getpp-triang} on Step~\ref{step:nxtu-dyadic} where $z\neq c$ is upper bounded by (up to constants)
\begin{equation}
\label{eq:step22}
\min(|I(=a,*)|,|I(*,x)|)\log{N}.
\end{equation}
The above follows from the fact that when $z\neq c$, it means that the MERGE algorithm actually made an advance and that the total number of times we can advance is upper bounded by the size of the shorter list. (Recall that each advance needs a binary search and thus takes $O(\log{N})$ time.)

Notice that all Steps other than Step~\ref{step:nxtu-dyadic} can be implemented in $O(\log{N})$ time.\footnote{The time bound is amortized for Steps~\ref{step:ab-rule-out} and~\ref{step:root-full}. However, the number of times these steps are run is bounded by $\sum_a |B'(a)|$. So the overall time spent on these steps will be bounded by $O(\sum_a |B'(a)|\log{N})$, which is subsumed by the bound in \eqref{eq:lem-ab-sum}.} 
Now note that Algorithm~\ref{algo:getpp-triang} (for the given value of $a$) only considers pairs $(a,x)$ such that $x$ contains at least one $\ell_i$. This implies two things. First, due to \eqref{eq:step22} the total time spent on Step~\ref{step:nxtu-dyadic} where $z\neq c$ is upper bounded by
\begin{equation}
\label{eq:time-bound-move}
\sum_{x:\ell_i\subseteq x\mathrm{~for~ some}~ i} \min(|I(=a,*)|,|I(*,x)|)\log{N}.
\end{equation}
Second, we have
that Algorithm~\ref{algo:getpp-triang} exactly traces all paths from the root to one of the $\ell_i$'s.  Let us consider the different cases of $\ell_i$:
\begin{enumerate}
\item ($\ell_i$ is an interval for which we run Step~\ref{step:ab-rule-out}.) If
    we exclude the time spent from the root to $\ell_i$ that is accounted for in
    \eqref{eq:time-bound-move}, then we essentially go along a path of length
    $O(\log{N})$ doing $O(\log{N})$ amount of work at each interval in the path.
    Thus, the overall time spent on these paths (excluding time spent in \eqref{eq:time-bound-move}) is bounded by (up to constant factors):
\[\sum_{x\in B'(a)} \log^2{N}.\]
\item ($\ell_i\in B(a)$ and is marked output-probe.) For such cases note that we will always have $z=c$ on all intervals in the path. So again with an argument as in the last case we spend time at most (up to constant factors) 
\[Z_a\log^2{N}.\]
\item ($\ell_i\in B(a)$ and is marked comparison-probe.) Note that the number of $c$ values for which we get $(a,b,c)$ probe points is upper bounded by $O(\min(|I(=a,*)|,|I(*,=\ell_i)|))$. Thus, total time spent in this case outside of the time accounted for in  \eqref{eq:time-bound-move}, by an argument similar to the earlier cases is upper bounded by (up to constants)
\[\sum_{b\in B(a)} \min(|I(=a,*)|, I(*,=b)|)\log^2{N}.\]
\end{enumerate}

Adding up the bounds above with \eqref{eq:time-bound-move} implies that the time bound we are after is at most (up to constants)
\[\sum_{x:\ell_i\subseteq x\mathrm{~for~ some}~ i} \min(|I(=a,*)|,|I(*,x)|)\log^2{N} +Z_a\log^2{N}.\]
To complete the proof, we will argue that
\begin{equation}
\label{eq:sum-levels}
\sum_{x:\ell_i\subseteq x\mathrm{~for~ some}~ i} \min(|I(=a,*)|,|I(*,x)|) \le O\left(\sum_{y\in B'(a)} \min(|I(=a,*)|,|I(*,y)|)\log{N}\right),
\end{equation}
since the above will imply \eqref{eq:time-bound}. To see why the above is true, note that by \eqref{eq:upward-eq}, we have
\[I(*,y)= \cap_{i:\ell_i\subseteq y} I(*,\ell_i),\]
where the intersection is over the set of points covered by the interval lists. This in turn implies that
\[|I(*,y)|\le \sum_{i:\ell_i\subseteq y} |I(*,\ell_i)|.\]
The above in turn implies that
\[\min(|I(=a,*)|,|I(*,y)|)\le  \sum_{i:\ell_i\subseteq y} \min(|I(=a,*)|,|I(*,\ell_i)|).\]
Noting that for intervals $y\neq y'$ of the same size, the set of $\ell_i$'s contained in them are disjoint and that there are $O(\log{N})$ distinct sizes for dyadic intervals, the above implies \eqref{eq:sum-levels}, as desired.
\ep

We are finally ready to prove the runtime for Algorithm~\ref{algo:getpp-triang}:
\bthm
\label{thm:getpp-triang}
Over all calls to Algorithm~\ref{algo:getpp-triang} from the outer algorithm, the total time spent is bounded by
\[O\left(|\cert|^{3/2}\log^{7/2}{N}+Z\log^2{N}\right).\]
\ethm
\bp
The total time spent is bounded by the sum of the time bounds in Lemmas~\ref{lem:AB},~\ref{lem:A-empty} and~\ref{lem:ab-sum}. The first two terms are subsumed by the bound in this lemma. Thus, we only need to bound
\[O\left(\sum_{(a,x):~ x\in B'(a)} \min(|I(=a,*)|,I(*,=x)|)\log^3{N}\right) + O\left(\sum_a |I(=a,*)|\log{N}\right) + O(Z\log^2{N}).\]
Since $\sum_a |I(=a,*)|\le|\cert|$, the last two terms in the sum above are subsumed by the bound in this lemma. So we are left with the bound
\begin{equation}
\label{eq:cert-tt-bound}
\sum_a \sum_{x\in B'(a)} \min(|I(=a,*)|,I(*,=x)|).
\end{equation}
Next we note the following:
\[\sum_a |I(=a,*)|\le |\cert|,\]
and
\[\sum_{x\in B'(a)} |I(*,=x)| \le \sum_b |I(*,=b)| \le |\cert|.\]
In the above the first inequality follows from the fact that every $x\neq x'\in
B'(a)$ are disjoint and by the argument used in proof of Lemma~\ref{lem:ab-sum},
$|I(*,x)|\le \sum_{b\in x} |I(*,=b)|$. Then
Lemmas~\ref{lem:chris-ineq},~\ref{lem:Ba} and~\ref{lem:ab-sum} imply that \eqref{eq:cert-tt-bound} is bounded by $|\cert|^{3/2}\sqrt{\log{N}}$, as desired.
\ep

\blmm
\label{lem:chris-ineq}
For any two vectors $u,v\in\mathbb{R}^M_{\ge 0}$ and a set $J\subseteq [M]\times [M]$, we have
\[\sum_{(i,j)\in J} \min(u_i,v_i)\le \sqrt{|J|\cdot \norm{u}_1\cdot\norm{v}_1}.\]
\elmm
\begin{proof}
For notational convenience, define
\[J[i]=\{j|(i,j)\in J\}.\]
Now consider the following sequence of relationships:
\begin{align*}
\sum_{ (i,j) \in J} \min \{u_i, v_j\} &\leq \sum_{ (i,j) \in J} \sqrt{u_iv_j}\\
&= \sum_i \sqrt{u_i}\sum_{j\in J[i]} \sqrt{v_j}\\
&\le \sum_i \sqrt{u_i}\cdot\sqrt{|J[i]}|\cdot \sqrt{\norm{v}_1}\\
&= \sqrt{\norm{v}_1}\cdot \sum_i \sqrt{u_i}\cdot\sqrt{|J[i]}|\\
&\le \sqrt{\norm{v}_1}\cdot \sqrt{\norm{u}_1}\cdot \sqrt{\sum_i |J[i]|}\\
&=\sqrt{\norm{v}_1}\cdot\sqrt{\norm{u}_1}\cdot \sqrt{|J|},
\end{align*}
where the inequalities follow from Cauchy-Schwarz inequality.
\end{proof}

\subsection{Wrapping it up}

It is easy to see that Theorem~\ref{thm:getpp-triang},
Proposition~\ref{prop:upward-maintain}, Proposition~\ref{prop:insert-ctree} and
Theorem~\ref{thm:analyze-outer-algorithm} prove Theorem~\ref{thm:triang}.

\end{document}